\newif\ifRSPA
\theoremstyle{plain}
	\newtheorem{thm}{Theorem}[section]
	\newtheorem{lem}[thm]{Lemma}
	\newtheorem{prop}[thm]{Proposition}
	\newtheorem{conj}[thm]{Conjecture}
	\newtheorem{ques}[thm]{Question}
 	\newtheorem{ex}[thm]{Example}
   	\newtheorem{as}[thm]{Assumption}
\theoremstyle{definition}
\theoremstyle{remark}
\newcommand{\bbR}{\mathbb{R}}
\newcommand{\bbZ}{\mathbb{Z}}
\newcommand{\calV}{\mathcal{V}}
\newcommand{\calE}{\mathcal{E}}
\newcommand{\calT}{\mathcal{T}}
\newcommand{\tp}{\mathsf{T}}
\DeclareMathOperator{\diag}{diag}
\DeclareMathOperator{\tr}{tr}
\DeclareMathOperator{\vol}{vol}
\begin{document}

\ifRSPA
\title{A mathematical approach to mechanical properties of networks in thermoplastic elastomers}
\else
\title[A mathematical approach to mechanical properties of networks]{A mathematical approach to mechanical properties of networks in thermoplastic elastomers}
\fi

\ifRSPA
\author{Ken'ichi Yoshida$^{1}$, Naoki Sakata$^2$, and Koya Shimokawa$^{2,3}$}
\else
\author[K.Yoshida, N.Sakata and K.Shimokawa]{Ken'ichi Yoshida$^{1}$, Naoki Sakata$^2$, and Koya Shimokawa$^{2,3}$}
\fi

\ifRSPA
\address{%
  $^1$International Institute for Sustainability with Knotted Chiral Meta Matter (WPI-SKCM$^2$), Hiroshima University, Hiroshima 739-8526, Japan\\
  $^2$Center for Soft Matter Physics, Ochanomizu University, Tokyo 112-8610, Japan\\
  $^3$Department of Mathematics, Ochanomizu University, Tokyo 112-8610, Japan
}

\subject{geometry, applied mathematics, materials science}

\keywords{nets, periodic graphs, thermoplastic elastomer, polymeric materials}
\else
\address{$^1$ International Institute for Sustainability with Knotted Chiral Meta Matter (WPI-SKCM$^2$), Hiroshima University, Hiroshima 739-8526, Japan}
\address{$^2$ Center for Soft Matter Physics, Ochanomizu University, Tokyo 112-8610, Japan}
\address{$^3$ Department of Mathematics, Ochanomizu University, Tokyo 112-8610, Japan}
\fi

\begin{abstract}
We employ a mathematical model to analyze stress chains in thermoplastic elastomers (TPEs) with a microphase-separated spherical structure composed of triblock copolymers. The model represents stress chains during uniaxial and biaxial extensions using networks of spherical domains connected by bridges. We advance previous research and discuss permanent strain and other aspects of the network. It explores the dependency of permanent strain on the extension direction, using the average of tension tensors to represent isotropic material behavior. 
The concept of deviation angle is introduced to measure network anisotropy and is shown to play an essential role in predicting permanent strain when a network is extended in a specific direction.
The paper also discusses methods to create a new network structure using various polymers.
\end{abstract}

\ifRSPA

\else
\maketitle
\fi

\ifRSPA
\begin{fmtext}
\fi

\section{Introduction}
\label{section:intro}

A mathematical model for network elastoplasticity is introduced in \cite{KY}.
The motivation for developing this model is to analyze the structure of the stress chains of thermoplastic elastomers (TPEs).
See \cite{Nakajima, Polymer} for recent research on the network structure of TPEs.
The microphase-separated structure of TPE, which is composed of triblock copolymers, has a spherical structure.

\ifRSPA
\end{fmtext}

\maketitle
\fi

The stress chain under uniaxial extension was modeled using a network whose nodes are spherical domains and edges are bridges connecting two domains.
Kodama and the first author~\cite{KY} described the behavior of the dynamic network resulting from the joining and splitting of spherical domains that occurs when a TPE is extended.
Furthermore, they also mathematically defined stresses, Young's moduli, and permanent strains, which quantify the mechanical properties.
As this model is consistent with the molecular dynamics (MD) simulation in \cite{Aoyagi,Polymer}, we will also use it for our study.
Kodama, Morita, and Kotani~\cite{KMK} recently computed and compared stress--strain curves for uniaxial extensions by the MD simulation and the mathematical model in \cite{KY}. 
The results are qualitatively consistent, and the computation by the mathematical model is significantly faster than the MD simulation. 
In addition, Hosoya, Ito, Nakajima, and Morita~\cite{HINM} conducted the MD simulation for TPEs obtained by blend of copolymers. 
This is comparable to the model by simple superposition of networks in \cite{KY}.

In this paper, we advance the previous research~\cite{KY} and discuss permanent strains, anisotropy, and other aspects of nets.
We first consider the permanent strains for networks that include only one vertex per period (Section~\ref{sec:single}).
The permanent strain depends on the direction of an extension. 
We show that the permanent strain can be negative if we extend a particular net in a certain direction.
Nevertheless, we see that the average of the permanent strain of a network for all directions is generally positive (Theorem~\ref{thm:average}).
The `mean permanent strain' for a network reflects the nature of isotropic material. 
From the discussion in Section~\ref{sec:single}, we define a `deviation angle' for general networks. The angles are expected to help measure the anisotropy of networks.
In particular, for networks that contain one vertex per period, the more the extension is in the direction to give a larger maximal deviation angle, the smaller its permanent strain (Proposition~\ref{prop:deviation}).
We next discuss how to 
construct the desired networks by blending several types of triblock copolymers (Section~\ref{sec:blend}).
In particular, we determine the sublattices in equilibrium (called `harmonic' networks) of the triangular and body-centered cubic lattices (Theorems~\ref{thm:subnet_2dim} and~\ref{thm:subnet_3dim}).
We then simulate the extensions of several $2$-dimensional nets and determine the permanent strains and anisotropies (Section~\ref{sec:simulation}).
In particular, we define a `local deviation angle' and discuss a relationship to the permanent strain in a particular extension direction.
Finally, we give a mathematical model for the stress chain of a biaxial extension and discuss the topological changes of the networks (Section~\ref{sec:biaxial}).

\section{Preliminaries}
\label{sec:prelim}

This section provides an introduction to the various concepts needed for this paper.
See also \cite{KY}.

\subsection{Network model for stress chain of TPE}

In this paper, we consider TPE consists of triblock copolymers.
See \cite{Fredrickson}.
In this section, we consider type ABA triblock copolymers.
TPE has a sphere structure, the shape of a hard domain of A monomers is a sphere and hard domains are connected by bridge chains of B monomers.
The {\em stress chain} of TPE is a union of hard domains and the bridges connecting them that are subject to high forces.
The network model for the stress chain of TPE is introduced in \cite{KY}.
A vertex of the network corresponds to a hard domain and an edge corresponds to a set of bridge chains between two hard domains.
See Figure \ref{fig:model}.

\begin{figure}[htb]
\begin{center}
\includegraphics[width=30em]{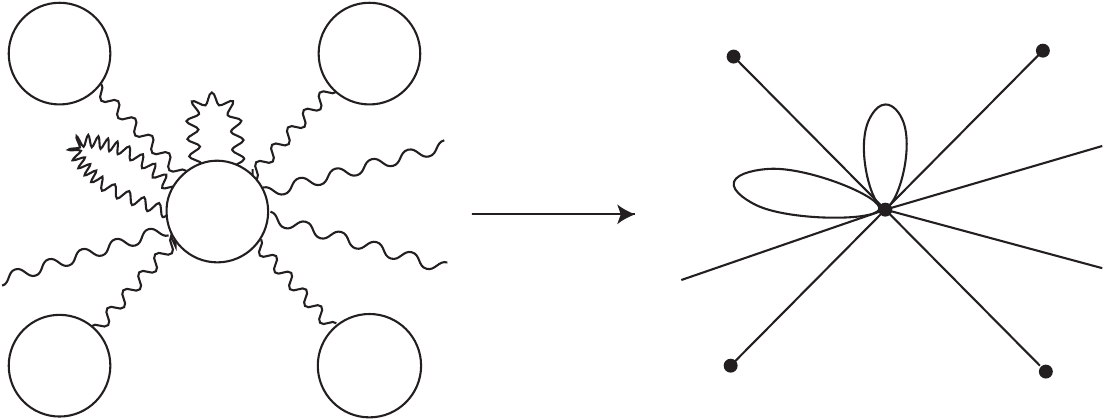}
\caption{The network model of the stress chain of TPE. Nodes of the network are hard domains and edges are bridges between two hard domains.}
\label{fig:model}
\end{center}
\end{figure}

One of the key ideas of the model is that a loop chain will become a bridging chain when the domain is split as shown in Figure \ref{fig:split}.
The splitting of a domain is observed in the simulation in \cite{Polymer}.
This type of network deformation cannot be observed in rubber and is a property unique to TPE.
We discuss how the properties of TPE are changed by this deformation.

\begin{figure}[htb]
\begin{center}
\includegraphics[width=30em]{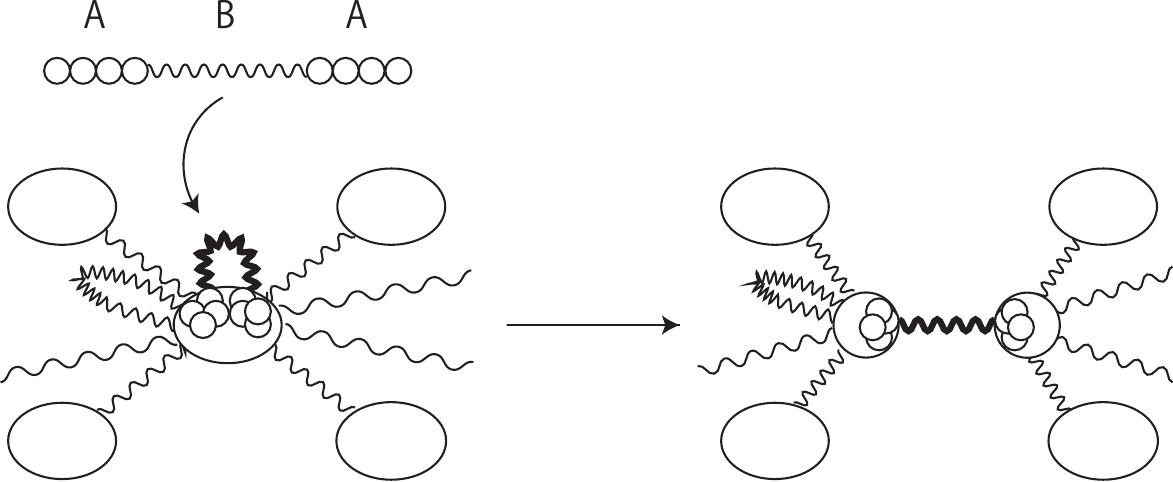}
\caption{Splitting of a hard domain. Part of the loops is converted to bridges and the rest remains loops.}
\label{fig:split}
\end{center}
\end{figure}

\subsection{Net}

The networks we are interested in are analyzed using the concept of nets.
Based on \cite{KS01, Sunada12, KY}, 
we prepare some mathematical notions for nets. 
Except for the notion of a subnet, 
the definitions are the same as in \cite{KY}. 
Let $X = (V,E,w)$ be an (abstract) weighted graph,
which is defined by the vertex set $V$ and the edge set $E$ 
with maps $o, t \colon E \to V$, and $\iota \colon E \to E$ 
such that $\iota^{2} = id$, $\iota (e) \neq e$, and 
$o(\iota (e)) = t(e)$ for any $e \in E$. 
The maps $o$ and $t$ associate the origin and the terminal of an edge, respectively. 
The map $\iota$ reverses the orientation of an edge. 
We allow a loop (edge $e$ such that $o(e) = t(e)$) 
and a multi-edge (edges with common terminal points). 
The weight function $w \colon E \to \bbR_{\geq 0}$ 
satisfies $w (\iota (e)) = w(e)$ for $e \in E$. 
We regard the weight of an edge as the number of edges. 
Hence we may replace an edge $e_{0}$ with the union of edges $e_{1}$ and $e_{2}$ 
if $o(e_{0}) = o(e_{1}) = o(e_{2}), t(e_{0}) = t(e_{1}) = t(e_{2})$, and $w(e_{0}) = w(e_{1}) + w(e_{2})$. 
The weight function is often omitted in the notation. 
The degree of a vertex $v \in V$ is defined by $\deg (v) = \sum_{o(e)=v} w(e)$. 
Note that the weight of a loop contributes twice to the degree of its endpoint. 

We can naturally identify $X$ with a one-dimensional complex. 
Note that two elements $e$ and $\iota (e)$ in $E$ correspond to 
a single one-cell in the complex. 
We may reduce the complex by removing the zero-weight edges. 
If this reduced complex is connected, we say that the graph is connected. 

We consider a connected graph $X$.
For $N \geq 1$, suppose that $L = \bbZ^{N}$ acts on $X$ 
as (weight-preserving) automorphisms of the graph, 
the quotient map $\omega \colon X \to X/L = (V/L, E/L)$ is a covering, 
and $X/L$ is a graph. 
Then we say that $X$ is a \emph{periodic graph}. 
A map $\Phi \colon V \to \bbR^{N}$ 
is called a \emph{periodic realization} of $X$ in $\bbR^{N}$ 
if there exists an injective homomorphism 
$\rho \colon L \hookrightarrow \mathbb{R}^{N}$ as $\bbZ$-modules 
satisfying that 
\begin{enumerate}[label=(\roman*)]
	\item $\Phi$ is $L$-equivariant, 
	i.e. $\Phi (\gamma v) = \Phi (v) + \rho (\gamma)$ for any $v \in V$ and $\gamma \in L$, and 
	\item $\rho (L)$ is discrete in $\bbR^{N}$. 
\end{enumerate}
We call $\rho$ the \emph{period homomorphism} for $\Phi$. 
The pair $(X, \Phi)$ is called a \emph{net} in $\bbR^{N}$ 
if $\Phi$ is a periodic realization of a periodic graph $X$ in $\bbR^{N}$. 
A periodic realization $\Phi$ maps an edge $e \in E$ to a vector 
\[
\Phi (e) = \Phi (t(e)) - \Phi (o(e)) \in \bbR^{N}. 
\] 
The equivariancy induces the map $\Phi \colon E/L \to \bbR^{N}$ 
by $\Phi (\omega (e)) = \Phi (e)$. 
If $e$ is a loop, then $\Phi (e) = 0$. 

A net $(X^{\prime}, \Phi^{\prime}, \rho^{\prime})$ is called 
a \emph{subnet} of a net $(X, \Phi, \rho)$ if 
\begin{enumerate}[label=(\roman*)]
    \item $X^{\prime} = (V^{\prime}, E^{\prime}, w^{\prime})$ 
    is a subgraph of $X = (V,E,w)$, 
    i.e. $V^{\prime} \subset V$, $E^{\prime} \subset E$, 
    and $w^{\prime}(e) \leq w(e)$ for any $e \in E^{\prime}$, 
    \item $\Phi^{\prime}$ is the restriction of $\Phi$ to $V^{\prime}$, and 
    \item $\rho^{\prime}(L)$ is a subgroup of $\rho (L)$. 
\end{enumerate}

The \emph{energy} (per period) of a net $(X, \Phi)$ is defined by 
\[
\calE(X, \Phi) = 
\frac{1}{2}\sum_{e \in E/L} w(e) \| \Phi (e) \|^{2}. 
\]
A periodic realization $\Phi$ of $X$ is \emph{harmonic} 
if the energy $\mathcal{E}(X, \Phi)$ is minimal among the periodic realizations of $X$ 
with the fixed period homomorphism $\rho$. 
A periodic realization $\Phi$ of $X$ is \emph{standard} 
if the energy $\calE(X, \Phi)$ is minimal among the periodic realizations of $X$ 
with the fixed covolume $\vol(\bbR^{N} / \rho (L))$.

For a net $(X,\Phi)$, the local and global tension tensors were introduced in \cite{KY}. 
For a vertex $v$ of $X$ or $X/L$, 
the \emph{local tension tensor} is defined by 
\[
\calT(v) = \sum_{o(e)=v} w(e) \Phi (e)^{\otimes 2}, 
\]
where 
\[
\begin{pmatrix}
x_{1} \\ \vdots \\ x_{N}
\end{pmatrix}^{\otimes 2}
=
\begin{pmatrix}
x_{1} \\ \vdots \\ x_{N}
\end{pmatrix}
\otimes 
\begin{pmatrix}
x_{1} \\ \vdots \\ x_{N}
\end{pmatrix}
= 
\begin{pmatrix}
x_{1} \\ \vdots \\ x_{N}
\end{pmatrix}
(x_{1}, \dots , x_{N}) = 
\begin{pmatrix}
x_{1}^{2} & \cdots & x_{1}x_{N} \\
\vdots & \ddots & \vdots \\
x_{N}x_{1} & \cdots & x_{N}^{2}
\end{pmatrix}.
\] 
The \emph{global tension tensor} (per period) is defined by 
\[
\calT(X, \Phi) = \frac{1}{2} \sum_{v \in V/L} \calT(v). 
\]
It holds that 
$\tr(\calT(X, \Phi)) = \calE(X, \Phi)$.

A standard realization is characterized as follows 
\cite[Theorem 3.3]{KY} 
(cf. \cite[Theorem 7.5]{Sunada12}): 
A periodic realization $\Phi$ is standard if and only if it is harmonic 
and the global tension tensor $\calT(X, \Phi)$ is a constant multiple of the identity matrix. 
Consequently, a standard realization is unique up to similar transformations.

\subsection{Stress}

For a net $(X,\Phi)$, 
let $\calT = \calT(X,\Phi)$ 
and $\calV = \vol(\bbR^{N}/\rho(L))$ 
denote the (global) tension tensor and the volume per period. 
In \cite[Section 4]{KY}, 
the tension tensor $\calT$ was interpreted as a stress tensor by a physical argument. 
We regard the energy $\calE = \calE(X,\Phi)$ as physical energy. 
We can imagine that an edge of a net is a spring with zero rest length. 
More precisely, an edge of a net corresponds to a polymer chain. 
Its random motion induces an entropic force, which is determined by the distance between the endpoints. 
The energy $\calE$ corresponds to the Helmholtz free energy for entropic elasticity. 
Moreover, the energy per volume $\calE / \calV$ induces the strain energy density function for a neo-Hookean material, 
which is one of the simplest hyperelastic materials. 
For more details of hyperelastic model, 
we refer the reader to \cite{Ogden, Treloar}. 
Let us consider a macro-scale object consisting of the net $(X,\Phi)$ at the micro-scale. 
Then the Cauchy stress tensor of this object is given by 
$(2 /\calV) \calT$. 
In this setting, a harmonic net is an equilibrium state. 
We assume that the volume $\calV$ is preserved. 
It is well known that the volume of an ordinary TPE changes little during deformation, similar to an ordinary rubber. 
The MD simulation in \cite{Aoyagi,Polymer} also assumes that the volume is preserved. 
Under this assumption, this object satisfies the energy density function for an incompressible neo-Hookean material. 
Then a standard net is an equilibrium state without external force.

The stress under a uniaxial extension is given as follows. 
Let $\calT = \calT(X,\Phi) = (\tau_{ij})_{1 \leq i,j \leq N}$. 
Note that $\Phi$ is not necessarily standard. 
For $\lambda >0$, the diagonal matrix 
\[
A(\lambda) = \diag \left( \lambda, \lambda^{-1/(N-1)}, \dots, \lambda^{-1/(N-1)} \right) 
\in SL(N, \mathbb{R}) 
\]
induces a uniaxial extension with strain $\epsilon = \lambda -1$. 
The matrix $A(\lambda)$ acts the net $(X,\Phi)$ 
by $A(\lambda) (X,\Phi) = (X, A(\lambda) \circ \Phi)$. 
Then the \emph{engineering stress} with strain $\epsilon = \lambda -1$ is given by 
\[
\sigma_{\mathrm{eng}} = 
\dfrac{2}{\calV} \left( \tau_{11}\lambda - \dfrac{1}{N-1} (\tau_{22} + \dots + \tau_{NN})\lambda^{-1-2/(N-1)} \right). 
\] 
The \emph{permanent strain} is the number $\epsilon_{0}$ satisfying 
$\sigma_{\mathrm{eng}}(1+\epsilon_{0}) = 0$. 
Hence 
\[
\epsilon_{0} = \left( \frac{\tau_{22} + \dots + \tau_{NN}}{(N-1)\tau_{11}} \right)^{(N-1)/2N} -1
\]
as shown in \cite[Proposition 4.2]{KY}. 
We will consider a situation that a net is deformed to another net under an extension. 
In such a case, the resulting net usually has a positive permanent strain. 
Thus the permanent strain represents a kind of plasticity. 
In \cite[Proposition 8.2]{KY}, however, 
it was noted that the permanent strain may be negative. 
We will discuss this in the next section.

\section{Permanent strain after a single splitting}
\label{sec:single}
In this section, 
we consider a net with a single vertex per period. 
We discuss the permanent strain after a single splitting of the vertex. 

Let $X_{0}$ be a periodic graph such that only a single vertex exists in each period. 
We identify each $\bm{i} = (i_{1}, \dots , i_{N}) \in \mathbb{Z}^{N}$ with a vertex of $X_{0}$. 
Let $e_{\bm{i}}$ denote the edge of $X_{0}$ joining $0$ and $\bm{i} \in \bbZ^{N}$. 
We write $w_{\bm{i}}$ for the weight of $e_{\bm{i}}$. 
It is necessary that $w_{\bm{i}} = w_{-\bm{i}}$. 
Let $\rho \colon \bbZ^{N} \to \bbR^{N}$ be a period homomorphism, 
and let $(\bm{u}_{1}, \dots , \bm{u}_{N})$ be a basis of $\rho (\bbZ^{N})$. 
The period homomorphism $\rho$ induces a periodic realization $\Phi_{0}$ of $X_{0}$ 
such that the image of vertices is $\rho (\bbZ^{N})$. 
For $\bm{i} \in \bbZ^{N}$, 
the edge $e_{\bm{i}}$ corresponds to 
$\bm{v}_{\bm{i}} = i_{1}\bm{u}_{1} + \dots + i_{N}\bm{u}_{N} \in \bbR^{N}$. 
Then $\Phi_{0}$ is a harmonic realization. 

Let $I \subset \bbZ^{N}$ such that $\bbZ^{N} = I \sqcup -I \sqcup \{ 0 \}$. 
Let $X_{1}$ be a periodic graph obtained from $X_{0}$ 
by splitting the vertex $0$ into $v_{0}$ and $v_{1}$ 
so that $v_{0}$ and $v_{1}$ are endpoints of $e^{\prime}_{\bm{i}}$ 
for $\bm{i} \in -I$ and $\bm{i} \in I$, respectively,
where $e^{\prime}_{\bm{i}}$ is an edge of $X_{1}$ obtained from $e_{\bm{i}}$. 
We write $w_{0}^{\prime}$ for the weight of the new non-loop edge. 
For $\bm{u} \neq 0 \in \bbR^{N}$, 
let $\bbR^{N}_{\bm{u}} = \{ \bm{v} \in \bbR^{N} \mid \bm{u} \cdot \bm{v} > 0 \}$ 
and $I_{\bm{u}} = \bbR^{N}_{\bm{u}} \cap \bbZ^{N}$. 
If $I_{\bm{u}} \subset I$, 
the vertices split in the direction of $\bm{u}$. 
Let $\Phi_{1}$ be a harmonic realization of $X_{1}$ with the period homomorphism $\rho$. 
We may assume that $\Phi_{1}(v_{0}) = 0$. Let $\bm{x} = \Phi_{1}(v_{1})$. 

Theorem 7.2 and Proposition 8.3 in \cite{KY} implies that 
\begin{align*}
\bm{x} 
& = \frac{\sum_{\bm{i} \in I}w_{\bm{i}}\bm{v}_{\bm{i}}}{w_{0}^{\prime} + \sum_{\bm{i} \in I}w_{\bm{i}}}, \\
\calT (X_{0},\Phi_{0}) - \calT (X_{1},\Phi_{1}) 
& = \frac{\left(\sum_{\bm{i} \in I}w_{\bm{i}}\bm{v}_{\bm{i}}\right)^{\otimes 2}}{w_{0}^{\prime} + \sum_{\bm{i} \in I}w_{\bm{i}}}
= \left( w_{0}^{\prime} + \sum_{\bm{i} \in I}w_{\bm{i}} \right) \bm{x}^{\otimes 2}. 
\end{align*}
Moreover, $\bm{z} = \sum_{\bm{i} \in I}w_{\bm{i}}\bm{v}_{\bm{i}}$ 
does not depend on $w_{0}^{\prime}$. 

Suppose that $(X_{0},\Phi_{0})$ is a standard net. 
Let $\calT (X_{1},\Phi_{1}) = (\tau_{ij})_{1 \leq i,j \leq N}$. 
Consider a uniaxial extension in the direction of the first coordinate. 
In this case, the vertices split in the direction of $\bm{u} =(1,0, \dots, 0)^{\tp}$. 
Then the permanent strain is 
 \[
\epsilon_{0} = \left( \frac{\tau_{22} + \dots + \tau_{NN}}{(N-1)\tau_{11}} \right)^{(N-1)/2N} -1. 
\]
For general $\bm{u}$, the permanent strain is obtained by rotation.

The permanent strain may be negative. 
This results from the difference between the directions of $\bm{u}$ and $\bm{z}$. 

\begin{prop}
\label{prop:deviation}
The permanent strain $\epsilon_{0}$ is negative 
if and only if the angle between $\bm{u}$ and $\bm{z}$ is more than $\arccos (1/\sqrt{N})$. 
\end{prop}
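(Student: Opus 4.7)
The plan is to convert the coordinate-dependent permanent strain formula into a rotation-invariant condition on $\calT(X_1, \Phi_1)$, substitute the explicit expression $\calT(X_1, \Phi_1) = cI - \bm{z}^{\otimes 2}/s$ (where $s = w_0^{\prime} + \sum_{\bm{i} \in I} w_{\bm{i}}$ and $c$ is the constant trace density of the standard net $(X_0, \Phi_0)$), and then read off the angular condition on $\bm{u}$ and $\bm{z}$.

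\textbf{Step 1: rotation-invariant criterion for $\epsilon_0 < 0$.} Since the permanent strain formula uses coordinates in which the first axis is aligned with the extension direction $\bm{u}$, I first note that $\tau_{11} = \hat{\bm{u}}^{\tp} \calT(X_1,\Phi_1) \hat{\bm{u}}$ with $\hat{\bm{u}} = \bm{u}/\|\bm{u}\|$, while $\tau_{11} + \tau_{22} + \dots + \tau_{NN} = \tr\calT(X_1,\Phi_1)$ is basis-independent. The formula for $\epsilon_0$ shows that $\epsilon_0 < 0$ is equivalent to $\tau_{22} + \dots + \tau_{NN} < (N-1)\tau_{11}$, and rearranging gives the equivalent condition
\[
\hat{\bm{u}}^{\tp} \calT(X_1,\Phi_1) \hat{\bm{u}} \;>\; \frac{1}{N}\tr\calT(X_1,\Phi_1).
\]

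\textbf{Step 2: substitute the explicit tension tensor.} Because $(X_0,\Phi_0)$ is standard, $\calT(X_0,\Phi_0) = cI_N$ for some $c > 0$, and the formula quoted from \cite{KY} just above the proposition gives $\calT(X_0,\Phi_0) - \calT(X_1,\Phi_1) = \bm{z}^{\otimes 2}/s$, hence $\calT(X_1,\Phi_1) = cI_N - \bm{z}^{\otimes 2}/s$. Computing both sides of the inequality from Step~1 yields
\[
c - \frac{(\hat{\bm{u}} \cdot \bm{z})^{2}}{s} \;>\; c - \frac{\|\bm{z}\|^{2}}{Ns},
\]
which simplifies to $(\hat{\bm{u}} \cdot \bm{z})^{2} < \|\bm{z}\|^{2}/N$, i.e. $\cos^{2}\theta < 1/N$, where $\theta$ is the angle between $\bm{u}$ and $\bm{z}$.

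\textbf{Step 3: remove the absolute value via the splitting direction hypothesis.} To conclude $\cos\theta < 1/\sqrt{N}$ rather than $|\cos\theta| < 1/\sqrt{N}$, I need $\bm{u} \cdot \bm{z} \geq 0$. This follows from the assumption $I_{\bm{u}} \subset I$: every $\bm{i} \in I$ satisfies $\bm{u} \cdot \bm{v}_{\bm{i}} \geq 0$ (if $\bm{u}\cdot\bm{v}_{\bm{i}} < 0$ then $-\bm{i} \in I_{\bm{u}} \subset I$, contradicting $\bbZ^{N} = I \sqcup -I \sqcup \{0\}$), so $\bm{u} \cdot \bm{z} = \sum_{\bm{i} \in I} w_{\bm{i}}(\bm{u} \cdot \bm{v}_{\bm{i}}) \geq 0$. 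Combined with Step~2, the condition becomes precisely $\theta > \arccos(1/\sqrt{N})$.

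The only subtle point is Step~3, which makes essential use of the splitting direction hypothesis to orient $\bm{z}$ relative to $\bm{u}$; the rest is a routine linear-algebraic rewriting of the trace/quadratic-form inequality. A degenerate case worth a brief remark is $\bm{z} = 0$ (so the splitting does nothing), in which $\calT(X_1,\Phi_1) = \calT(X_0,\Phi_0)$ and $\epsilon_0 = 0$, consistent with the strict inequality in the statement.
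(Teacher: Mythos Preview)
Your proof is correct and follows essentially the same route as the paper's: reduce $\epsilon_0 < 0$ to $(N-1)\tau_{11} > \tau_{22}+\dots+\tau_{NN}$, substitute $\calT(X_1,\Phi_1) = cI_N - \bm{z}^{\otimes 2}/s$ coming from standardness of $(X_0,\Phi_0)$, and read off the angular condition. The paper simply fixes coordinates with $\bm{u}=(1,0,\dots,0)^{\tp}$ and carries out the same chain of equivalences on the entries $z_i$, whereas you phrase Step~1 invariantly via $\hat{\bm{u}}^{\tp}\calT\hat{\bm{u}}$ and $\tr\calT$; this is a cosmetic difference only.

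One point worth noting: your Step~3 makes explicit the nonnegativity $\bm{u}\cdot\bm{z}\geq 0$ needed to pass from $\cos^2\theta<1/N$ to $\cos\theta<1/\sqrt{N}$, which the paper leaves implicit. Your justification invokes $I_{\bm{u}}\subset I$; strictly speaking the paper defines $I_{\bm{u}}$ via $\bm{u}\cdot\bm{i}>0$ rather than $\bm{u}\cdot\bm{v}_{\bm{i}}>0$, so your deduction ``$\bm{u}\cdot\bm{v}_{\bm{i}}<0 \Rightarrow -\bm{i}\in I_{\bm{u}}$'' does not follow literally from that definition. However, the intended geometric meaning of ``splitting in the direction of $\bm{u}$'' (as used consistently in Section~\ref{sec:deviation}) is precisely that each $\bm{v}_{\bm{i}}$ with $\bm{i}\in I$ satisfies $\bm{u}\cdot\bm{v}_{\bm{i}}\geq 0$, which is what you need; so this is a notational wrinkle in the paper rather than a gap in your argument.
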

\begin{proof}
Suppose that $\bm{u} = (1,0, \dots, 0)^{\tp}$. 
Let $\bm{z} = (z_{1}, \dots, z_{N})^{\tp}$. 
Since $(X_{0},\Phi_{0})$ is a standard net, 
\begin{align*}
\epsilon_{0} < 0 & \Longleftrightarrow (N-1)\tau_{11} > \tau_{22} + \dots + \tau_{NN} \\
& \Longleftrightarrow (N-1) z_{1}^{2} < z_{2}^{2} + \dots + z_{N}^{2} \\
& \Longleftrightarrow \frac{z_{1}}{\| \bm{z} \|} < \frac{1}{\sqrt{N}} \\
& \Longleftrightarrow \bm{u} \cdot \frac{\bm{z}}{\| \bm{z} \|} < \frac{1}{\sqrt{N}}. 
\end{align*}
\end{proof}

\begin{ex} 
Suppose that $w_{\bm{i}} = 0$ 
for $\bm{i} \in \bbZ^{2} \setminus \{(0,0), (\pm 1, 0), (0, \pm 1)\}$, 
$\bm{u}_{1} = (l_{1}, 0)$, and $\bm{u}_{2} = (0, l_{2})$. 
Let $w_{1} = w_{(1,0)}, w_{2} = w_{(0,1)}$. 
Consider the permanent strain $\epsilon_{0}$ 
after the splitting in the direction $\bm{u} = (x,y)^{\tp}$. 
Suppose that $w_{1} < w_{2}$ and $\theta = \arctan (y/x) \in (0, \pi/2)$. 
Then $\epsilon_{0} < 0$ if and only if 
\[
\theta < \dfrac{1}{2} \arctan \dfrac{w_{2} - w_{1}}{2\sqrt{w_{1}w_{2}}}. 
\]
\end{ex}
\begin{proof}
Since the net $(X_{0}, \Phi_{0})$ is standard and 
$\calT(X_{0}, \Phi_{0}) = 
\begin{pmatrix}
w_{1}l_{1}^{2} & 0 \\
0 & w_{2}l_{2}^{2}
\end{pmatrix}$, 
we have $w_{1}l_{1}^{2} = w_{2}l_{2}^{2}$. 
Hence we can write $l_{1} = w_{1}^{-1/2}l$ and $l_{2} = w_{2}^{-1/2}l$. 
Then $\bm{z} = (w_{1}l_{1}, w_{2}l_{2}) = (w_{1}^{1/2}l, w_{2}^{1/2}l)$. 

We may assume that $x = \cos \theta$ and $y = \sin \theta$. 
By Proposition~\ref{prop:deviation}, 
it holds that $\epsilon_{0} < 0$ 
if and only if the angle between $\bm{u}$ and $\bm{z}$ is more than $\pi/4$. 
Hence we consider the condition 
$\dfrac{\bm{u} \cdot \bm{z}}{\|\bm{u}\| \|\bm{z}\|} < \dfrac{1}{\sqrt{2}}$. 
In this case, we have $\theta < \pi/4$ since $w_{1} < w_{2}$. 
For $0 < \theta < \pi/4$, 
we have 
\begin{align*}
\dfrac{\bm{u} \cdot \bm{z}}{\|\bm{u}\| \|\bm{z}\|} < \dfrac{1}{\sqrt{2}} 
& \Longleftrightarrow 
\sqrt{w_{1}} \cos \theta + \sqrt{w_{2}} \sin \theta < \dfrac{1}{\sqrt{2}} \sqrt{w_{1} + w_{2}} \\
& \Longleftrightarrow 
w_{1} + 2\sqrt{w_{1}w_{2}} \tan \theta + w_{2} \tan^{2} \theta 
< \dfrac{w_{1} + w_{2}}{2} (1 + \tan^{2} \theta) \\
& \Longleftrightarrow 
\tan 2\theta = \dfrac{2\tan \theta}{1- \tan^{2} \theta} < \dfrac{w_{2} - w_{1}}{2\sqrt{w_{1}w_{2}}} \\
& \Longleftrightarrow 
\theta < \dfrac{1}{2} \arctan \dfrac{w_{2} - w_{1}}{2\sqrt{w_{1}w_{2}}}. 
\end{align*}
The last inequality implies that $\theta < \pi/4$. 
\end{proof}

Fix a net $(X_{0}, \Phi_{0})$. 
The new net $(X_{1}, \Phi_{1})$ depends on the direction $\bm{u} \in S^{N-1}$ of extension. 
To consider how often the permanent strain is negative, 
let us take the average of the tension tensors. 
More precisely, 
for $Q \in SO(N)$, 
suppose that the net $(X_{0}, Q^{-1} \circ \Phi_{0})$ induces the new net $(X_{1}(Q), \Phi_{1}(Q))$ 
by the splitting in the direction $\bm{u} = (1,0, \dots, 0)^{\tp}$. 
Then the average of the tension tensors is given by 
\[
\overline{\calT(X_{1}, \Phi_{1})} = (\overline{\tau_{ij}})_{1 \leq i,j \leq N} 
= \int_{Q \in SO(N)} \calT(X_{1}(Q), \Phi_{1}(Q)) d\mu, 
\]
where $\mu$ is the unit Haar measure on $SO(N)$. 
Then the permanent strain for this average is 
\[
\overline{\epsilon_{0}} = \left( \frac{\overline{\tau_{22}} + \dots + \overline{\tau_{NN}}}{(N-1)\overline{\tau_{11}}} \right)^{(N-1)/2N} -1. 
\]
We call this {\em the mean permanent strain}.
The following theorem implies that 
the permanent strain is not so often negative.

\begin{thm}
\label{thm:average}
Let $N = 2$ or $3$. 
Let $(X_{0}, \Phi_{0})$ be a net such that only a single vertex exists in each period. 
Suppose that $w_{\bm{i}} > 0$ for only finitely many $\bm{i} \in \bbZ^{N}$. 
Then the mean permanent strain is positive, i.e. $\overline{\epsilon_{0}} > 0$. 
\end{thm}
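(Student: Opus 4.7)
The plan is to show $N\overline{\tau_{11}}<\tr\overline{\calT(X_1,\Phi_1)}$, which is equivalent to $\overline{\epsilon_0}>0$ since $\overline{\tau_{22}}+\cdots+\overline{\tau_{NN}}=\tr\overline{\calT}-\overline{\tau_{11}}$ and $\overline{\tau_{11}}>0$. Combining the formulas from \cite{KY} cited above with the fact that $\Phi_0$ is standard (so $\calT(X_0,\Phi_0)=cI$ for some $c>0$, invariant under rotation), one obtains for each $Q\in SO(N)$
\begin{equation*}
\calT(X_1(Q),\Phi_1(Q))=cI-\frac{(Q^{-1}\bm{z}'(\bm{n}))(Q^{-1}\bm{z}'(\bm{n}))^{\tp}}{c(\bm{n})},\qquad \bm{n}:=Q\bm{u},
\end{equation*}
where $\bm{z}'(\bm{n})=\sum_{\bm{v}_{\bm{i}}\cdot\bm{n}>0}w_{\bm{i}}\bm{v}_{\bm{i}}$ and $c(\bm{n})=w_0'+\sum_{\bm{v}_{\bm{i}}\cdot\bm{n}>0}w_{\bm{i}}$. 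The first key observation is that the symmetry $w_{\bm{i}}=w_{-\bm{i}}$ makes the denominator almost everywhere constant, equal to $c_0:=w_0'+\tfrac{1}{2}\sum_{\bm{i}\ne 0}w_{\bm{i}}$, since for generic $\bm{n}$ exactly one vector from each pair $\{\bm{v}_{\bm{i}},\bm{v}_{-\bm{i}}\}$ lies in the positive half-space.

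Pushing Haar measure on $SO(N)$ forward to the uniform probability measure $\sigma$ on $S^{N-1}$ via $Q\mapsto Q\bm{u}$, and using that the columns of $Q$ form an orthonormal basis (so $\sum_i(Q\bm{e}_i\cdot\bm{z}'(\bm{n}))^2=\|\bm{z}'(\bm{n})\|^2$ while $Q\bm{e}_1\cdot\bm{z}'(\bm{n})=\bm{n}\cdot\bm{z}'(\bm{n})$), I reduce to
\begin{equation*}
\tr\overline{\calT}-N\overline{\tau_{11}}=\frac{1}{c_0}\int_{S^{N-1}}\bigl[N(\bm{z}'(\bm{n})\cdot\bm{n})^2-\|\bm{z}'(\bm{n})\|^2\bigr]\,d\sigma(\bm{n}).
\end{equation*}
The symmetry $w_{\bm{i}}=w_{-\bm{i}}$ now yields the balanced expression $\bm{z}'(\bm{n})=\tfrac{1}{2}\sum_{\bm{i}\ne 0}w_{\bm{i}}\,\operatorname{sgn}(\bm{v}_{\bm{i}}\cdot\bm{n})\bm{v}_{\bm{i}}$, so expanding the integrand reduces everything to the pair integrals
\begin{equation*}
J(\bm{a},\bm{b}):=\int_{S^{N-1}}|\bm{a}\cdot\bm{n}||\bm{b}\cdot\bm{n}|\,d\sigma,\qquad K(\bm{a},\bm{b}):=\int_{S^{N-1}}\operatorname{sgn}(\bm{a}\cdot\bm{n})\operatorname{sgn}(\bm{b}\cdot\bm{n})\,d\sigma.
\end{equation*}
The theorem then follows from the key lemma: for nonzero $\bm{a},\bm{b}\in\bbR^N$ with $\psi=\angle(\bm{a},\bm{b})\in[0,\pi]$,
\begin{equation*}
NJ(\bm{a},\bm{b})-(\bm{a}\cdot\bm{b})K(\bm{a},\bm{b})=\tfrac{2}{\pi}\|\bm{a}\|\|\bm{b}\|\sin\psi\;\ge\;0,
\end{equation*}
with equality iff $\bm{a}\parallel\bm{b}$.

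The identity $K(\bm{a},\bm{b})=1-2\psi/\pi$ is classical (probability that a random hyperplane does not separate $\bm{a}$ and $\bm{b}$). To evaluate $J$ for $N=2,3$, place $\bm{a}\parallel\bm{e}_1$ and $\bm{b}$ in the $\bm{e}_1\bm{e}_2$-plane: for $N=2$ this gives $J=I(\psi)/(2\pi)$ with
\begin{equation*}
I(\psi):=\int_0^{2\pi}|\cos\omega||\cos(\omega-\psi)|\,d\omega=2\sin\psi+(\pi-2\psi)\cos\psi;
\end{equation*}
for $N=3$, spherical coordinates split the integral into the radial factor $\int_{-1}^1(1-u^2)\,du=4/3$ and the same $I(\psi)$, giving $J=I(\psi)/(3\pi)$. (Writing $\bm{n}=\bm{g}/\|\bm{g}\|$ with $\bm{g}\sim N(0,I_N)$ produces the unified radial factor $2/N$ and shows the lemma in fact holds for every $N\ge 2$.) The cancellation between $NI(\psi)/\pi$ and $\cos\psi(\pi-2\psi)/\pi$ leaves precisely $2\sin\psi/\pi$, proving the lemma. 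Finally, since $\Phi_0$ is a periodic realization of $X_0$ in $\bbR^N$, the vectors $\{\bm{v}_{\bm{i}}:w_{\bm{i}}>0\}$ span $\bbR^N$, so at least one pair is non-parallel; this gives strict positivity of the weighted double sum of key-lemma terms, hence of $\tr\overline{\calT}-N\overline{\tau_{11}}$, and so $\overline{\epsilon_0}>0$. The main obstacle is the explicit evaluation of $J$ and the somewhat unexpected algebraic cancellation that isolates $\sin\psi$; the rest is symmetry and bookkeeping.
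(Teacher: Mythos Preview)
Your proof is correct and reaches the same endpoint as the paper---both reduce to showing $(N-1)\overline{\tau_{11}}<\overline{\tau_{22}}+\cdots+\overline{\tau_{NN}}$ and then expand $\overline{\bm{z}^{\otimes 2}}$ bilinearly so that the problem becomes a sum of nonnegative pairwise contributions---but the route is genuinely different. The paper keeps the full $SO(N)$-integral, uses the algebraic identity $(\sum_k\epsilon_kw_k\bm{v}_k)^{\otimes 2}=-(n-2)\sum_k(w_k\bm{v}_k)^{\otimes 2}+\sum_{k<l}(\epsilon_kw_k\bm{v}_k+\epsilon_lw_l\bm{v}_l)^{\otimes 2}$, and then computes each piece by explicit Euler-angle parametrizations, separately for $N=2$ and $N=3$ (Lemmas~\ref{lem:avn1} and~\ref{lem:avn2}). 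You instead notice that both $N(\bm{z}'\cdot\bm{n})^2$ and $\|\bm{z}'\|^2$ depend only on $\bm{n}=Q\bm{u}$, push the Haar integral forward to the uniform measure on $S^{N-1}$, and reduce everything to the two scalar spherical integrals $J$ and $K$. Your closed form $NJ(\bm{a},\bm{b})-(\bm{a}\cdot\bm{b})K(\bm{a},\bm{b})=\tfrac{2}{\pi}\|\bm{a}\|\|\bm{b}\|\sin\psi$ is exactly the paper's pair contribution in disguise (summing it over the four $\pm$ copies of each ordered pair recovers the paper's $\delta=4y/\pi$), but it is obtained with far less computation and, as you note, the derivation goes through for every $N\ge 2$, whereas the paper's Euler-angle calculation is intrinsically tied to $N\in\{2,3\}$. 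The paper's decomposition, on the other hand, makes the ``single-term contributions vanish, pair contributions are strictly positive'' structure a little more visible at the level of the tensor $\bm{z}^{\otimes 2}$ itself. Your justification of strict positivity via the spanning property of the edge vectors (connectivity of $X_0$ forces $\{\bm{v}_{\bm{i}}:w_{\bm{i}}>0\}$ to generate $\rho(\bbZ^N)$, hence span $\bbR^N$) is correct and matches the paper's implicit use of linear independence in Lemma~\ref{lem:avn2}.
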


\begin{proof}
It is sufficient to show that 
$(N-1)\overline{\tau_{11}} < \overline{\tau_{22}} + \dots + \overline{\tau_{NN}}$. 
Let $\{ \pm \bm{v}_{1}, \dots, \pm \bm{v}_{n} \}$ be the set of $\bm{v}_{\bm{i}}$'s 
for $\bm{i} \in \bbZ^{N} \setminus 0$ with non-zero $w_{\bm{i}}$. 
Let $w_{k} > 0$ denote the weight for $\bm{v}_{k}$. 
For $Q \in SO(N)$,  the splitting after the rotation by $Q^{-1}$ induces 
$\bm{z}(Q) = Q^{-1} \sum_{k=1}^{n} \epsilon_{k}(Q) w_{k} \bm{v}_{k}$, 
where $\epsilon_{k}(Q) = \pm 1, 0$ is the signature of $(Q \bm{u}) \cdot \bm{v}_{k}$. 
Note that $\epsilon_{k}(Q) = \pm 1$ for generic $Q$. 
Then 
\[
\calT(X_{1}(Q), \Phi_{1}(Q)) = \calT(X_{0}, \Phi_{0}) - \frac{\bm{z}(Q)^{\otimes 2}}{w_{0}^{\prime} + \sum_{k=1}^{n}w_{k}}. 
\]
The average is given by 
\begin{align*}
\overline{\calT(X_{1}, \Phi_{1})}  & = \calT(X_{0}, \Phi_{0}) - \frac{\overline{\bm{z}^{\otimes 2}}}{w_{0}^{\prime} + \sum_{k=1}^{n}w_{k}}, \\ 
\text{where} \quad 
\overline{\bm{z}^{\otimes 2}} & = \int_{Q \in SO(N)} \bm{z}(Q)^{\otimes 2} d\mu. 
\end{align*}
Let $\overline{\bm{z}^{\otimes 2}} = (z_{ij})_{1 \leq i,j \leq 2}$ 
and $\delta = (N-1) z_{11} - z_{22} - \dots - z_{NN}$. 
Since $(X_{0}, \Phi_{0})$ is standard, it is sufficient to show that $\delta > 0$.

We have 
\begin{align*}
\overline{\bm{z}^{\otimes 2}} & = 
\int_{Q \in SO(N)} 
Q^{-1} 
(\sum_{k=1}^{n} \epsilon_{k}(Q) w_{k} \bm{v}_{k})^{\otimes 2} 
Q d\mu, \\
(\sum_{k=1}^{n} \epsilon_{k}(Q) w_{k} \bm{v}_{k})^{\otimes 2} 
& = \sum_{k=1}^{n} (w_{k} \bm{v}_{k})^{\otimes 2} 
+ \sum_{k \neq l} (\epsilon_{k}(Q) w_{k} \bm{v}_{k}) 
\otimes (\epsilon_{l}(Q) w_{l} \bm{v}_{l}) \\
& = -(n-2) \sum_{k=1}^{n} (w_{k} \bm{v}_{k})^{\otimes 2} 
+ \sum_{k < l} (\epsilon_{k}(Q) w_{k} \bm{v}_{k} 
+ \epsilon_{l}(Q) w_{l} \bm{v}_{l})^{\otimes 2}. 
\end{align*}
Consider the contribution to $\delta$ from each term of the last expression. 
The contribution from $(w_{k} \bm{v}_{k})^{\otimes 2}$ 
is zero by Lemma~\ref{lem:avn1}. 
The contribution from 
$(\epsilon_{k}(Q) w_{k} \bm{v}_{k} + \epsilon_{l}(Q) w_{l} \bm{v}_{l})^{\otimes 2}$ 
is positive by Lemma~\ref{lem:avn2}. 
Hence $\delta > 0$. 
\end{proof}

In the next two lemmas, 
we do not assume that $(X_{0}, \Phi_{0})$ is standard. 
Nonetheless, the number $\delta$ is defined in the same manner. 

\begin{lem}
\label{lem:avn1}
Suppose that $n = 1$, i.e. 
$\{ \pm \bm{v}_{1} \}$ is the set of $\bm{v}_{\bm{i}}$'s 
for $\bm{i} \in \bbZ^{N} \setminus 0$ with non-zero $w_{\bm{i}}$. 
Then $\delta = 0$. 
\end{lem}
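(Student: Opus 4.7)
The plan is to exploit the fact that when $n=1$ the expression $\bm{z}(Q)^{\otimes 2}$ drops the sign $\epsilon_{1}(Q)$ entirely, so the average over $SO(N)$ becomes a pure Haar integral of a conjugated rank-one tensor.

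First I would write out $\bm{z}(Q)$ in the degenerate case. Since $n = 1$,
\[
\bm{z}(Q) = \epsilon_{1}(Q)\, w_{1}\, Q^{-1}\bm{v}_{1},
\]
and because $\epsilon_{1}(Q)^{2} = 1$ for generic $Q$, the sign disappears when we form the outer square:
\[
\bm{z}(Q)^{\otimes 2} = w_{1}^{2}\, Q^{-1}\, \bm{v}_{1}^{\otimes 2}\, Q,
\]
using $Q^{\tp} = Q^{-1}$ for $Q \in SO(N)$. Thus $\overline{\bm{z}^{\otimes 2}} = w_{1}^{2} M$, where $M = \int_{SO(N)} Q^{-1}\bm{v}_{1}^{\otimes 2} Q\, d\mu$.

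Next I would use the right-invariance of Haar measure to show $M$ is a scalar multiple of the identity. For any $R \in SO(N)$, substituting $Q \mapsto QR$ gives
\[
M = \int_{SO(N)} (QR)^{-1}\, \bm{v}_{1}^{\otimes 2}\, (QR)\, d\mu = R^{-1} M R,
\]
so $M$ commutes with every rotation. By Schur's lemma (or the elementary fact that the only matrices commuting with all of $SO(N)$ for $N \geq 2$ are scalar multiples of the identity), $M = \lambda I_{N}$ for some $\lambda \in \bbR$.

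Finally, since $\overline{\bm{z}^{\otimes 2}} = w_{1}^{2}\lambda I_{N}$ is scalar, all of its diagonal entries coincide: $z_{11} = z_{22} = \dots = z_{NN} = w_{1}^{2}\lambda$. Substituting into $\delta = (N-1)z_{11} - z_{22} - \dots - z_{NN}$ gives $\delta = 0$. The only subtle point is the cancellation of $\epsilon_{1}(Q)$ in the rank-one case --- this is exactly why the lemma restricts to $n = 1$, and the argument would fail for cross terms $\epsilon_{k}(Q)\epsilon_{l}(Q)$, which is why those are handled separately in Lemma~\ref{lem:avn2}. Since no step requires $(X_{0}, \Phi_{0})$ to be standard, the statement holds in the generality claimed.
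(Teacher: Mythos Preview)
Your argument is correct and takes a cleaner, more conceptual route than the paper. The paper proves the lemma by direct computation: for $N=2$ it parametrizes $SO(2)$ by the rotation angle $\theta$, normalizes $w_{1}\bm{v}_{1} = (1,0)^{\tp}$, and integrates $R(\theta)^{-1}\begin{psmallmatrix}1&0\\0&0\end{psmallmatrix}R(\theta)$ over $[0,2\pi]$ to obtain $\tfrac{1}{2}I_{2}$; for $N=3$ it repeats this with Euler angles and finds $\tfrac{1}{3}I_{3}$. Your Haar-invariance argument bypasses both explicit integrals and works uniformly in $N$, which is a genuine gain.

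One small imprecision worth patching: the parenthetical ``elementary fact'' that the only real matrices commuting with all of $SO(N)$ are scalars is false for $N=2$, where the centralizer of $SO(2)$ in $M_{2}(\bbR)$ is two-dimensional (every rotation lies in it). The repair is immediate: your $M$ is \emph{symmetric}, being an average of the symmetric matrices $Q^{-1}\bm{v}_{1}^{\otimes 2}Q$, and a symmetric matrix commuting with all of $SO(N)$ has $SO(N)$-invariant eigenspaces; since $SO(N)$ acts irreducibly on $\bbR^{N}$ for every $N\geq 2$, this forces $M = \lambda I_{N}$. With that one-line addition your proof is complete and strictly more general than the paper's.
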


\begin{lem}
\label{lem:avn2}
Suppose that $n = 2$, i.e. 
$\{ \pm \bm{v}_{1}, \pm \bm{v}_{2} \}$ is the set of $\bm{v}_{\bm{i}}$'s 
for $\bm{i} \in \bbZ^{N} \setminus 0$ with non-zero $w_{\bm{i}}$. 
Suppose that the vectors $\bm{v}_{1}$ and $\bm{v}_{2}$ are linearly independent. 
Then $\delta > 0$. 
\end{lem}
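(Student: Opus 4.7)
My plan is to convert the Haar integral defining $\delta$ into a spherical integral over $S^{N-1}$ and then conclude positivity via an elementary $|XY| \ge \pm XY$ bound. First, I rewrite $\delta = N (\overline{\bm{z}^{\otimes 2}})_{11} - \tr(\overline{\bm{z}^{\otimes 2}})$. With $\bm{s}(Q) = \epsilon_{1}(Q) w_{1} \bm{v}_{1} + \epsilon_{2}(Q) w_{2} \bm{v}_{2}$, so that $\bm{z}(Q) = Q^{-1}\bm{s}(Q)$, orthogonality of $Q$ gives $\|\bm{z}(Q)\|^{2} = \|\bm{s}(Q)\|^{2}$, while $\bm{z}(Q)_{1} = \bm{a}\cdot\bm{s}(Q)$ with $\bm{a} := Q\bm{u}$. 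Since $\epsilon_{k}(Q) = \mathrm{sgn}(\bm{a}\cdot\bm{v}_{k})$, the integrand depends on $Q$ only through $\bm{a}$, so pushing the Haar measure forward to the uniform probability measure $\nu$ on $S^{N-1}$ gives
\[
\delta = \int_{S^{N-1}} \bigl[\, N (\bm{a}\cdot\bm{s}(\bm{a}))^{2} - \|\bm{s}(\bm{a})\|^{2} \,\bigr] \, d\nu(\bm{a}).
\]

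Next I would expand via $\epsilon_{k}(\bm{a}\cdot\bm{v}_{k}) = |\bm{a}\cdot\bm{v}_{k}|$ and apply the classical spherical moment $\langle(\bm{a}\cdot\bm{v})^{2}\rangle = \|\bm{v}\|^{2}/N$. The pure $w_{k}^{2}\bm{v}_{k}^{\otimes 2}$ parts of $\bm{s}(\bm{a})^{\otimes 2}$ contribute $0$ to $\delta$ (exactly as in Lemma~\ref{lem:avn1}), leaving the genuine cross contribution
\[
\delta = 2 w_{1} w_{2} \Bigl[ N \bigl\langle |\bm{a}\cdot\bm{v}_{1}||\bm{a}\cdot\bm{v}_{2}| \bigr\rangle - (\bm{v}_{1}\cdot\bm{v}_{2}) \bigl\langle \mathrm{sgn}((\bm{a}\cdot\bm{v}_{1})(\bm{a}\cdot\bm{v}_{2})) \bigr\rangle \Bigr].
\]

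To conclude, set $X = \bm{a}\cdot\bm{v}_{1}$ and $Y = \bm{a}\cdot\bm{v}_{2}$. The pointwise $|XY| \ge \pm XY$ yields $\langle|XY|\rangle \ge |\langle XY\rangle| = |\bm{v}_{1}\cdot\bm{v}_{2}|/N$, and the inequality is strict because linear independence of $\bm{v}_{1}$ and $\bm{v}_{2}$ forces $\{XY<0\}\cap S^{N-1}$ to carry positive measure. Hence $N\langle|XY|\rangle > |\bm{v}_{1}\cdot\bm{v}_{2}|$, whereas trivially $(\bm{v}_{1}\cdot\bm{v}_{2})\langle\mathrm{sgn}(XY)\rangle \le |\bm{v}_{1}\cdot\bm{v}_{2}|$; the bracket is therefore strictly positive and $\delta > 0$. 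The main delicacy will be the first reduction step---checking that the integrand really depends on $Q$ only via $Q\bm{u}$ and invoking the transitivity of $SO(N)$ on $S^{N-1}$. After that, everything collapses to the Jensen-type bound $\langle|XY|\rangle \ge |\langle XY\rangle|$; no explicit value of the sign-product average (such as $1 - 2\alpha/\pi$, with $\alpha$ the angle between $\bm{v}_{1}$ and $\bm{v}_{2}$) is needed.
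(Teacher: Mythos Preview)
Your argument is correct and takes a genuinely different route from the paper. The paper proceeds by brute-force coordinate computation: it normalizes $w_{1}\bm{v}_{1}=(1,0,\dots)^{\tp}$, $w_{2}\bm{v}_{2}=(x,y,0,\dots)^{\tp}$ with $y>0$, parametrizes $SO(2)$ by a single angle and $SO(3)$ by Euler angles, and evaluates the integrals explicitly to obtain the exact values $\delta = 4y/\pi$ for $N=2$ and $\delta = 2y/\pi$ for $N=3$. Your approach instead pushes the Haar integral forward to the uniform measure on $S^{N-1}$ (valid because $SO(N)$ acts transitively with the Haar measure descending to the rotation-invariant measure), strips off the diagonal terms via the moment identity $\langle(\bm{a}\cdot\bm{v})^{2}\rangle=\|\bm{v}\|^{2}/N$, and then bounds the cross term by the elementary inequality $\langle|XY|\rangle\ge|\langle XY\rangle|$. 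The payoff is that your proof is coordinate-free and works uniformly for every $N\ge 2$, whereas the paper's computation is confined to $N=2,3$ but in return yields the precise constant.

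One small point to tighten: your strictness step invokes only that $\{XY<0\}$ has positive $\nu$-measure, which handles the case $\bm{v}_{1}\cdot\bm{v}_{2}\ge 0$. For $\bm{v}_{1}\cdot\bm{v}_{2}<0$ you need $\{XY>0\}$ to have positive measure instead. Both follow at once from linear independence (the hyperplanes $\{X=0\}$ and $\{Y=0\}$ are distinct, so they cut $S^{N-1}$ into four open regions of positive measure), so this is a one-line addition rather than a gap.
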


In the case $N=2$, 
let 
$R(\theta) = 
\begin{pmatrix}
\cos \theta & -\sin \theta \\
\sin \theta & \cos \theta 
\end{pmatrix}
\in SO(2)$ 
for $0 \leq \theta \leq 2\pi$. 
The splitting after the rotation induces 
$\bm{z}(\theta) = 
R(\theta)^{-1}
\sum_{k=1}^{n} \epsilon_{k}(\theta, \varphi, \psi) w_{k} \bm{v}_{k}$. 
Then 
\[
\overline{\bm{z}^{\otimes 2}} = \frac{1}{\pi} \int_{0}^{\pi} \bm{z}(\theta)^{\otimes 2} d\theta. 
\]

In the case $N=3$, 
let 
\[
R(\theta, \varphi, \psi) = 
\begin{pmatrix}
\cos \varphi & -\sin \varphi & 0 \\
\sin \varphi & \cos \varphi & 0 \\
0 & 0 & 1 
\end{pmatrix}
\begin{pmatrix}
1 & 0 & 0 \\
0 & \cos \theta & -\sin \theta \\
0 & \sin \theta & \cos \theta 
\end{pmatrix}
\begin{pmatrix}
\cos \psi & -\sin \psi & 0 \\
\sin \psi & \cos \psi & 0 \\
0 & 0 & 1 
\end{pmatrix}
\]
for $0 \leq \theta \leq \pi, 0 \leq \varphi \leq 2\pi, 0 \leq \psi \leq 2\pi$. 
The splitting after the rotation induces 
$\bm{z}(\theta, \varphi, \psi) = 
R(\theta, \varphi, \psi)^{-1}
\sum_{k=1}^{n} \epsilon_{k}(\theta, \varphi, \psi) w_{k} \bm{v}_{k}$. 
Then 
\[
\overline{\bm{z}^{\otimes 2}} = 
\frac{1}{8\pi^{2}} \int_{0}^{\pi} \int_{0}^{2\pi} \int_{0}^{2\pi} \bm{z}(\theta, \varphi, \psi)^{\otimes 2} 
\sin \theta d\theta d\varphi d\psi. 
\]

\begin{proof}[Proof of Lemma~\ref{lem:avn1}]
In the case $N=2$, 
we may assume that $w_{1} \bm{v}_{1} = (1,0)^{\tp}$. 
Then 
\begin{align*}
\overline{\bm{z}^{\otimes 2}} & = 
\frac{1}{2\pi} \int_{0}^{2\pi} 
R(\theta)^{-1}
\begin{pmatrix}
1 & 0 \\
0 & 0 
\end{pmatrix}
R(\theta)
d\theta \\ 
& = 
\frac{1}{2}
\begin{pmatrix}
1 & 0 \\
0 & 1 
\end{pmatrix}. 
\end{align*}
Hence $\delta = 0$. 

In the case $N=3$, 
we may assume that $w_{1} \bm{v}_{1} = (1,0,0)^{\tp}$. 
Then 
\begin{align*}
\overline{\bm{z}^{\otimes 2}} & = 
\frac{1}{8\pi^{2}} \int_{0}^{\pi} \int_{0}^{2\pi} \int_{0}^{2\pi} 
R(\theta, \varphi, \psi)^{-1}
\begin{pmatrix}
1 & 0 & 0 \\
0 & 0 & 0 \\
0 & 0 & 0 
\end{pmatrix}
R(\theta, \varphi, \psi)
\sin \theta d\theta d\varphi d\psi \\ 
& = 
\frac{1}{3}
\begin{pmatrix}
1 & 0 & 0 \\
0 & 1 & 0 \\
0 & 0 & 1 
\end{pmatrix}. 
\end{align*}
Hence $\delta = 0$. 
\end{proof}

\begin{proof}[Proof of Lemma~\ref{lem:avn2}]

In the case $N=2$, 
we may assume that 
$w_{1} \bm{v}_{1} = (1,0)^{\tp}$, $w_{2} \bm{v}_{2} = (x,y)^{\tp}$, 
$x \geq 0$, and $y > 0$. 
Let $\theta_{0} = \arctan (y / x) \in (0, \pi/2)$. 
Then 
\begin{align*}
\overline{\bm{z}^{\otimes 2}} & = 
\frac{1}{2\pi} \left( 
\int_{\theta_{0} - \pi/2}^{\pi/2} 
R(\theta)^{-1}
(w_{1}\bm{v}_{1} + w_{2}\bm{v}_{2})^{\otimes 2}
R(\theta)
d\theta \right. \\
& \quad + 
\int_{\pi/2}^{\theta_{0} + \pi/2} 
R(\theta)^{-1}
(- w_{1}\bm{v}_{1} + w_{2}\bm{v}_{2})^{\otimes 2}
R(\theta)
d\theta \\
& \quad + 
\int_{\theta_{0} + \pi/2}^{3\pi/2} 
R(\theta)^{-1}
(- w_{1}\bm{v}_{1} - w_{2}\bm{v}_{2})^{\otimes 2}
R(\theta)
d\theta \\
& \left. \quad + 
\int_{3\pi/2}^{\theta_{0} + 3\pi/2} 
R(\theta)^{-1}
(w_{1}\bm{v}_{1} - w_{2}\bm{v}_{2})^{\otimes 2}
R(\theta)
d\theta \right) \\ 
& = 
\frac{1}{\pi} \left( 
\int_{\theta_{0} - \pi/2}^{\pi/2} 
R(\theta)^{-1}
\begin{pmatrix}
1 + x \\
y 
\end{pmatrix}^{\otimes 2}
R(\theta)
d\theta \right. \\
& \left. \quad + 
\int_{\pi/2}^{\theta_{0} + \pi/2} 
R(\theta)^{-1}
\begin{pmatrix}
-1 + x \\
y 
\end{pmatrix}^{\otimes 2}
R(\theta)
d\theta \right). 
\end{align*}

Let $\alpha_{ij} \in \bbR$ for $1 \leq i, j \leq 2$. 
Suppose that $\alpha_{ij} = \alpha_{ji}$. 
Let 
\[
\begin{pmatrix}
\beta_{11} & \beta_{12} \\
\beta_{21} & \beta_{22} 
\end{pmatrix} 
= \int R(\theta)^{-1}
\begin{pmatrix}
\alpha_{11} & \alpha_{12} \\
\alpha_{21} & \alpha_{22} 
\end{pmatrix} 
R(\theta) d\theta. 
\]
Then 
\[
\beta_{11} - \beta_{22} = 
\frac{1}{2} (\alpha_{11} - \alpha_{22}) \sin 2\theta - \alpha_{12} \cos 2\theta. 
\]
Since 
\[
\sin 2\theta_{0} = \frac{2xy}{x^{2} + y^{2}} \quad \text{and} \quad 
\cos 2\theta_{0} = \frac{x^{2} - y^{2}}{x^{2} + y^{2}}, 
\]
we have 
\begin{align*}
\delta & = 
\frac{1}{\pi} \left(
\left[
\frac{1}{2} ((1+x)^{2} - y^{2}) \sin 2\theta - (1+x)y \cos 2\theta 
\right]_{\theta_{0} - \pi/2}^{\pi/2} \right. \\
& \left. \quad + 
\left[
\frac{1}{2} ((-1+x)^{2} - y^{2}) \sin 2\theta - (-1+x)y \cos 2\theta 
\right]_{\pi/2}^{\theta_{0} + \pi/2} \right) \\
& = \frac{1}{\pi} (2x \sin 2\theta_{0} + 2y (1 - \cos 2\theta_{0})) \\
& = \frac{4y}{\pi}  > 0. 
\end{align*}

In the case $N=3$, 
we may assume that 
$w_{1} \bm{v}_{1} = (1,0,0)^{\tp}$, $w_{2} \bm{v}_{2} = (x,y,0)^{\tp}$, 
$x \geq 0$, and $y > 0$. 
Let $\varphi_{0} = \arctan (y / x) \in (0, \pi/2)$. 
Then 
\begin{align*}
\overline{\bm{z}^{\otimes 2}} 
& = \frac{1}{8\pi^{2}} \left( 
\int_{0}^{\pi} \int_{\varphi_{0} - \pi/2}^{\pi/2} \int_{0}^{2\pi}
R(\theta, \varphi, \psi)^{-1}
(w_{1}\bm{v}_{1} + w_{2}\bm{v}_{2})^{\otimes 2}
R(\theta, \varphi, \psi)
\sin \theta d\theta d\varphi d\psi \right. \\
& \quad + 
\int_{0}^{\pi} \int_{\pi/2}^{\varphi_{0} + \pi/2}  \int_{0}^{2\pi}
R(\theta, \varphi, \psi)^{-1}
(- w_{1}\bm{v}_{1} + w_{2}\bm{v}_{2})^{\otimes 2}
R(\theta, \varphi, \psi)
\sin \theta d\theta d\varphi d\psi \\
& \quad + 
\int_{0}^{\pi} \int_{\varphi_{0} + \pi/2}^{3\pi/2} \int_{0}^{2\pi}
R(\theta, \varphi, \psi)^{-1}
(- w_{1}\bm{v}_{1} - w_{2}\bm{v}_{2})^{\otimes 2}
R(\theta, \varphi, \psi)
\sin \theta d\theta d\varphi d\psi \\
& \left. \quad + 
\int_{0}^{\pi} \int_{3\pi/2}^{\varphi_{0} + 3\pi/2}  \int_{0}^{2\pi}
R(\theta, \varphi, \psi)^{-1}
(w_{1}\bm{v}_{1} - w_{2}\bm{v}_{2})^{\otimes 2}
R(\theta, \varphi, \psi)
\sin \theta d\theta d\varphi d\psi \right) \\ 
& = 
\frac{1}{4\pi^{2}} \left( 
\int_{0}^{\pi} \int_{\varphi_{0} - \pi/2}^{\pi/2} \int_{0}^{2\pi}
R(\theta, \varphi, \psi)^{-1}
\begin{pmatrix}
1+x \\
y \\
0 
\end{pmatrix}^{\otimes 2}
R(\theta, \varphi, \psi)
\sin \theta d\theta d\varphi d\psi \right. \\
& \left. \quad + 
\int_{0}^{\pi} \int_{\pi/2}^{\varphi_{0} + \pi/2}  \int_{0}^{2\pi}
R(\theta, \varphi, \psi)^{-1}
\begin{pmatrix}
-1+x \\
y \\
0 
\end{pmatrix}^{\otimes 2}
R(\theta, \varphi, \psi)
\sin \theta d\theta d\varphi d\psi \right). 
\end{align*}

Let $\alpha_{ij} \in \bbR$ for $1 \leq i, j \leq 2$. 
Suppose that $\alpha_{ij} = \alpha_{ji}$. 
Let 
\begin{align*}
(\beta_{ij})_{1 \leq i,j \leq 3}
& = \int 
\begin{pmatrix}
\cos \varphi & \sin \varphi & 0 \\
-\sin \varphi & \cos \varphi & 0 \\
0 & 0 & 1 
\end{pmatrix}
\begin{pmatrix}
\alpha_{11} & \alpha_{12} & 0 \\
\alpha_{21} & \alpha_{22} & 0 \\
0 & 0 & 0 
\end{pmatrix}
\begin{pmatrix}
\cos \varphi & -\sin \varphi & 0 \\
\sin \varphi & \cos \varphi & 0 \\
0 & 0 & 1 
\end{pmatrix}
d\varphi, \\
(\gamma_{ij})_{1 \leq i,j \leq 3}
& = \int_{0}^{\pi} 
\begin{pmatrix}
1 & 0 & 0 \\
0 & \cos \theta & \sin \theta \\
0 & -\sin \theta & \cos \theta 
\end{pmatrix}
(\beta_{ij})_{1 \leq i,j \leq 3}
\begin{pmatrix}
1 & 0 & 0 \\
0 & \cos \theta & -\sin \theta \\
0 & \sin \theta & \cos \theta 
\end{pmatrix}
\sin \theta d\theta, \\
(\delta_{ij})_{1 \leq i,j \leq 3}
& = \int_{0}^{2\pi} 
\begin{pmatrix}
\cos \psi & \sin \psi & 0 \\
-\sin \psi & \cos \psi & 0 \\
0 & 0 & 1 
\end{pmatrix}
(\gamma_{ij})_{1 \leq i,j \leq 3}
\begin{pmatrix}
\cos \psi & -\sin \psi & 0 \\
\sin \psi & \cos \psi & 0 \\
0 & 0 & 1 
\end{pmatrix}
d\psi. 
\end{align*}
Then 
\begin{align*}
\beta_{11} & = \frac{1}{4} \alpha_{11} (2\varphi + \sin 2\varphi) 
- \frac{1}{2} \alpha_{12} \cos 2\varphi + \frac{1}{4} \alpha_{22} (2\varphi - \sin 2\varphi), \\
\beta_{22} & = \frac{1}{4} \alpha_{11} (2\varphi - \sin 2\varphi) 
+ \frac{1}{2} \alpha_{12} \cos 2\varphi + \frac{1}{4} \alpha_{22} (2\varphi + \sin 2\varphi), \\
\beta_{33} & = 0, \\
\gamma_{11} & = 2\beta_{11}, \
\gamma_{22} = \frac{2}{3} \beta_{22} + \frac{4}{3} \beta_{33}, \
\gamma_{33} = \frac{4}{3} \beta_{22} + \frac{2}{3} \beta_{33}, \\
\delta_{11} & = \delta_{22} = \pi \gamma_{11} + \pi \gamma_{22}, \
\delta_{33} = 2\pi \gamma_{33}. 
\end{align*}
Hence
\[
2\delta_{11} - \delta_{22} - \delta_{33} = 2\pi (\beta_{11} - \beta_{22}) 
= 2\pi (\frac{1}{2} (\alpha_{11} - \alpha_{22}) \sin 2\varphi - \alpha_{12} \cos 2\varphi). 
\]
Since 
\[
\sin 2\varphi_{0} = \frac{2xy}{x^{2} + y^{2}} \quad \text{and} \quad 
\cos 2\varphi_{0} = \frac{x^{2} - y^{2}}{x^{2} + y^{2}}, 
\]
we have 
\begin{align*}
\delta & = 
\frac{1}{2\pi} \left(
\left[
\frac{1}{2} ((1+x)^{2} - y^{2}) \sin 2\varphi - (1+x)y \cos 2\varphi 
\right]_{\varphi_{0} - \pi/2}^{\pi/2} \right.\\
& \left. \quad + 
\left[
\frac{1}{2} ((-1+x)^{2} - y^{2}) \sin 2\varphi - (-1+x)y \cos 2\varphi 
\right]_{\pi/2}^{\varphi_{0} + \pi/2} \right) \\
& = \frac{1}{2\pi} (2x \sin 2\varphi_{0} + 2y (1 - \cos 2\varphi_{0})) \\
& = \frac{4y}{2\pi}  > 0. 
\end{align*}
\end{proof}

\section{Deviation angles}
\label{sec:deviation}

By Proposition~\ref{prop:deviation}, the permanent strain relates with 
the angle between the directions of stretching and splitting. 
In this section, we introduce the notion of deviation angles of a net 
based on this fact, in order to measure anisotropy. 

Non-crystalline (or amorphous) materials have no special directions. 
In other words, they are isotropic. 
Polycrystalline materials consisting of grains are also isotropic at the macro-scale, 
because anisotropies in grains are cancelled by averaging. 
This observation motivates the averaging in Section~\ref{sec:single}. 
In an isotropic elastomer, 
the polymer chains are uniformly distributed in all the directions. 
Such isotropy is usually assumed 
in classical theories of rubber elasticity as found in \cite{Treloar}. 
However, our model is applicable to anisotropic networks. 

Let $(X, \Phi)$ be an $N$-dimensional net. 
Let $E_{0} = \{ e_{1}, \iota (e_{1}), \dots, e_{n}, \iota (e_{n}) \}$ 
be a set of representatives of edges per period. 
We write $w_{i} = w(e_{i})$ and $\bm{v}_{i} = \Phi (e_{i})$ for $1 \leq i \leq n$. 
Take $\bm{u} \neq 0 \in \bbR^{N}$. 
Let $\epsilon_{i} \in \{\pm 1\}$ satisfy that $\bm{u} \cdot \epsilon_{i} \bm{v}_{i} \geq 0$. 
If $\bm{u} \cdot \bm{v}_{i} = 0$, we consider both of $\epsilon_{i} = \pm 1$. 
Let $\bm{z} = \sum_{i=1}^{n} w_{i} \epsilon_{i} \bm{v}_{i}$, 
which is multivalued for $\bm{u}$. 
We call the angle between $\bm{u}$ and $\bm{z}$ 
the \emph{deviation angle}. 
It is at most $\pi/2$. 
The \emph{maximal (resp. minimal) deviation angle} for $(X, \Phi)$ 
is the maximum (resp. minimum) of the deviation angles over the vectors $\bm{u}$ and $\bm{z}$. 
These are independent of a choice of a period. 

A net in our definition is never isotropic. 
Nonetheless, we may imagine an isotropic network consisting of uniformly distributed edges. 
In this case, the maximal deviation angle is equal to zero. 
Thus the maximal deviation angle can be regarded as a measure of anisotropy.

\begin{prop}
\label{prop:maxdev}
The maximal deviation angle is positive and attained 
when $\bm{u} \cdot \bm{v}_{i} = 0$ for some $i$. 
Furthermore, the maximum is an accumulation point of the deviation angles on neighborhood of such $\bm{u}$. 
\end{prop}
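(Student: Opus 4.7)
The plan is to analyze the hyperplane arrangement $\mathcal{H} = \bigcup_{i=1}^{n}\{\bm{u}\in\bbR^{N} : \bm{u}\cdot\bm{v}_i = 0\}$. Its complement in $\bbR^{N}\setminus\{0\}$ decomposes into finitely many open cones, and on each such cone $R$ every sign $\epsilon_i = \operatorname{sgn}(\bm{u}\cdot\bm{v}_i)$ is constant. Thus $\bm{z}(\bm{u}) \equiv \bm{z}_R$ is a single fixed vector on $R$, and the deviation angle reduces to the continuous scalar function $f_R(\bm{u}) = \angle(\bm{u}, \bm{z}_R)$, which extends continuously to $\overline{R}$. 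The multi-valued max in the definition is then the maximum over $R$ of $\sup_{\bm{u}\in\overline{R}\cap S^{N-1}} f_R(\bm{u})$.

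First I would show that for each cone $R$ with $\bm{z}_R\neq 0$ this supremum is actually attained on $\partial R$. The identity $\bm{u}\cdot\bm{z}_R = \sum_i w_i|\bm{u}\cdot\bm{v}_i| > 0$ on $R$ gives $f_R<\pi/2$, and differentiating $\cos f_R = (\bm{u}\cdot\bm{z}_R)/(\|\bm{u}\|\,\|\bm{z}_R\|)$ along the unit sphere shows that the only critical point of $f_R$ in $R\cap S^{N-1}$ is $\bm{u}\parallel\bm{z}_R$, where $f_R=0$. So $f_R$ on $\overline{R}\cap S^{N-1}$ attains its maximum on $\partial R$, i.e., at a $\bm{u}$ with some $\bm{u}\cdot\bm{v}_i = 0$. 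Because $R$ is an $N$-dimensional open cone, it contains unit vectors not parallel to $\bm{z}_R$, at each of which $f_R>0$; taking the global maximum over the finitely many such $R$ yields a maximal deviation angle that is positive, attained, and attained at some $\bm{u}^{*}$ with $\bm{u}^{*}\cdot\bm{v}_i = 0$ for some $i$.

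For the accumulation clause, I would reverse the correspondence: a maximizing pair $(\bm{u}^{*}, \bm{z}^{*})$ determines a sign choice $\epsilon_i\in\{\pm 1\}$ for each $i$ with $\bm{u}^{*}\cdot\bm{v}_i = 0$ (the remaining $\epsilon_i$ are already pinned down), and this combined sign pattern picks out one open cone $R$ whose closure contains $\bm{u}^{*}$ and whose constant value is $\bm{z}_R = \bm{z}^{*}$. For any sequence $\bm{u}_k\in R$ with $\bm{u}_k\to\bm{u}^{*}$, continuity of $f_R$ on $\overline{R}$ gives $f_R(\bm{u}_k)\to f_R(\bm{u}^{*})$, which is exactly the maximal deviation angle.

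The main obstacle I anticipate is being careful about the surjectivity of the map from adjacent open cones to admissible sign patterns at a boundary point; this is essentially local because flipping a single $\epsilon_i$ corresponds to crossing the single hyperplane $\bm{v}_i^\perp$, but it needs to be spelled out carefully when several $\bm{u}^{*}\cdot\bm{v}_i$ vanish simultaneously. A minor issue is regions with $\bm{z}_R = 0$, where the deviation angle is undefined; these should simply be excluded from the supremum without affecting the conclusion.
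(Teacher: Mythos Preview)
Your treatment of the first two claims (positivity and attainment on a hyperplane $\bm{u}\cdot\bm{v}_i=0$) is essentially the paper's argument dressed in the language of the cone decomposition: the paper observes that if $\bm{u}\cdot\bm{v}_i\neq 0$ for all $i$ then $\bm{z}$ is locally constant and one may perturb $\bm{u}$ to enlarge the angle, which is exactly your critical-point computation on each chamber.

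The accumulation clause, however, has a genuine gap at precisely the place you flagged. Your proposed fix---that one can reach any sign pattern by flipping one $\epsilon_i$ at a time across a single hyperplane---does \emph{not} work when several $\bm{v}_i$ with $\bm{u}^*\cdot\bm{v}_i=0$ are linearly dependent: then the open cones adjacent to $\bm{u}^*$ realize only a proper subset of the $2^m$ sign patterns, and there is no a~priori reason the maximizing pattern is among them. (A trivial instance: two parallel edge vectors give a common hyperplane, and only two of the four sign patterns occur on either side.) So the sentence ``this combined sign pattern picks out one open cone $R$ whose closure contains $\bm{u}^*$'' is unjustified as written.

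The paper closes this gap not by cone combinatorics but by using maximality directly. Writing $\bm{z}^*=\sum_i w_i\epsilon_i\bm{v}_i$, if $\bm{z}^*\cdot(\epsilon_j\bm{v}_j)<0$ for some free index $j$, then flipping $\epsilon_j$ replaces $\bm{z}^*$ by $\bm{z}'=\bm{z}^*-2w_j\epsilon_j\bm{v}_j$; since $\bm{u}^*\cdot\bm{v}_j=0$ one has $\bm{u}^*\cdot\bm{z}'=\bm{u}^*\cdot\bm{z}^*$ while $\|\bm{z}'\|>\|\bm{z}^*\|$, contradicting maximality. Hence $\bm{z}^*\cdot(\epsilon_j\bm{v}_j)\ge 0$ for every free $j$, and the perturbation $\bm{u}=\bm{u}^*+r\bm{z}^*$ (small $r>0$) satisfies $\bm{u}\cdot(\epsilon_i\bm{v}_i)\ge 0$ for all $i$, so $\bm{z}^*$ remains an admissible value of $\bm{z}$ at $\bm{u}$ and the deviation angle $\angle(\bm{u},\bm{z}^*)$ tends to the maximum. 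This is the missing ingredient; once you insert it, your argument goes through (and in fact one can sharpen the inequality to $\bm{z}^*\cdot(\epsilon_j\bm{v}_j)>0$, which shows the desired open cone really is nonempty).
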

\begin{proof}
There are only finitely many possibilities of $\bm{z}$. 
For fixed $\bm{z}$, 
the subspace of $S^{N-1}$ consisting of the corresponding $\bm{u}$'s 
is compact. 
Hence the maximum of the angle between $\bm{u}$ and $\bm{z}$ exists. 

Suppose that $\bm{u} \cdot \bm{v}_{i} \neq 0$ for any $i$. 
Consider small perturbation of $\bm{u}$. 
Since $\bm{z}$ is constant, 
the deviation angle can be larger. 
Hence the maximal deviation angle is positive and not attained by such $\bm{u}$. 

Let $\bm{u}_{\mathrm{max}}$ and $\bm{z}_{\mathrm{max}}$ be 
the vectors that attain the maximal deviation angle. 
We may assume that 
$\bm{u}_{\mathrm{max}} \cdot \bm{v}_{1} = \dots = \bm{u}_{\mathrm{max}} \cdot \bm{v}_{m} = 0$ and 
$\bm{u}_{\mathrm{max}} \cdot \bm{v}_{m+1} > 0, \dots, \bm{u}_{\mathrm{max}} \cdot \bm{v}_{n} > 0$. 
Then $\bm{z}_{\mathrm{max}} = \sum_{i=1}^{n} w_{i} \bm{v}_{i}$. 
If $\bm{z}_{\mathrm{max}} \cdot \bm{v}_{i} < 0$ for some $1 \leq i \leq m$, 
we obtain a larger deviation angle by replacing $\bm{v}_{i}$ with $-\bm{v}_{i}$. 
Hence $\bm{z}_{\mathrm{max}} \cdot \bm{v}_{i} \geq 0$ for any $1 \leq i \leq m$. 
The vector $\bm{u} = \bm{u}_{\mathrm{max}} + r \bm{z}_{\mathrm{max}}$ 
induces the same $\bm{z} = \bm{z}_{\mathrm{max}}$ for any small $r > 0$. 
Therefore the maximal deviation angle is an accumulation point. 
\end{proof}

Representation surfaces of Young's modulus are used as a method to evaluate the anisotropy of monocrystallines and lattice structures~\cite{nye1985physical,XU2016}.
The surface is a spherical surface, with the length of each radius vector being the reciprocal of Young's modulus corresponding to the direction.
However, Young's modulus is not effective for the model in this paper.
(For example, Young's modulus of a standard net will be the same for all directions~\cite{KY}.)
As mentioned above, the deviation angle can take its place in our model, so we use representation surfaces of deviation angles as an analogy.
Namely, the surface is obtained by the length of each unit vector representing the sphere as the supremum of the deviation angles to the direction.

\begin{ex}\label{ex:deviation-bcc}
Let us consider a 3-dimensional net obtained by the body-centered cubic (bcc) lattice as follows: 
\begin{itemize}
\item The image of the period homomorphism is $\bbZ^{3}$. 
\item The points $(0,0,0)$ and $(1/2, 1/2, 1/2)$ are representatives of the vertices. 
\item The edges join the first and second nearest vertices with distances $\sqrt{3}/2$ and $1$. 
\item The weights are equal to one. 
\end{itemize}
Then the deviation angles are indicated in Figure~\ref{fig:devbcc}. 
The maximal deviation angle is equal to $\arccos \sqrt{2/3} \sim 0.615$ 
and the minimal deviation angle is equal to zero given by $\bm{u} = (0, 1, 1)$ and $\bm{u} = (1, 1, 1)$, respectively. 
\end{ex}

\begin{figure}[htb]
\begin{minipage}[b]{0.56\textwidth}
    \centering
    \includegraphics[width=\linewidth]{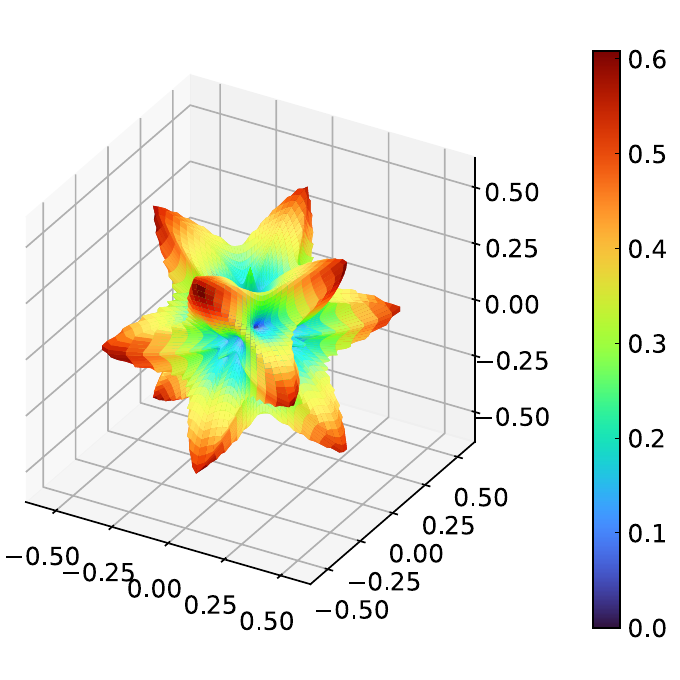}\\
    (a)
\end{minipage}
\begin{minipage}[b]{0.40\textwidth}
    \centering
    \includegraphics[width=\linewidth]{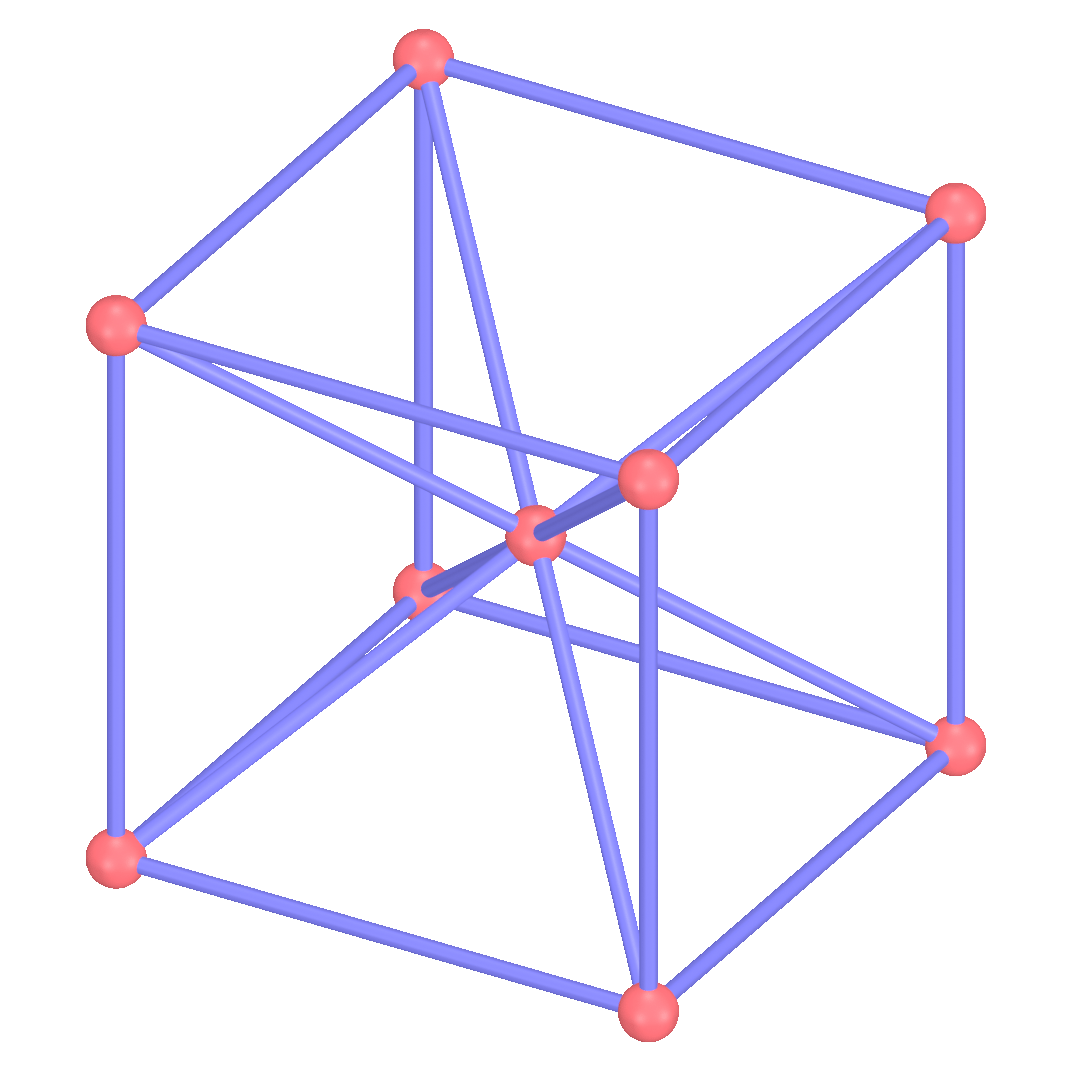}\\
    (b)
\end{minipage}
\caption{(a) The representation surface of the bcc lattice described in Example~\ref{ex:deviation-bcc}. (b) The bcc lattice used to obtain the surface in~(a). The representation surface and the lattice are viewed from the same angle.}
\label{fig:devbcc}
\end{figure}

The minimal deviation angle is equal to zero 
if and only if there exists $\bm{u} \neq 0 \in \bbR^{N}$ such that 
$\bm{u} = \bm{z}$ for $\bm{z}$ as above. 
This is very likely to hold, 
but we show it only for the 2-dimensional nets. 

\begin{thm}
If $N=2$, the minimal deviation angle is equal to zero. 
\end{thm}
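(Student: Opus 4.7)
The plan is to produce a direction $\bm{u}^{*} \in \mathbb{R}^{2}$ together with an admissible sign pattern $(\epsilon_{i}^{*}) \in \{\pm 1\}^{n}$ for which the associated $\bm{z} = \sum_{i} w_{i} \epsilon_{i}^{*} \bm{v}_{i}$ is a positive multiple of $\bm{u}^{*}$; then this pair realizes deviation angle $0$, which is clearly the minimum.

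First I would consider the finite set of all possible realizations $\bm{v}(\epsilon) := \sum_{i=1}^{n} w_{i} \epsilon_{i} \bm{v}_{i}$ for $\epsilon \in \{\pm 1\}^{n}$, which are (representatives of) the vertices of the zonotope $Z = \sum_{i} w_{i}[-\bm{v}_{i}, \bm{v}_{i}]$. Choose a maximizer $\epsilon^{*}$ of $\|\bm{v}(\epsilon)\|$ over all sign patterns and set $\bm{v}^{*} = \bm{v}(\epsilon^{*})$. Under the mild assumption that at least one $w_{i} \bm{v}_{i}$ is nonzero (otherwise there is no net-like structure to analyze), $\bm{v}^{*}$ is nonzero, and we let $\bm{u}^{*} = \bm{v}^{*} / \|\bm{v}^{*}\|$.

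The main step is to verify that $\epsilon_{i}^{*}(\bm{u}^{*} \cdot \bm{v}_{i}) \geq 0$ for every $i$. I would argue by contradiction: if $\epsilon_{i_{0}}^{*}(\bm{v}^{*} \cdot \bm{v}_{i_{0}}) < 0$ for some $i_{0}$, then flipping the sign of $\epsilon_{i_{0}}^{*}$ produces a new pattern $\epsilon^{\prime}$ with
\[
\|\bm{v}(\epsilon^{\prime})\|^{2} = \|\bm{v}^{*}\|^{2} - 4 w_{i_{0}} \epsilon_{i_{0}}^{*} (\bm{v}^{*} \cdot \bm{v}_{i_{0}}) + 4 w_{i_{0}}^{2} \|\bm{v}_{i_{0}}\|^{2} > \|\bm{v}^{*}\|^{2},
\]
contradicting maximality of $\epsilon^{*}$. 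The inequality $\epsilon_{i}^{*}(\bm{u}^{*} \cdot \bm{v}_{i}) \geq 0$ is precisely the admissibility condition for $(\epsilon_{i}^{*})$ at the direction $\bm{u}^{*}$ in the definition of $\bm{z}$ (with either sign allowed when $\bm{u}^{*} \cdot \bm{v}_{i} = 0$). Therefore $\bm{z} = \bm{v}^{*} = \|\bm{v}^{*}\| \bm{u}^{*}$ is parallel to $\bm{u}^{*}$, and the deviation angle at $\bm{u}^{*}$ is $0$.

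I do not foresee any serious obstacle in carrying this out: the argument is completely elementary and, notably, never uses the hypothesis $N = 2$. The same proof should work in every dimension, which matches the authors' remark preceding the theorem that the general case is expected to hold.
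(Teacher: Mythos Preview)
Your argument is correct and in fact takes a genuinely different---and stronger---route than the paper's. The paper orders the $\bm{v}_i$ by angle in $[0,\pi)$, defines the specific sequence $\bm{z}_i = -\bm{v}_1 - \dots - \bm{v}_{i-1} + \bm{v}_i + \dots + \bm{v}_n$, and argues by contradiction: if no $\bm{z}_i$ works, one shows inductively that $\bm{z}_i \cdot \bm{v}_i < 0$ for all $i$, which contradicts $\sum_i \bm{z}_i \cdot \bm{v}_i = \sum_i \|\bm{v}_i\|^2 > 0$. This argument genuinely uses the cyclic ordering available only in $\bbR^2$, which is why the authors state the general case as a conjecture immediately after the theorem.

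Your approach instead picks the sign pattern $\epsilon^*$ maximizing $\|\bm{v}(\epsilon)\|$ (equivalently, a farthest vertex of the zonotope $\sum_i w_i[-\bm{v}_i,\bm{v}_i]$) and observes that maximality forces $\epsilon_i^*(\bm{v}^* \cdot \bm{v}_i) \geq 0$ for every $i$, which is exactly the admissibility condition for $\epsilon^*$ at the direction $\bm{u}^* = \bm{v}^*/\|\bm{v}^*\|$. This is dimension-free, so you have actually proved the paper's Conjecture (that the minimal deviation angle vanishes for all $N$), not just the $N=2$ theorem. The only small point to make explicit is that $\bm{v}^* \neq 0$: since the net is $N$-dimensional, some $w_i\bm{v}_i \neq 0$, and then not all $\bm{v}(\epsilon)$ can vanish (flipping a single sign changes $\bm{v}(\epsilon)$ by $\pm 2w_i\bm{v}_i$). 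Your argument is both simpler and strictly more general than the paper's.
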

\begin{proof}
We may assume that 
$\bm{v}_{i} = (r_{i}\cos \theta_{i}, r_{i}\sin \theta_{i})^{\tp}$, 
$r_{i} > 0$, $0 = \theta _{1} < \theta_{2} < \dots < \theta_{n} < \pi$, 
and $w_{i} = 1$. 
Let $\bm{z}_{i} = - \bm{v}_{1} - \dots - \bm{v}_{i-1} + \bm{v}_{i} + \dots + \bm{v}_{n} 
= \sum_{j=1}^{n} \epsilon_{ij} \bm{v}_{j}$, 
where $\epsilon_{ij} = -1$ for $i > j$ and $\epsilon_{ij} = 1$ for $i \leq j$. 

Assume that the minimal deviation angle is not equal to zero. 
In other words, $\bm{u} \neq \bm{z}_{i}$. 
Then for any $i$, it does not hold that $\bm{z}_{i} \cdot (\epsilon_{ij} \bm{v}_{j}) \geq 0$ for all $j$. 
Hence $\bm{z}_{1} \cdot \bm{v}_{1} < 0$ or $\bm{z}_{1} \cdot \bm{v}_{n} < 0$. 
We may assume that $\bm{z}_{1} \cdot \bm{v}_{1} < 0$. 
Then inductively $\bm{z}_{i} \cdot \bm{v}_{i} < 0$ for any $i$. 
On the other hand, 
$\sum_{i=1}^{n} \bm{z}_{i} \cdot \bm{v}_{i} = \sum_{i=1}^{n} \|\bm{v}_{i}\|^{2} > 0$. 
This is a contradiction. 
\end{proof}

\begin{conj}
The minimal deviation angle is equal to zero for any $N$. 
\end{conj}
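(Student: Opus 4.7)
The plan is to reduce the conjecture to a finite combinatorial optimization over sign choices. As in the setup of Section~\ref{sec:deviation}, let $\bm{v}_{1}, \dots, \bm{v}_{n}$ denote the edge vectors per period (discarding loops, which contribute zero vectors and play no role), with weights $w_{1}, \dots, w_{n} > 0$. For each sign vector $\epsilon = (\epsilon_{1}, \dots, \epsilon_{n}) \in \{\pm 1\}^{n}$, set $\bm{z}_{\epsilon} = \sum_{i=1}^{n} w_{i} \epsilon_{i} \bm{v}_{i}$. The directions $\bm{u}$ producing the sign pattern $\epsilon$ in the definition of $\bm{z}(\bm{u})$ form a closed convex cone $C_{\epsilon} = \{\bm{u} \in \bbR^{N} : \epsilon_{i} (\bm{u} \cdot \bm{v}_{i}) \geq 0 \text{ for all } i\}$, and these cones cover $\bbR^{N}$. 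The conjecture amounts to finding some $\epsilon$ with $\bm{z}_{\epsilon}$ in the interior of $C_{\epsilon}$: then $\bm{u} = \bm{z}_{\epsilon}/\|\bm{z}_{\epsilon}\|$ reproduces the sign pattern $\epsilon$, making $\bm{z}(\bm{u})$ parallel to $\bm{u}$ and the deviation angle zero.

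The key step is to choose $\epsilon^{*}$ maximizing $\|\bm{z}_{\epsilon}\|^{2}$ over the $2^{n}$ sign vectors; such a maximizer exists because the set is finite. For any index $i$, flipping $\epsilon_{i}^{*}$ gives the identity
\[
\|\bm{z}_{\epsilon^{*}} - 2 w_{i} \epsilon_{i}^{*} \bm{v}_{i}\|^{2} - \|\bm{z}_{\epsilon^{*}}\|^{2} = -4 w_{i} \epsilon_{i}^{*} (\bm{z}_{\epsilon^{*}} \cdot \bm{v}_{i}) + 4 w_{i}^{2} \|\bm{v}_{i}\|^{2},
\]
and the maximality of $\epsilon^{*}$ forces this difference to be at most zero. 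Hence $\epsilon_{i}^{*} (\bm{z}_{\epsilon^{*}} \cdot \bm{v}_{i}) \geq w_{i} \|\bm{v}_{i}\|^{2} > 0$ for every $i$, which places $\bm{z}_{\epsilon^{*}}$ strictly in the interior of $C_{\epsilon^{*}}$; in particular $\bm{z}_{\epsilon^{*}} \neq 0$ (the sign pattern $\epsilon_{j} = \operatorname{sign}(\bm{v}_{1} \cdot \bm{v}_{j})$ already yields $\bm{z}_{\epsilon} \cdot \bm{v}_{1} > 0$, so the maximum value is positive). Setting $\bm{u} = \bm{z}_{\epsilon^{*}}/\|\bm{z}_{\epsilon^{*}}\|$ then gives $\operatorname{sign}(\bm{u} \cdot \bm{v}_{i}) = \epsilon_{i}^{*}$ for all $i$, so $\bm{z}(\bm{u}) = \bm{z}_{\epsilon^{*}}$ is a positive scalar multiple of $\bm{u}$ and the deviation angle at $\bm{u}$ vanishes.

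The only anticipated subtlety is the sign indeterminacy in the definition of $\bm{z}(\bm{u})$ when $\bm{u} \cdot \bm{v}_{i} = 0$ for some $i$, but the strict inequality above guarantees that the $\bm{u}$ constructed here lies off every hyperplane $\bm{v}_{i}^{\perp}$, so $\bm{z}(\bm{u})$ is unambiguously defined at this point. The argument also admits a clean geometric reading: $F(\bm{u}) = \sum_{i} w_{i} |\bm{u} \cdot \bm{v}_{i}|$ is the support function of the zonotope $K = \sum_{i} [-w_{i} \bm{v}_{i}, w_{i} \bm{v}_{i}]$, and maximizing $F$ on $S^{N-1}$ locates a direction in which $K$ extends farthest, whose outward normal at the supporting vertex is exactly the $\bm{u}$ above. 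This viewpoint explains why the same proof works uniformly in every dimension, bypassing the cyclic-ordering argument used for $N = 2$.
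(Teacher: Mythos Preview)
Your argument is correct and, notably, it settles what the paper leaves as an open conjecture: the paper proves only the case $N=2$ (the theorem immediately preceding the conjecture) and then states the general case without proof. Your maximization trick---choosing $\epsilon^{*}$ to maximize $\|\bm{z}_{\epsilon}\|^{2}$ over all sign patterns and observing that single flips force $\epsilon_{i}^{*}(\bm{z}_{\epsilon^{*}}\cdot\bm{v}_{i})\geq w_{i}\|\bm{v}_{i}\|^{2}>0$---cleanly puts $\bm{z}_{\epsilon^{*}}$ in the open cone $C_{\epsilon^{*}}$, so that $\bm{u}=\bm{z}_{\epsilon^{*}}/\|\bm{z}_{\epsilon^{*}}\|$ reproduces the sign pattern and realizes deviation angle zero. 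The auxiliary nonvanishing of $\bm{z}_{\epsilon^{*}}$ actually follows directly from summing the key inequality: $\|\bm{z}_{\epsilon^{*}}\|^{2}=\sum_{i}w_{i}\epsilon_{i}^{*}(\bm{z}_{\epsilon^{*}}\cdot\bm{v}_{i})\geq\sum_{i}w_{i}^{2}\|\bm{v}_{i}\|^{2}>0$, so your separate sign-pattern argument for this is not even needed.

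By contrast, the paper's $N=2$ proof orders the edge directions cyclically, sets $\bm{z}_{i}=-\bm{v}_{1}-\cdots-\bm{v}_{i-1}+\bm{v}_{i}+\cdots+\bm{v}_{n}$, and derives a contradiction from assuming $\bm{z}_{i}\cdot\bm{v}_{i}<0$ for all $i$ via the identity $\sum_{i}\bm{z}_{i}\cdot\bm{v}_{i}=\sum_{i}\|\bm{v}_{i}\|^{2}$. That argument leans on the linear ordering of directions available only in the plane, which is precisely why the authors could not push it further. Your approach is dimension-free, shorter, and the zonotope interpretation you sketch (maximizing the support function of $\sum_{i}[-w_{i}\bm{v}_{i},w_{i}\bm{v}_{i}]$ on the sphere) gives a pleasant geometric explanation for why it works uniformly.
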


\section{Making new nets by blending polymers}
\label{sec:blend}

In this section, we discuss building new nets by blending several types of triblock copolymers.
We can obtain nets with high local anisotropy such as the kagome lattice.

We note the terminology. 
We use several simple nets with customary names such as the kagome lattice.

\subsection{Blending several types of block copolymers}
Here we consider a mathematical model of constructing new nets by blending several types of triblock copolymers of the same length.
We start with the case where the triblock copolymers are of types ABA and ABC.
Monomers A, B, and C are incompatible; moreover, monomers A and C form hard domains of sphere structure, individually.
In this subsection, we assume that the weight of each edge is equal.

First, we deal with 2-dimensional nets.
We assume that the hard domains of monomer A and monomer C form the triangular lattice.
In Figure \ref{fig:networks-blending}, red vertices are domains of A monomers and orange vertices are those of C monomers.

\begin{ex}\label{ex:2dblend}
Suppose that the subnet formed by ABA copolymers is the kagome lattice as in Figure \ref{fig:networks-blending}(a).
Then we can see that the ratio of the number of triblock copolymers is
ABA : ABC$=1:1$.
Next, suppose that the subnet of ABA copolymers is the hexagonal lattice as in Figure \ref{fig:networks-blending}(b).
Then we can see that the ratio of the number of triblock copolymers is
ABA : ABC$=1:2$.
\end{ex}

The kagome lattice and the hexagonal lattice are harmonic subnets of the triangular lattice.
See Theorem \ref{thm:subnet_2dim}.

\begin{figure}[htbp]
\begin{minipage}{0.48\hsize}
\centering
\includegraphics[width=0.95\textwidth]{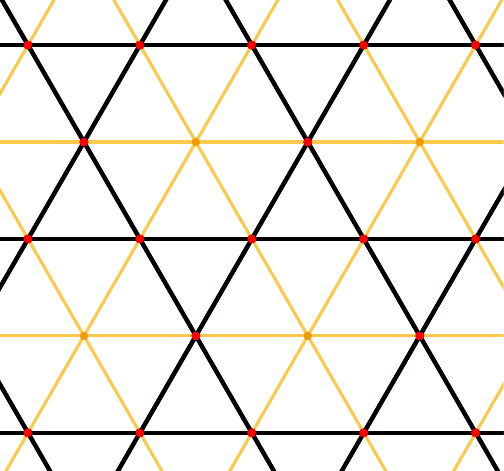}\\
(a)
\end{minipage}
\hfill
\begin{minipage}{0.48\hsize}
\centering
\includegraphics[width=0.95\textwidth]{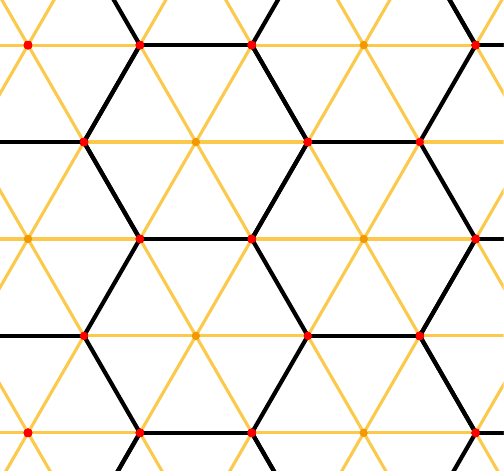}\\
(b)
\end{minipage}

\caption{Example of 2-dimensional nets.
(a) and (b) illustrate the kagome lattice and the hexagonal lattice made by blending, respectively. 
Red vertices are monomer A domains and orange vertices are monomer C domains.
Black edges indicate bridges of ABA copolymers and orange edges bridges of ABC copolymers.
The kagome lattice is a mixture of ABA and ABC copolymers in a $1:1$ ratio, while the hexagonal lattice is in a $1:2$ ratio.}
\label{fig:networks-blending}
\end{figure}

Next, we consider 3-dimensional nets.
Here, we discuss the blend of three types of triblock copolymers of types ABA, ABC, and CBC.
We assume that the hard domains of monomer A and monomer C form the body-centered cubic (bcc) lattice.
In Figure 
\ref{fig:dia-blending}, red vertices are domains of A monomers and orange vertices are those of C monomers.

\begin{ex}\label{ex:3dblend}
Suppose that the subnets of ABA copolymers CBC copolymers form the 2dia lattice as in Figure~\ref{fig:dia-blending}
\cite{RCSR}.
In this case, the subnet of ABC copolymers also forms the 2dia net.
Then, the ratio of the number of triblock copolymers is $\text{ABA} : \text{ABC} : \text{CBC} = 1:2:1$.
\end{ex}

The 2dia lattice is an example of harmonic subnets of the bcc lattice.
See Theorem~\ref{thm:subnet_3dim}.

\begin{figure}[htbp]
\begin{minipage}{0.48\hsize}
\centering
\includegraphics[width=0.9\textwidth]{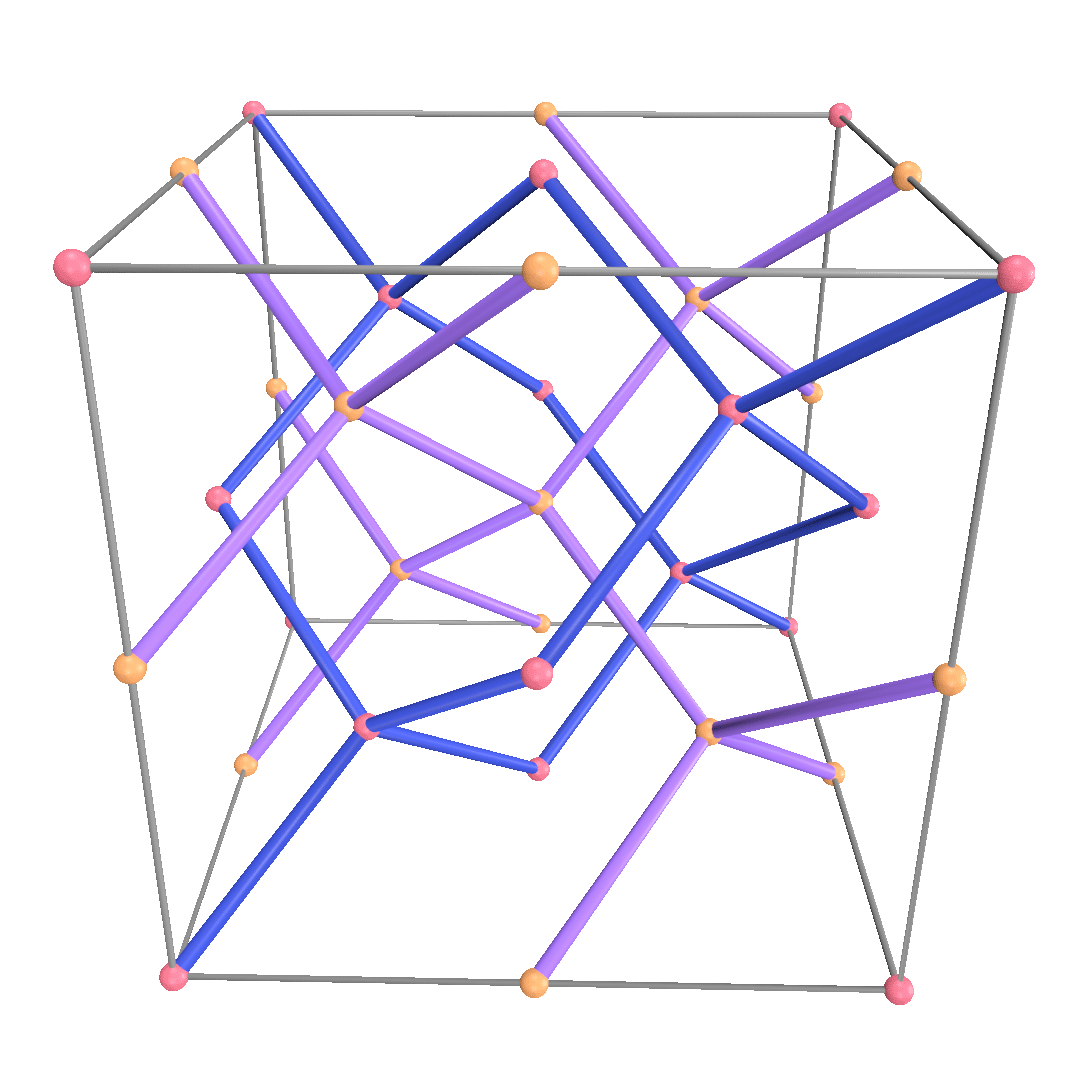}\\
(a)
\end{minipage}
\hfill
\begin{minipage}{0.48\hsize}
\centering
\includegraphics[width=0.9\textwidth]{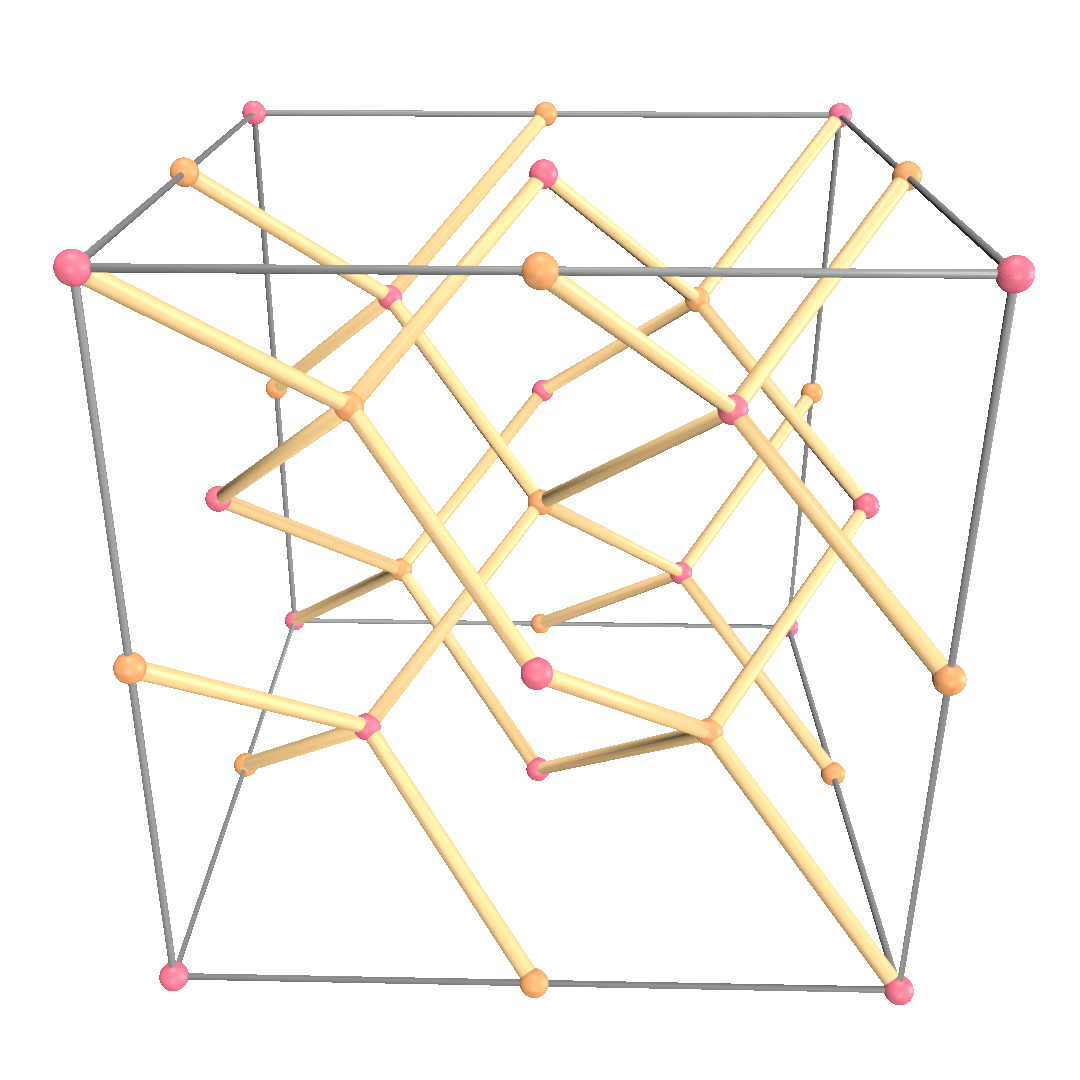}\\
(b)
\end{minipage}

\caption{Example of 3-dimensional nets.
(a) The 2dia lattice is made of the subsets of ABA copolymers and CBC copolymers included in the bcc lattice.
Note that each component is the dia lattice. (b) The subnet of ABC copolymers contains the edges connecting the two components of the 2dia lattice.}
\label{fig:dia-blending}
\end{figure}

\begin{ques}
    Can the subnets in Examples~\ref{ex:2dblend} and \ref{ex:3dblend} be created by simulation or actual synthesis?
\end{ques}

\subsection{Characterization of harmonic subnets}
In this subsection, we characterize harmonic subnets of the triangular lattice and the bcc lattice.

We use the following assumption for our model.

\begin{as}
Assume that a net is composed of triblock copolymers of types ABA, ABC, and CBC.
When the monomer A domain is much more robust than the monomer C domain, we can assume that the stress chain consists of bridges of ABA triblock copolymers.  
\end{as}

Based on this assumption, we deal with subnets of ABA-type triblock copolymers.
We consider the case where this subnet constitutes a stress chain and is also harmonic when considering only this subnet.

Harmonicity and absence of type CBC induce strong restrictions on the network as follows.

\begin{thm}[2-dimensional case]\label{thm:subnet_2dim}
Consider the standard triangular lattice $X$, 
whose edges join the first proximity points. 
Suppose that the vertices of $X$ 
are divided into two types A and C, 
and no pair of vertices of type C are adjacent. 
Let $Y$ be the net consisting of the vertices of type A and the edges between them. 
Suppose that the net $Y$ is harmonic. 
Then $Y$ is the hexagonal lattice or the kagome lattice. 
\end{thm}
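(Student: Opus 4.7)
My plan is to reduce the theorem to a finite local problem at each A-vertex and then show the local data propagates uniquely to a global pattern. The argument splits naturally into a local classification of C-neighborhoods, a rigidity/propagation lemma, and a trichotomy into ``triple'', ``pair'', and ``trivial'' cases.

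Fix an A-vertex $v$; in the standard realization of $X$ its six triangular-lattice neighbors lie at $\omega^k$ for $k=0,\dots,5$, where $\omega = e^{i\pi/3}$. Let $S(v) \subset \{0,\dots,5\}$ record which of these are of type C. Since $\sum_{k=0}^{5}\omega^k = 0$, harmonicity of $Y$ at $v$ is equivalent to $\sum_{k \in S(v)}\omega^k = 0$, and the hypothesis that no two C-vertices are adjacent forces $S(v)$ to be an independent set in the 6-cycle of neighbor directions. A direct enumeration shows that the admissible $S(v)$ are exactly $\emptyset$, the three antipodal pairs $\{k,k+3\}$, and the two alternating triples $\{0,2,4\}$ and $\{1,3,5\}$.

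Next I would show that the pattern at a single vertex forces the pattern everywhere. If some $v$ has $S(v) = \{0,2,4\}$, then an A-neighbor $u=\omega^1$ already sees two C-vertices ($\omega^0$ and $\omega^2$) in two non-antipodal directions from itself, and the classification admits only one compatible $S(u)$, namely the alternating triple containing both; this forces a new C-vertex at $2\omega^1$. Iterating to the other A-neighbors of $v$ and then along paths in $Y$, the set of C-vertices is forced to coincide with a single coset of the index-$3$ sublattice $\{(m,n) : m \equiv n \pmod 3\}$, whence $Y$ is the hexagonal lattice. If instead some $v$ has $S(v) = \{0,3\}$ (the antipodal case), each of $v$'s four A-neighbors has exactly one inner C-neighbor, the classification forces $S$ at that A-neighbor to be the unique antipodal pair through it, and a new C-vertex is placed ``diametrically across''. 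Iterating, $C$ is forced to be a single coset of the index-$4$ sublattice $2\bbZ^{2}$ in lattice coordinates, and $Y$ is the kagome lattice. The remaining case $S(v)=\emptyset$ for every A-vertex gives $Y = X$, which the theorem implicitly excludes by regarding the A/C partition as nontrivial.

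The main obstacle is the global consistency of the propagation. One has to verify that the choice of alternating triple (or of antipodal axis) forced at one vertex is compatible with the choices forced at every other vertex, so that no contradiction arises far from the starting point. Both cases follow from the same rigidity principle: once two inner C-neighbors of an A-vertex are known (triple case), or even a single one (pair case), the local classification leaves at most one admissible $S$, so propagation along edges of $Y$ is deterministic. A secondary point is ruling out ``mixed'' configurations in which some A-vertices carry the triple pattern and others the antipodal pattern; this is handled by the same local determinism, since an A-vertex adjacent to a triple-patterned A-vertex is itself forced into the triple case, and symmetrically for pairs.
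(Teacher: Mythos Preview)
Your proposal is correct and takes essentially the same approach as the paper: a local classification of the possible C-neighbor sets at an A-vertex (using harmonicity together with the independence constraint coming from non-adjacency of C-vertices), followed by deterministic propagation of the local pattern along $Y$. The paper's argument is terser---it relies on a figure to list the three local possibilities and asserts in one sentence that the configuration at one A-vertex determines those at its neighbors---while you spell out the enumeration, the rigidity mechanism, and the trivial case $S\equiv\emptyset$ (which the paper leaves implicit) in more detail.
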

\begin{proof}
Since $Y$ is harmonic, 
the edge configuration around a vertex of type A has three possibilities 
as indicated in Figure~\ref{fig:2-dim_harmonic_2poly}. 
The first case is excluded 
since no pair of vertices of type C are adjacent. 
If a vertex of type A has precisely three adjacent vertices of type A, 
the configurations around the adjacent vertices are determined. 
By continuing this argument, we obtain that $Y$ is the hexagonal lattice. 
Similarly, if a vertex of type A has precisely four adjacent vertices of type A, 
then $Y$ is the kagome lattice. 
\end{proof}

\begin{figure}[htbp]
    \begin{minipage}[b]{0.32\textwidth}
    \centering
    \includegraphics[width=0.70\textwidth,page=1]{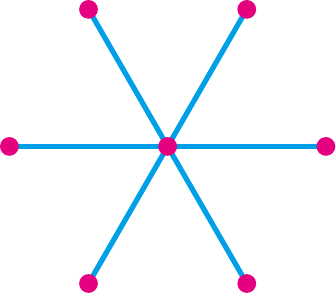}\\
    (a)
    \end{minipage}
    \begin{minipage}[b]{0.32\textwidth}
    \centering
    \includegraphics[width=0.70\textwidth,page=3]{./figs/edge_configurations}\\
    (b)
    \end{minipage}
    \begin{minipage}[b]{0.32\textwidth}
    \centering
    \includegraphics[width=0.70\textwidth,page=2]{./figs/edge_configurations}\\
    (c)
    \end{minipage}
    \caption{The neighbor of each vertex is either (a), (b), or, (c), under the assumption of Theorem~\ref{thm:subnet_2dim}.
    Here the blue edges represent the bridges of ABA triblock copolymers.}
    \label{fig:2-dim_harmonic_2poly}
\end{figure}

The same argument shows the following. 

\begin{thm}[3-dimensional case]\label{thm:subnet_3dim}
Consider the standard body-centered cubic lattice $X$, 
whose edges join the first proximity points. 
Suppose that the vertices of $X$ 
are divided into two types A and C, 
and no pair of vertices of type C are adjacent. 
Let $Y$ be the net consisting of the vertices of type A and the edges between them. 
Suppose that $Y$ is harmonic. 
Then $Y$ is one of the following lattices: 
\begin{enumerate}[label=\textup{(\roman*)}]
\item the bcc lattice of twice the size, 
\item the complement of the bcc lattice of twice the size, and 
\item a lattice obtained by gluing blocks (a) and (b) in Figure~\ref{fig:3-dim_harmonic_2poly}. 
\end{enumerate}
In the last case, 
if there is a block (a), then four adjacent, and thus planarly placed, blocks (a) are determined. 
For any sequence of (a) and (b), 
there is a unique desired net. 
\end{thm}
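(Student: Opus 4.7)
My plan is to mirror the 2D argument: enumerate admissible local configurations at a type A vertex and propagate them to the global structure. At a type A vertex $v$, the set $S$ of its type A bcc-neighbors, viewed as a subset of the eight directions $(\pm 1/2, \pm 1/2, \pm 1/2)$, must sum to zero by harmonicity of $Y$ at $v$, and since the full eight-element set also sums to zero, the complementary type C neighbor set $T$ sums to zero as well. I first enumerate the zero-sum subsets of this 8-set: the empty set, the four antipodal pairs (size 2), the eight 4-element subsets (two inscribed regular tetrahedra and six unions of two antipodal pairs), the complements of the antipodal pairs (size 6), and the full set.

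The type C independence constraint provides the key additional control. Since adjacency in $X$ occurs only between the bipartition parts $L_{0} = \bbZ^{3}$ and $L_{1} = \bbZ^{3} + (1/2, 1/2, 1/2)$, requiring no two type C vertices to be adjacent is equivalent to requiring every type C vertex to have all eight of its bcc-neighbors of type A. Thus each type C neighbor $u$ of $v$ forces the seven other vertices of $v$'s bipartition part (those in the unit cube surrounding $u$, besides $v$ itself) to be of type A. Combining this with the zero-sum restriction substantially narrows the admissible local configurations and sets up a rigid propagation mechanism.

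Next, I propagate the constraints vertex by vertex. For each admissible local configuration at $v$, the forced type A vertices determine, through harmonicity, the admissible configurations at their own type A neighbors. The analysis splits into: (1) uniform extremal cases which yield the doubled bcc lattice (case (i)) and its complement (case (ii)); and (2) intermediate cases in which the global net is determined by a free choice, at each bcc ``layer,'' between the two block configurations (a) and (b) of Figure~\ref{fig:3-dim_harmonic_2poly}. I then verify that selecting block (a) at some position forces the four coplanar adjacent positions to also be of type (a) (the planar rigidity stated in the theorem), and that any transverse sequence of (a)/(b) blocks produces a unique consistent harmonic subnet.

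The main obstacle is the combinatorial case analysis, which is genuinely more intricate than in 2D: the 4-element zero-sum subsets split into two geometrically distinct classes (tetrahedra vs.\ double antipodal pairs) whose propagation behavior differs, and the constraints transmitted through shared unit cubes (each containing 8 vertices of $L_{0}$ and 1 of $L_{1}$) must be tracked carefully to rule out inconsistent configurations and to confirm the rigidity/freedom dichotomy that underlies the three cases.
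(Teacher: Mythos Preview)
Your proposal is correct and follows exactly the approach the paper intends: the paper's proof consists solely of the sentence ``The same argument shows the following,'' referring to the 2-dimensional proof of Theorem~\ref{thm:subnet_2dim}, and your plan is precisely that argument carried out in three dimensions---enumerate the zero-sum (i.e., harmonic) subsets of the eight neighbor directions at a type~A vertex, use the no-adjacent-C condition to force all eight neighbors of every type~C vertex to be type~A, and then propagate. Your enumeration of the zero-sum subsets of $\{(\pm 1,\pm 1,\pm 1)\}$ and your identification of the bipartite structure of the bcc lattice are correct, and the split you describe between the extremal configurations (yielding cases (i)--(ii)) and the intermediate tetrahedral/double-pair configurations (yielding the layered block structure of case (iii)) is exactly what the figure-based blocks encode.
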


\begin{figure}[htbp]
    \begin{minipage}[b]{0.20\textwidth}
    \centering
    \includegraphics[width=0.98\textwidth]{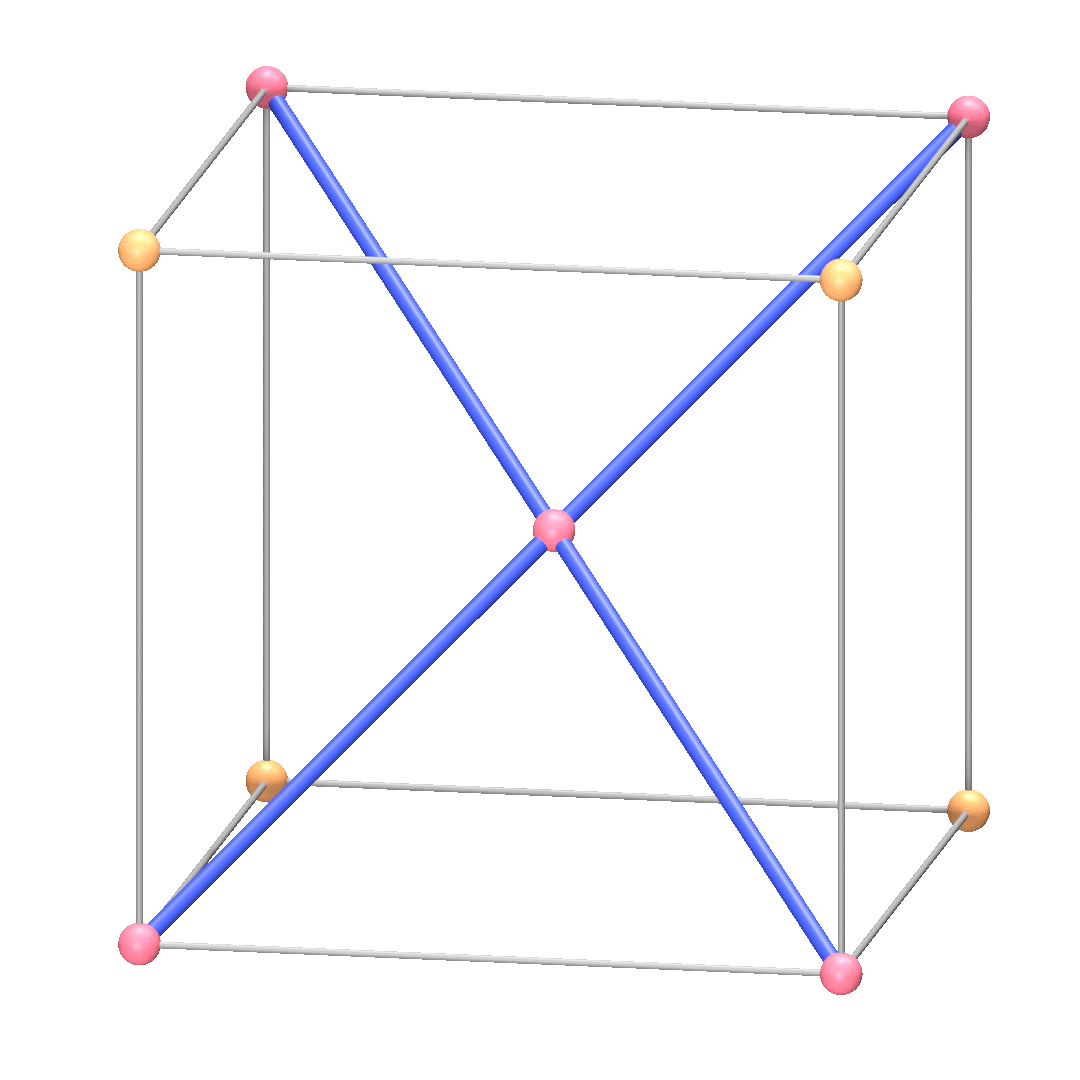}\\
    (a)\\
    
    \includegraphics[width=0.98\textwidth]{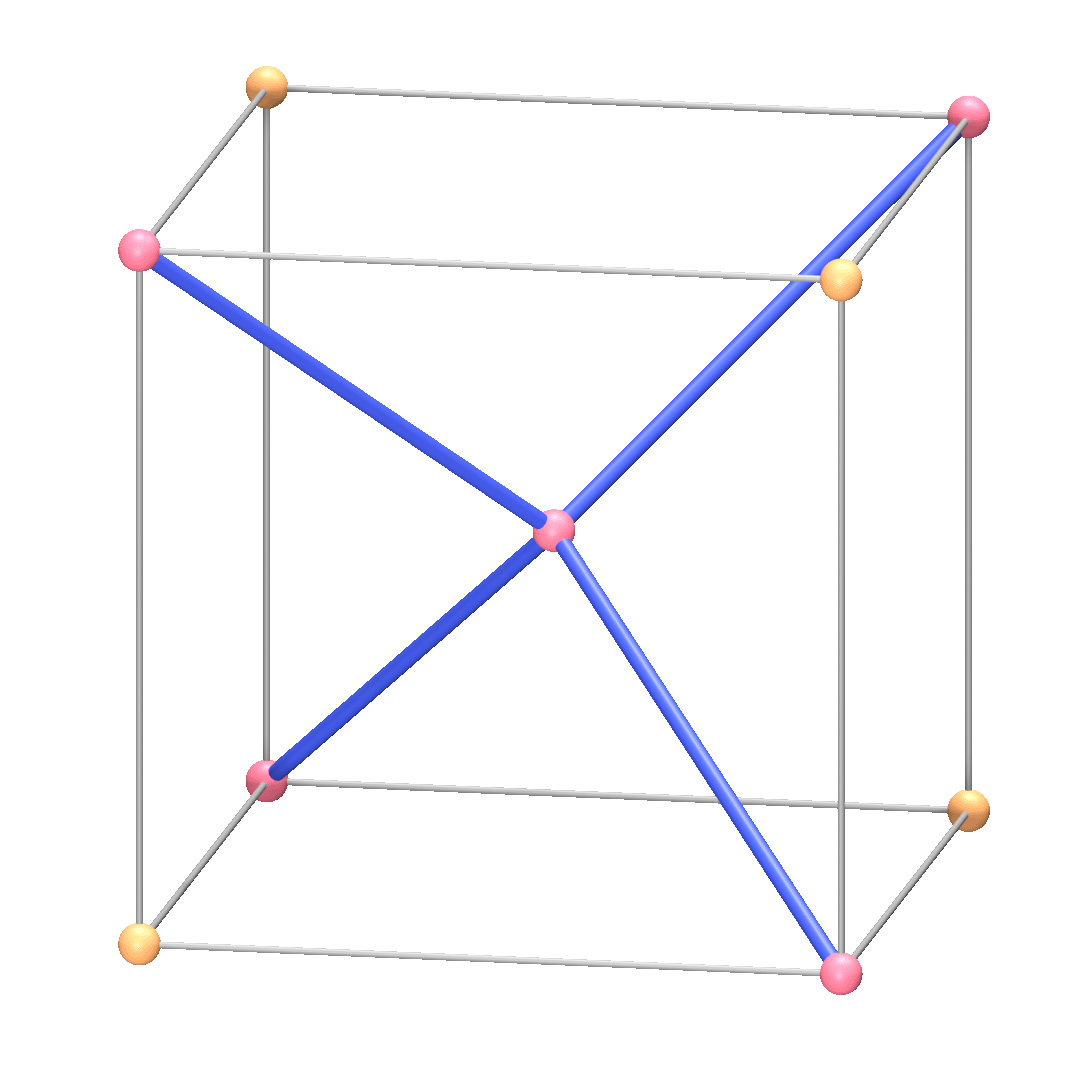}\\
    (b)
    \end{minipage}
    \begin{minipage}[b]{0.39\textwidth}
    \centering
    \includegraphics[width=1.0\textwidth]{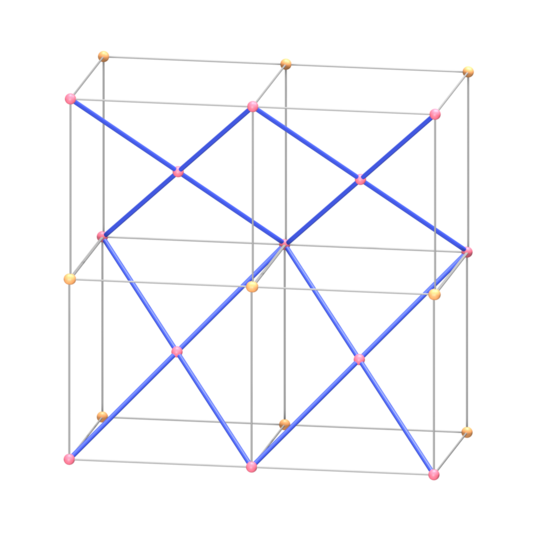}\\
    (c)
    \end{minipage}
    \begin{minipage}[b]{0.39\textwidth}
    \centering
    \includegraphics[width=1.0\textwidth]{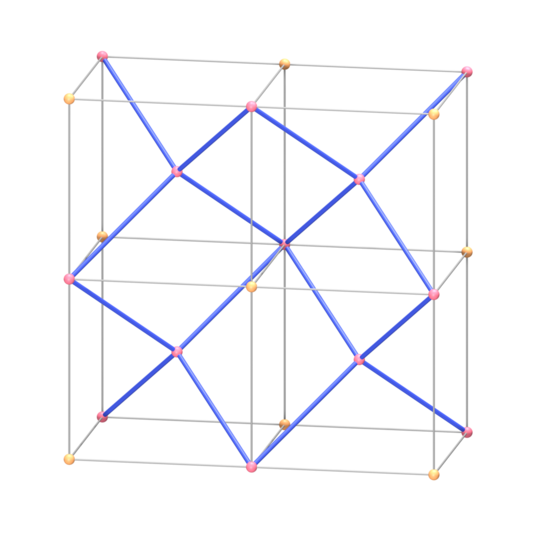}\\
    (d)
    \end{minipage}
    \caption{(a) and (b) are basic building blocks. Only the edges corresponding to ABA triblock copolymers are shown here. (c) and (d) are nets consisting of four building blocks shown in (a) and (b), respectively. In general, any net that combines (a) and (b) (and their mirror images)
    so that the vertex types match is allowed.  \label{fig:3-dim_harmonic_2poly}}
\end{figure}

\section{Simulation and comparison of several nets}
\label{sec:simulation}

In this section, we simulate the extension of several nets and compare their properties 
reflecting the anisotropic nature.

\subsection{Differences between the net extension model in this paper and the Kodama-Yoshida model}

We describe the mathematical model of net extension used in this paper.
This model is a slightly modified version of the Kodama-Yoshida model, called a \emph{slow deformation}, introduced in~\cite{KY}.
By applying a suitable rotation to a net in $\mathbb{R}^N$, an extension of the net can be represented by a transformation by $A(\lambda) = \mathrm{diag}(\lambda, \lambda^{-1/(N-1)}, \dots, \lambda^{-1/(N -1)}) \in \mathrm{SL}(N, \mathbb{R})$.

We first describe the setting of nets for the model.
We use a standard realization of each net as an initial state.
Especially, we assume a `normalized' one, to be introduced in the next subsection, for comparing properties affected by the combinatorial structure of each net.
We also assume that all edge weights of a net are identical.
To simplify the situation, the weights of the edges that arise after vertex splits are the same as those given in the initial state, independent of the loop chains adjacent to the split vertex.

Next, we discuss modeling the behavior of the expansion of a net.
The strain at each vertex increases with the extension of a net, and by considering the maximum eigenvalue of the local tension tensor at a vertex as the strain, the splitting of the structure (e.g., the hard domain of the TPE) can be modeled.
Note that, in the original slow deformation, the vertices of the same strain split simultaneously, but in this paper, we assume that only one vertex split at a time.
Since we compare the properties of normalized nets, we assume that the threshold of the maximum eigenvalue on which the vertex splitting is based depends on the number of vertices in the unit cell of each normalized net in the initial state.
If the energy of a net per period is fixed, the average eigenvalue per vertex is approximately inversely proportional to the number of vertices.
Thus, for example, when comparing two normalized nets with $n$ and $m$ vertices in the unit cell, if the threshold for vertex splitting in the former net is $s$, the threshold for the latter is $s \times n/m$.
Suppose the edges adjacent to the split vertex are orthogonal to the divided direction given by the eigenvector for the maximal eigenvalue.
In that case, we assume that two edges derived from the original edge are adjacent to each vertex after the split.
The Kodama-Yoshida model considers contractions of vertices, but for simplicity, we do not consider them in this paper.
Furthermore, since the degree of the vertices of the initial state of each net considered in this paper is all equal, we assume that the threshold for determining vertex splitting is independent of the degree.

\subsection{Normalized nets}

This subsection introduces a method of normalization to compare the behavior of extended periodic realized nets.
It is difficult to compare the properties of periodic realizations of two given nets because their energies and tension tensors may differ even if the nets have the same shape.
Therefore, it is necessary to select a suitable realization among the periodic realizations and use it for comparison.
We discuss the `normalization' of nets to solve these problems.

First, let us precisely define two periodic realizations to be `similar'.
Let $X$ be a net with a vertex set $V$, $w_1$ and $w_2$ be weight functions on $X$, and $\Phi_1$ and $\Phi_2$ be periodic realizations of $(X, w_1)$ and $(X, w_2)$, respectively.
We say that the realizations $\Phi_1$ and $\Phi_2$ are \emph{similar} if there exists a similarity transformation $f$ (i.e., the composition of an isometry and a uniform scaling) of $\bbR ^N$ and a positive number $k$ such that $f \circ \Phi_1 = \Phi_2$ holds on $V$, and $w_2(e) = k w_1(e)$ for each edge $e$.
Note that the above definition requires using the same period for two similar realizations.

For two similar periodic realizations, the tension tensor of one realization is conjugate to the matrix that is a constant multiple of the other one by an orthogonal matrix.
Suppose that a periodic realization $\Phi_1$ is similar to one $\Phi_2$ by a similarity transformation $f$.
Since $f$ is the composition of an isometry and a uniform scaling, it can be represented by a specific affine transformation $f(\bm{x}) = s R\bm{x} + \bm{c}$ for any $\bm{x} \in \bbR^N$, where $R$ is an orthogonal matrix, $\bm{c}$ is a vector of $\bbR^N$, and $s$ is a scale factor.
A Euclidean translation does not change the vector corresponding to each edge.
Additionally, since $\bm{x}^{\otimes 2} = \bm{x} \cdot \bm{x}^\top$, for the transformed vector $sR\bm{x}$, $(sR\bm{x})^{\otimes 2} = s^2 R \bm{x} \cdot (R \bm{x})^\top$.
Hence, for each vertex $v$, we have $\mathcal{T}_2(v) = k s^2 R \mathcal{T}_1(v) R^\top$, where $\mathcal{T}_i(v)$ is the local tension tensor at $v$ for $\Phi_i$, and $k$ is a positive number adjusting weights.
So, the global tension tensor $\mathcal{T}(X, \Phi_2)$ is also equal to $k s^2 R \mathcal{T}(X, \Phi_1) R^\top$.

For a given net $(X, \Phi)$, there is a unique periodic realization $\Phi_0$ similar to $\Phi$ such that the energy and the volume per period are equal to $1$.
We call $\Phi_0$ the \emph{normalized} (periodic) realization of $\Phi$, and the pair $(X, \Phi_0)$ the \emph{normalized} (periodic) net of the given net.
To normalize, we first scale and set the volume to $1$, then adjust the weights to set the energy to $1$.
Note that the normalized realizations depend on the period homomorphisms of nets, even if those are of the same periodic realization (as a mapping).

For given two normalized nets $(X, \Phi_1)$, $(X, \Phi_2)$ and two actions $L_1$, $L_2$ on $X$, the tension tensor of the normalized realization $\Phi_i$ for each of the actions is also obtained by multiplying a constant with the other.
However, the given periodic realization must be injective, and the image of each actions by the period homomorphism need to span $\bbR^N$.
Suppose a periodic realization and its period homomorphism satisfy the two conditions.
In that case, the `period' on the image of $X$ under the periodic realization conversely derives a new $\mathbb{Z}^N$-action on $X$.
We can see by using the idea of constructing a Wigner-Seitz cell~\cite{WS} that we have a $\mathbb{Z}^N$-action $L_0$ that contains the whole period of $X$ as a subgroup.
So, $L_1$ and $L_2$ are finite index subgroups of $L_0$.
Hence, it holds that $\mathcal{T}(X, \Phi_2) = [L_0 : L_2] / [L_0 : L_1] \mathcal{T}(X, \Phi_1)$, where $[L_0 : L_i]$ is the index.

We have the following proposition.
\begin{prop}
    Let $\Phi$ be a periodic realization of a net $X$ in $\bbR^N$.
    The tension tensor of the normalized periodic realization of $\Phi$ is unique up to conjugation by an orthogonal matrix.
    In particular, if $\Phi$ is a standard realization, then the tension tensor of the normalized realization is equal to $\frac{1}{N} I_N$, where $I_N$ is the identity matrix.
\end{prop}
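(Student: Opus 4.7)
The plan is to leverage the transformation formula for tension tensors under similarity of periodic realizations that is derived in the paragraph preceding the proposition: if $\Phi_2 = f \circ \Phi_1$ for a similarity $f(\bm{x}) = sR\bm{x} + \bm{c}$ together with a weight rescaling by $k > 0$, then
\[
\calT(X, \Phi_2) = k s^{2} R \, \calT(X, \Phi_1) \, R^{\tp}.
\]
Given any two normalized realizations $\Phi_0, \Phi_0'$ of $\Phi$, both are similar to $\Phi$ and hence similar to each other, so this formula applies with some parameters $s, k, R$. The strategy is then to use the two normalization conditions (unit covolume and unit energy) to force $s = 1$ and $k = 1$, leaving only the orthogonal conjugation.

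For the scaling factor $s$, I would note that the linear part of $f$ sends the period lattice $\rho(L)$ to $sR\, \rho(L)$, whose covolume is $s^{N}$ times the original. Since both normalized realizations have covolume $1$, we obtain $s^{N} = 1$, hence $s = 1$. For the weight factor $k$, I would take the trace of the displayed formula and use $\tr(\calT) = \calE$: this yields $\calE(X, \Phi_0') = k s^{2} \calE(X, \Phi_0)$, and since both energies equal $1$ and $s = 1$, we conclude $k = 1$. Therefore $\calT(X, \Phi_0') = R \, \calT(X, \Phi_0) \, R^{\tp}$, proving the first assertion.

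For the second assertion, I would invoke the characterization of standard realizations recalled in Section~\ref{sec:prelim} (from \cite[Theorem 3.3]{KY}): if $\Phi$ is standard, $\calT(X, \Phi) = c I_{N}$ for some $c > 0$. Any realization similar to $\Phi$, and in particular the normalized one $\Phi_0$, then has tension tensor $k s^{2} R (c I_{N}) R^{\tp} = k c s^{2} I_{N}$, again a positive scalar multiple of the identity. The trace of $\calT(X, \Phi_0)$ equals the energy $\calE(X, \Phi_0) = 1$, so the scalar multiplier must equal $1/N$, giving $\calT(X, \Phi_0) = \frac{1}{N} I_{N}$.

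The argument is essentially bookkeeping on how covolume and energy transform under the combined geometric similarity and weight rescaling, and the main (mild) point of care is to track the $s^{N}$ dependence of the covolume versus the $k s^{2}$ dependence of the energy and of the tension tensor; once these are separated, the two normalization constraints exactly pin down the two scalar parameters and the rotational ambiguity is all that survives.
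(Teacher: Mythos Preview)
Your argument is correct and follows exactly the route the paper sets up: the paper does not give a separate proof of this proposition but presents it as a consequence of the preceding discussion, where the formula $\calT(X,\Phi_2)=ks^{2}R\,\calT(X,\Phi_1)\,R^{\tp}$ is derived and the normalization (unit covolume, unit energy) is introduced. Your use of the covolume constraint to force $s=1$ and the energy (trace) constraint to force $k=1$ is precisely the intended reading, and the standard case follows from the same trace computation.
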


\subsection{2-dimensional harmonic subnets}

In this subsection, we simulate extensions of nets obtained by Theorem~\ref{thm:subnet_2dim} and other well-known nets in Figure \ref{fig:2d_nets} and compare their properties.

Figure~\ref{fig:2d_graphs} shows the averages of stress and maximum eigenvalue of each normalized standardly realized net when stretched to equally spaced twelve directions.
The vertices were assumed to split when the maximum eigenvalue reached $5$ ($20/3$ for the kagome lattice).

In the four nets simulated, the triangular, square, and hexagonal lattices were found to have similar properties (see Figures~\ref{fig:2d_graphs}(a--c)).
In particular, triangular and square lattices have almost the same mean maximum eigenvalue.
Hence, by extending, vertex splittings of the two nets occur almost simultaneously.
In addition, the stresses of these nets are also almost the same.
The timing of the first and second vertex splits is different for the hexagonal lattice.
However, the third splitting is almost the same as the other nets.
Interestingly, even after the third splitting of the hexagonal lattice, its stress curve is almost the same as that after the second splitting of the triangular lattice.
Therefore, these three nets have similar properties.
In detail, however, we should note that the hexagonal lattice can store the most energy of these nets.

The kagome lattice has different properties to the rest of the nets simulated (see Figure~\ref{fig:2d_graphs}(d)).
After the first splitting of the kagome lattice, its stress curve is already almost the same as that of each other nets after two splittings.
Hence, it can be seen that the kagome lattice can not store as much as in the others.
However, as for the splitting, the first splitting occurs earlier than the other nets, but the remaining two splits occur later than the others.
Thus, the first splitting causes a significant change in properties, but later changes occur more slowly.

The kagome lattice is also unique for the average permanent strains (see Table~\ref{tab:mean_permanent_strain}).
For the triangular, square, and hexagonal lattices, the triangular lattice is slightly larger, but the values are almost the same after each splitting.
However, the average permanent strain of the kagome lattice is larger than that of the other nets.
The difference is especially large at the first split.
This also shows the effect of the first vertex splitting on properties.

\begin{figure}[htbp]
\begin{minipage}[c]{0.243\linewidth}
\centering
\includegraphics[width=.95\linewidth]{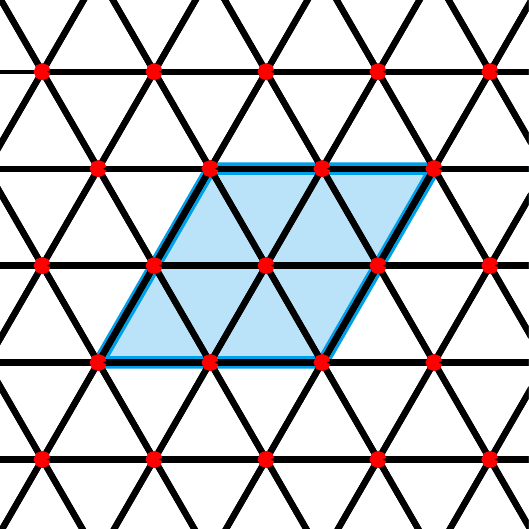}\\
(a)
\end{minipage}
\begin{minipage}[c]{0.243\linewidth}
\centering
\includegraphics[width=.95\linewidth]{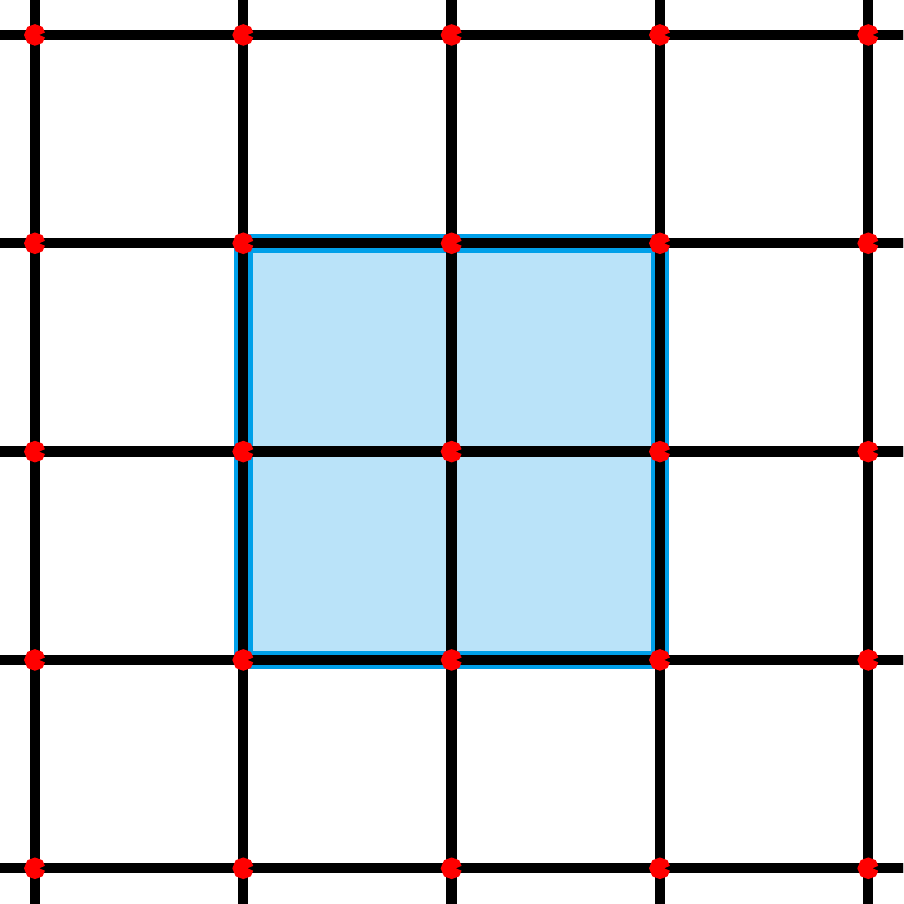}\\
(b)
\end{minipage}
\begin{minipage}[c]{0.243\linewidth}
\centering
\includegraphics[width=.95\linewidth]{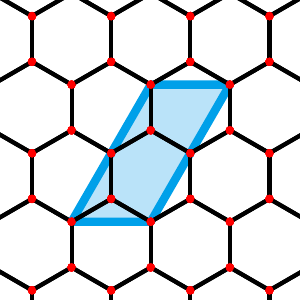}\\
(c)
\end{minipage}
\begin{minipage}[c]{0.243\linewidth}
\centering
\includegraphics[width=.95\linewidth]{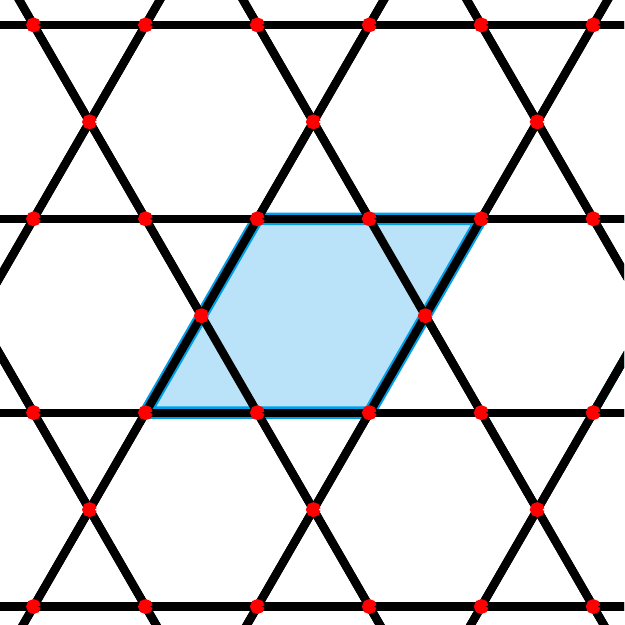}\\
(d)
\end{minipage}
\caption{Well-known 2d-nets and their periods. (a)~The triangular lattice. (b) ~The square lattice. (c)~The hexagonal lattice. (d)~The kagome lattice. \label{fig:2d_nets}}

\end{figure}

\begin{figure}[htbp]
\addtocounter{figure}{1}

\begin{center}
\begin{minipage}[b]{0.49\linewidth}
\centering
\includegraphics[width=\linewidth]{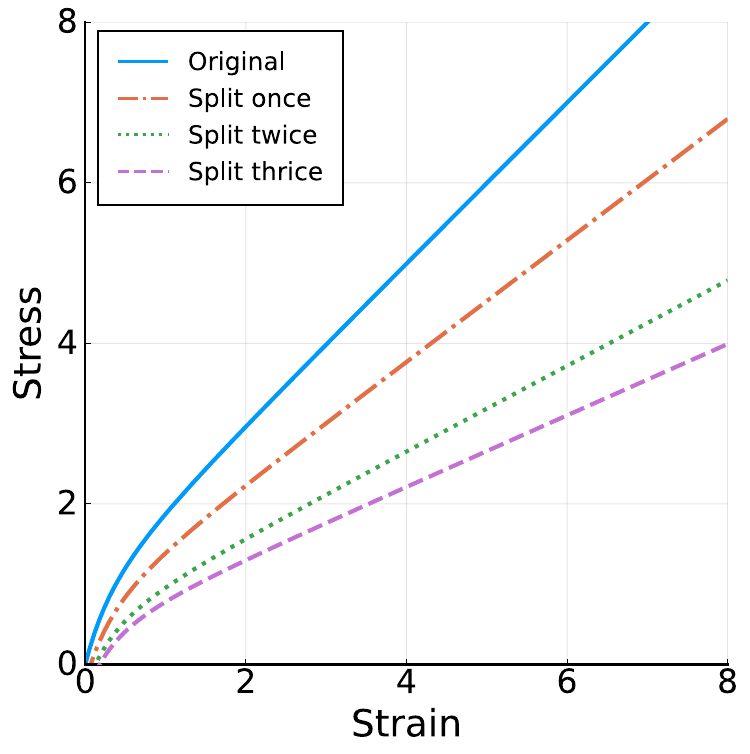}
\end{minipage}
\begin{minipage}[b]{0.01\linewidth}
\end{minipage}
\begin{minipage}[b]{0.49\linewidth}
\centering
\includegraphics[width=\linewidth]{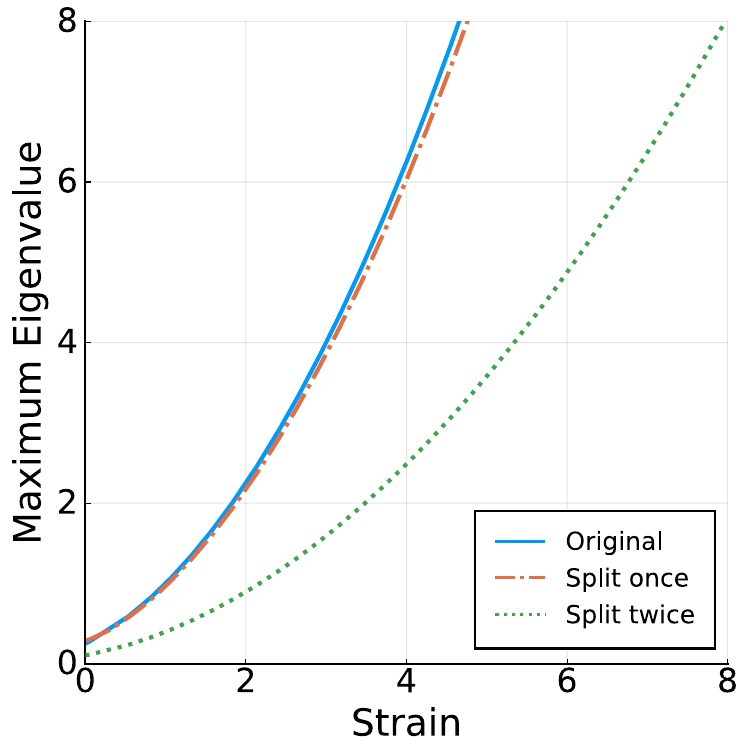}
\end{minipage}

(a) The triangular lattice.
\end{center}

\begin{center}
\begin{minipage}[b]{0.49\linewidth}
\centering
\includegraphics[width=\linewidth]{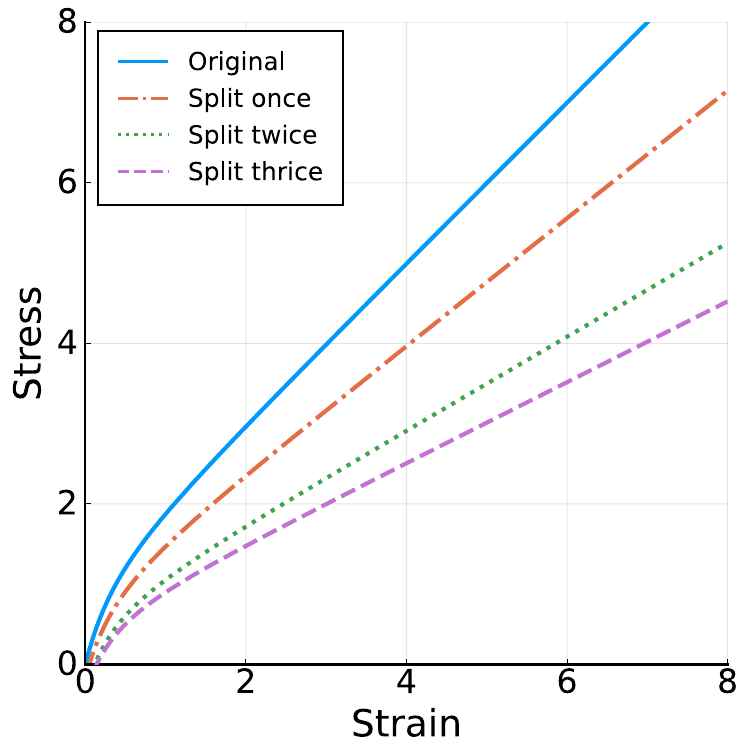}
\end{minipage}
\begin{minipage}[b]{0.01\linewidth}
\end{minipage}
\begin{minipage}[b]{0.49\linewidth}
\centering
\includegraphics[width=\linewidth]{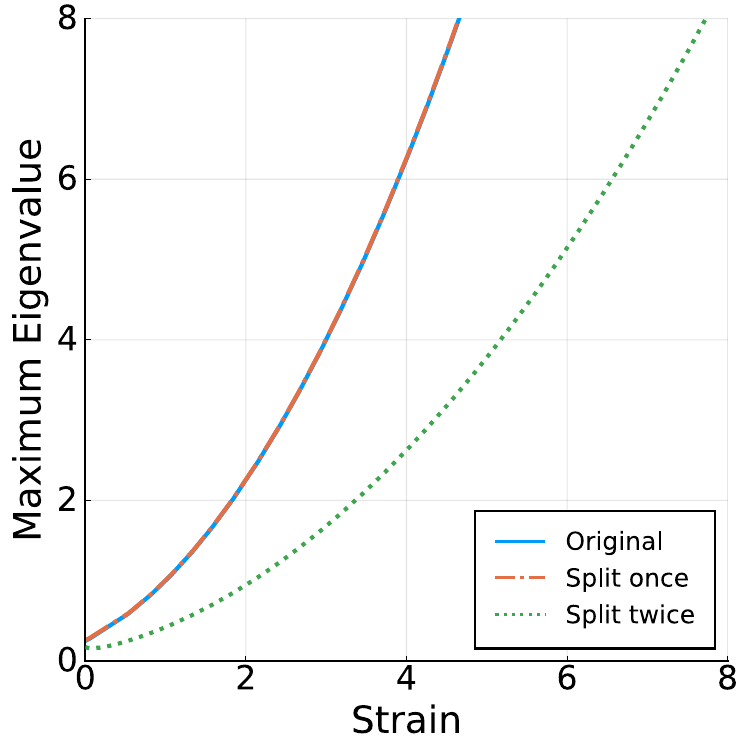}
\end{minipage}

(b) The square lattice.
\end{center}

\begin{flushleft}
\textsf{\bfseries\small Part of Figure~\ref{fig:2d_graphs}.}
\end{flushleft}

\end{figure}

\begin{figure}[htbp]
\addtocounter{figure}{-1}

\begin{center}
\begin{minipage}[b]{0.49\linewidth}
\centering
\includegraphics[width=\linewidth]{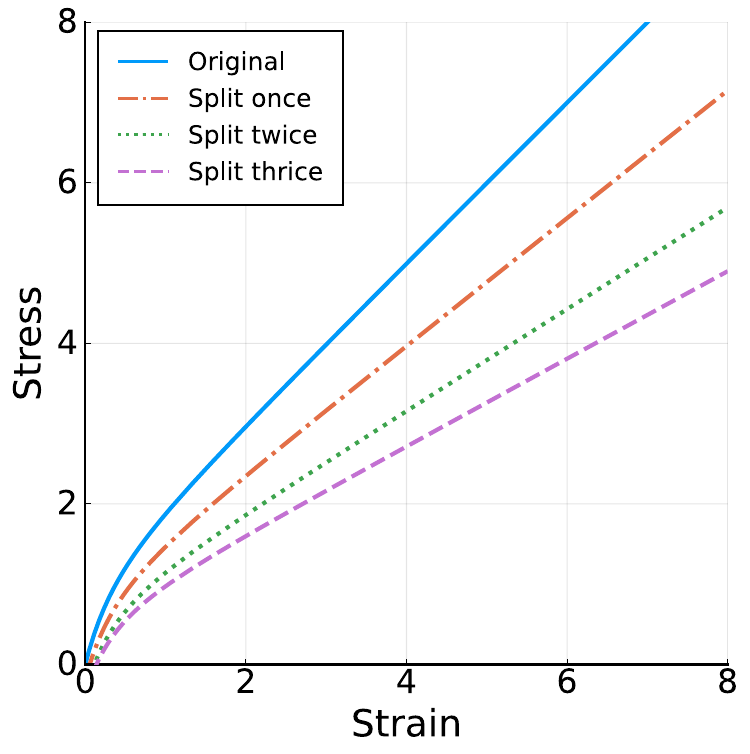}
\end{minipage}
\begin{minipage}[b]{0.01\linewidth}
\end{minipage}
\begin{minipage}[b]{0.49\linewidth}
\centering
\includegraphics[width=\linewidth]{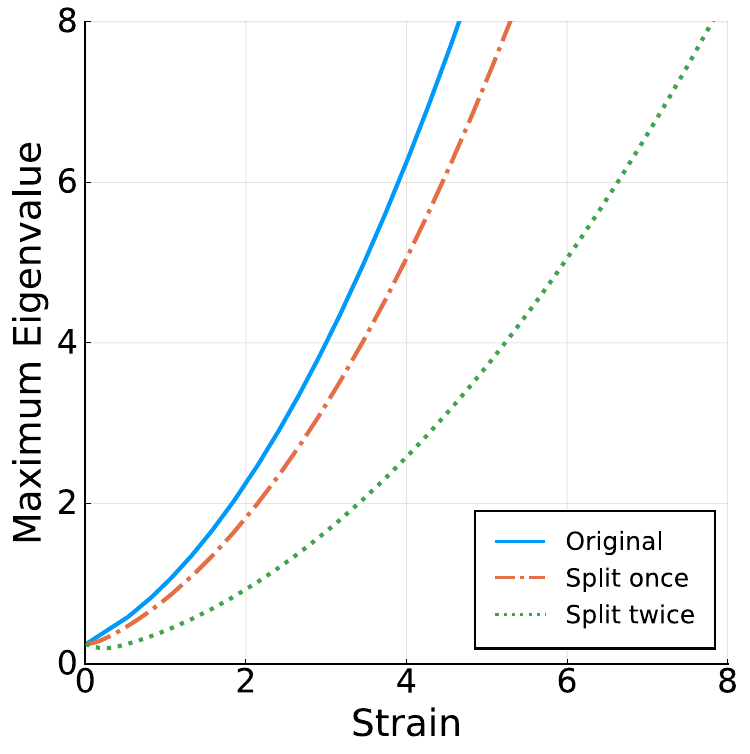}
\end{minipage}

(c) The hexagonal lattice.
\end{center}

\begin{center}
\begin{minipage}[b]{0.49\linewidth}
\centering
\includegraphics[width=\linewidth]{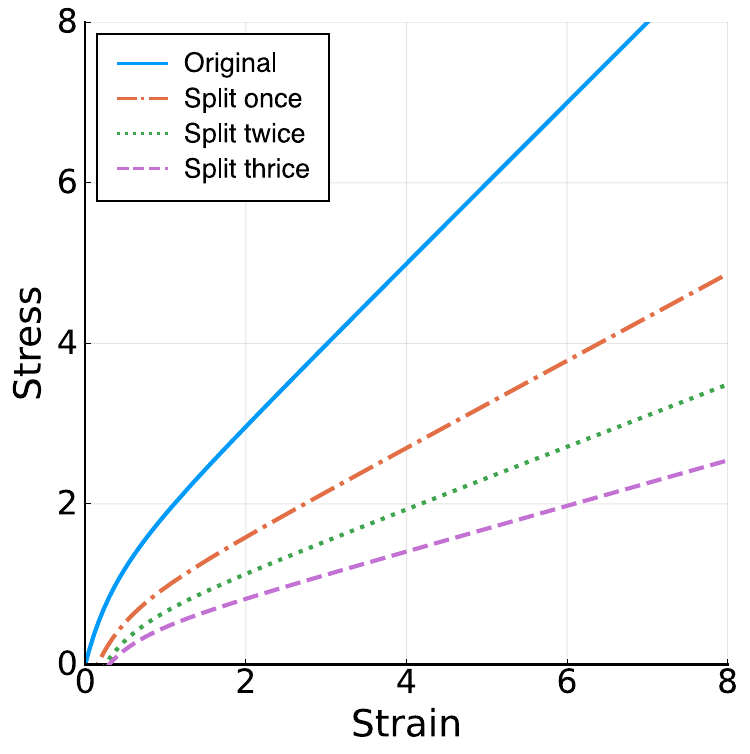}
\end{minipage}
\begin{minipage}[b]{0.01\linewidth}
\end{minipage}
\begin{minipage}[b]{0.49\linewidth}
\centering
\includegraphics[width=\linewidth]{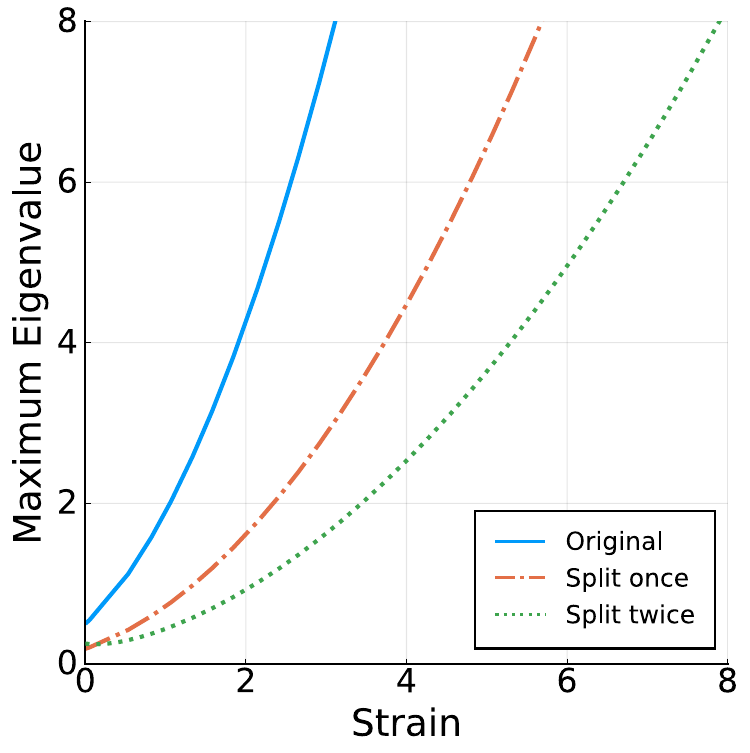}
\end{minipage}

(d) The kagome lattice.
\end{center}

\caption{Each shows the stress--strain curve (left side) and the maximum eigenvalue (right side) of each net shown in Figure~\ref{fig:2d_nets}. \label{fig:2d_graphs}}
\end{figure}

\begin{table}[htbp]
    \caption{Comparison of approximate average permanent strain for each net after split once, twice, and thrice from their initial state.}
    \label{tab:mean_permanent_strain}
    
    \centering
    \begin{tabular}{lcccc}
        \toprule
               & Triangular & Square   & Hexagonal & Kagome \\
        \midrule
        Once   & 0.066846   & 0.047207 & 0.054288  & 0.153974 \\
        Twice  & 0.126739   & 0.116041 & 0.111967  & 0.249558 \\
        Thrice & 0.178081   & 0.139145 & 0.137617  & 0.295855 \\
        \bottomrule
    \end{tabular}
\end{table}

\begin{figure}[tbp]
\begin{minipage}[b]{0.49\linewidth}
\centering
\includegraphics[width=\linewidth]{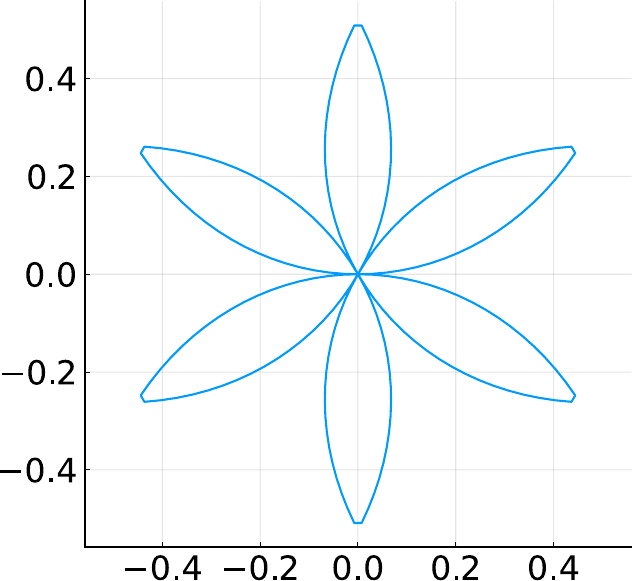}\\
(a)
\end{minipage}
\begin{minipage}[b]{0.01\linewidth}
\end{minipage}
\begin{minipage}[b]{0.49\linewidth}
\centering
\includegraphics[width=\linewidth]{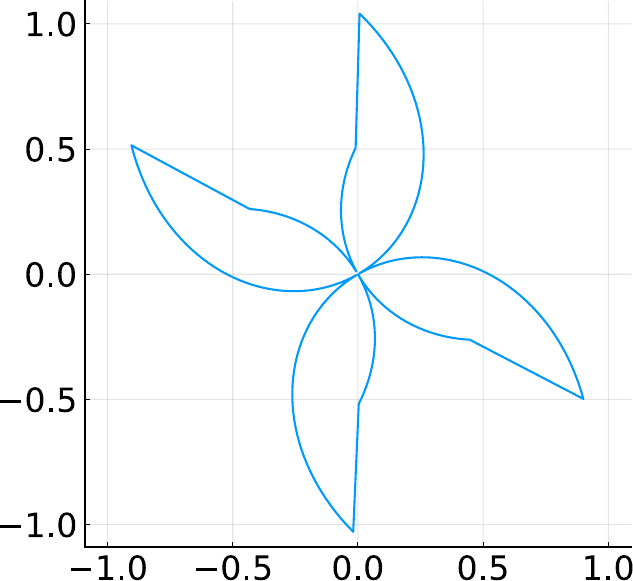}\\
(b)
\end{minipage}

\caption{Representation surfaces of (local) deviation angles of nets. (a) A representation surface of deviation angles of triangular, hexagonal, and kagome lattices illustrated in Figures~\ref{fig:2d_nets}(a), (c), and (d). (b) A representation surface of local deviation angles of the kagome lattice at a vertex. Note that the surface is actually discontinuous in directions where a line segment appears, e.g., the angle $\pi/2$.}
\label{fig:deviation_angle_2d_nets}
\end{figure}

\begin{table}[tbp]
    \caption{Comparison of permanent strains of extending along directions that give the maximal or minimal deviation angles.}
    \label{tab:permanent_strain_vs_deviation_angle}
    
    \centering
    \begin{tabular}{cccccc}
        \toprule
        \multicolumn{2}{c}{Triangular}  & \multicolumn{2}{c}{Hexagonal} & \multicolumn{2}{c}{Kagome} \\
        \small{Minimal} & \small{Maximal} & \small{Minimal} & \small{Maximal} & \small{Minimal} & \small{Maximal} \\
        \midrule
        0.080607 & 0.040676 & 0.060866 & 0.030931 & 0.090976 & 0.189163 \\
        \bottomrule
    \end{tabular}
\end{table}

Finally, we consider the deviation angles of these nets.
In particular, it is interesting to compare the triangular, hexagonal, and kagome nets because, by suitable rotation, they have the same deviation angle for all non-zero vectors (see Figure~\ref{fig:deviation_angle_2d_nets}).
They have the maximal deviation angle $\pi/6$ in six equally spaced directions and the minimal deviation angle $0$ in the other six equally spaced directions.
Table~\ref{tab:permanent_strain_vs_deviation_angle} shows the permanent strains extended in each direction that give the maximal or minimal deviation angles.
(To be precise, the extension directions are shifted by $0.1$ degrees from the angles to not split the edges perpendicular to the extension direction.)

For both triangular and hexagonal lattices, the permanent strain in the direction to give the maximal deviation angle is smaller than that in the direction to give the minimal deviation angle.
By Proposition~\ref{prop:deviation}, if a net contains only one vertex per period, the permanent strain is close to zero by extending in the direction to give the deviation angle close to $\mathrm{arccos}(1/\sqrt{N}) = \pi/4$ (when $N = 2$).
These two nets have four vertices per period, and the permanent strains of the lattices, however, are consistent with Proposition~\ref{prop:deviation}.
This is because the `local' deviation angle at each vertex is the same as the global one.
We define the \emph{local deviation angle} at a vertex as the deviation angle by using its adjacent edges instead of the edges in the net per period.
By Figure~\ref{fig:2d_nets}, the local deviation angle at each vertex of the triangular and hexagonal lattices is the same as the deviation angle of the nets for any direction.
Therefore, the global and local anisotropy coincide, and the nets have the same property as a net containing only one vertex per period.

In contrast, the kagome lattice has different properties from the above two lattices.
As shown in Table~\ref{tab:permanent_strain_vs_deviation_angle}, the permanent strain by extending in the direction to give the minimal deviation angle is smaller than that by extending in the direction to give the maximal one.
The local deviation angle at a vertex is different from the global one, as shown in Figure~\ref{fig:deviation_angle_2d_nets}.
By the figure, we can see that there is a direction that simultaneously gives the maximal and minimal angles in the local and global cases, respectively.
Hence, even if we extend a net in a direction that gives the minimal deviation angle, the angle between the direction and the edge obtained by the split will be large.
So, we obtain a small permanent strain.
This fact is responsible for the difference from the permanent strains of the other two lattices.

From the above, it is clear that deviation angles are effective in quantifying the anisotropy of a net, as seen in Section~\ref{sec:deviation}. 
Furthermore, local tension tensors help predict a permanent strain by extending a net.

\subsection{Another simulation of the kagome lattice}

As discussed in the previous subsection, the kagome lattice has a unique property among the $2$-dimensional nets.
In this subsection, we perform another simulation of the kagome lattice, in which we see that the permanent strain is zero when extended in a specific direction.

In this simulation, the permanent strain of the kagome lattice is \emph{zero} when extended in the direction of $\pi/6$ (see Figure~\ref{fig:kagome_prop}(a)).
This simulation uses the same model as the previous subsection but assumes that the vertices of degree $3$ do not split.
As shown in Figure~\ref{fig:kagome_prop}(a), the first vertex splitting makes a big difference in properties as in the previous subsection. 
However, the second split brings the permanent strain close to zero.
After the third split, the permanent strain is zero.
Since the degree of each vertex is $3$, the net does not split any further, and the permanent strain of the kagome lattice becomes zero (see Table~\ref{tab:kagome_permanent_strain}).
Note that the average permanent strain also behaves similarly to that of extension in the specific direction, as shown in Figure~\ref{fig:kagome_prop}(b) and Table~\ref{tab:kagome_permanent_strain}.

\begin{figure}[htbp]
\begin{center}
\begin{minipage}{0.49\linewidth}
\centering
\includegraphics[width=\linewidth]{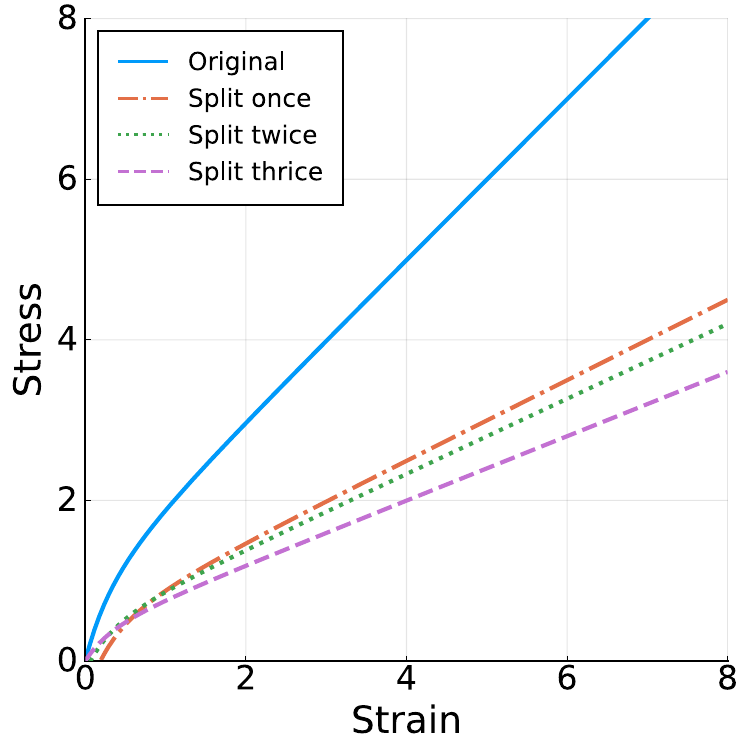}
\end{minipage}
\begin{minipage}{0.01\linewidth}
\end{minipage}
\begin{minipage}{0.49\linewidth}
\centering
\includegraphics[width=\linewidth]{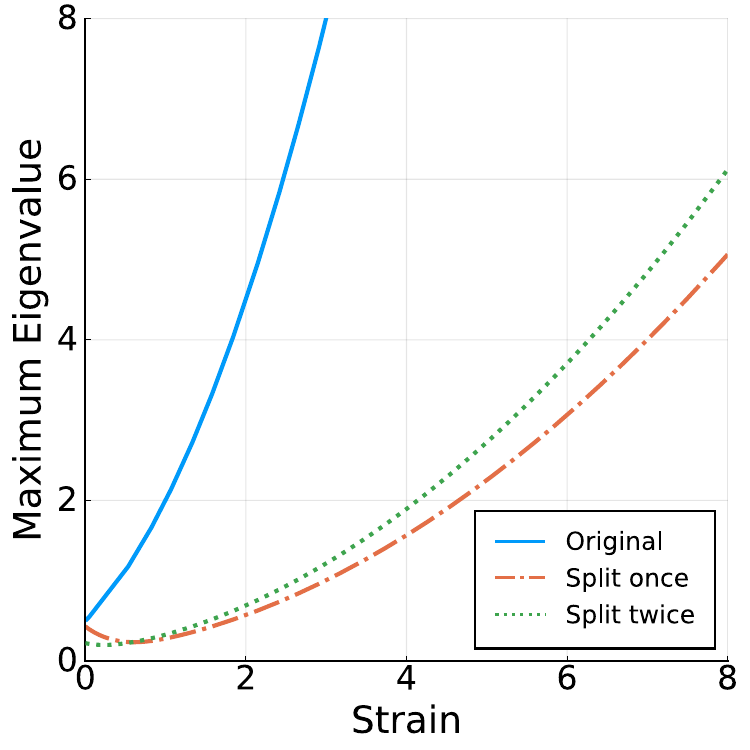}
\end{minipage}

(a) Another simulation of the kagome lattice when extended in the direction of $\pi/6$
\end{center}

\begin{center}
\begin{minipage}{0.49\linewidth}
\centering
\includegraphics[width=\linewidth]{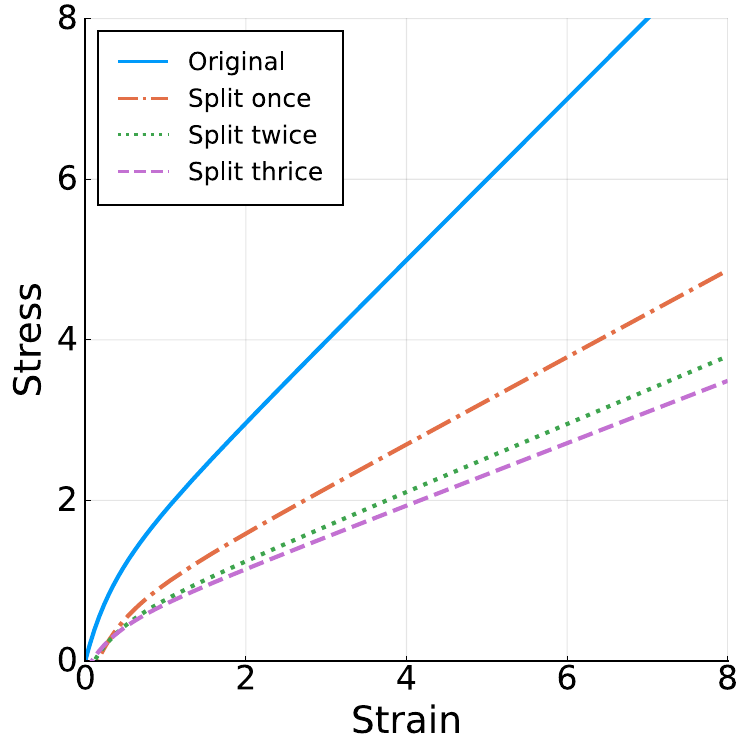}
\end{minipage}
\begin{minipage}{0.01\linewidth}
\end{minipage}
\begin{minipage}{0.49\linewidth}
\centering
\includegraphics[width=\linewidth]{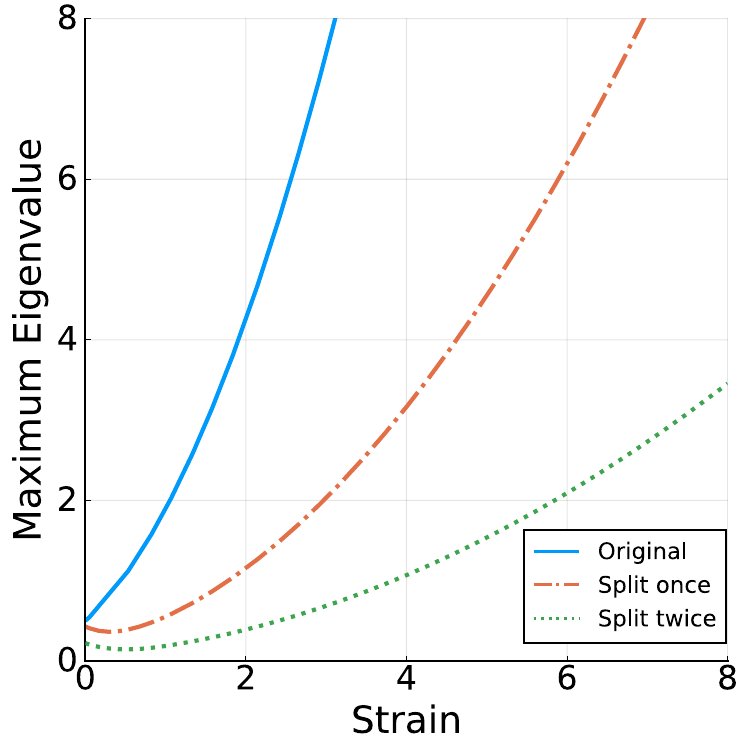}
\end{minipage}

(b) Another simulation of the kagome lattice. They show the average of the curves obtained by stretching in twelve equally spaced directions based on $0$ degrees ($\pi/24$ to be exact).

\end{center}
\caption{Stress--strain curves and maximum eigenvalues obtained from another simulation of the kagome lattice.\label{fig:kagome_prop}}
\end{figure}

\begin{table}[htbp]
    \caption{Approximate average permanent strain and the permanent strain when stretched in the $\pi/6$ direction, obtained from another simulation of the kagome lattice.}
    \label{tab:kagome_permanent_strain}

    \centering
    \begin{tabular}{lcc}
        \toprule
               & Average & $\pi/6$ \\
        \midrule
        Once   & 0.153974   & 0.189207 \\
        Twice  & 0.111967   & 0.064844 \\
        Thrice & 0.081703   & 0 \\
        \bottomrule
    \end{tabular}
\end{table}

Figure~\ref{fig:splitting_kagome} shows the splittings of the kagome lattice in this simulation.
The first splitting occurs at $p_1$, and the second and third splittings occur at $p_2$ and $p_3$.
Note that, as can be seen from Figure~\ref{fig:kagome_prop}(a), the maximum eigenvalue of the net split twice reaches the threshold faster than that of the net split once.
So, the splits at $p_2$ and $p_3$ occur simultaneously.
The result net is a standard realization.
Therefore, the permanent strain of the kagome lattice is zero.

\begin{figure}[htbp]
\begin{minipage}{0.32\linewidth}
\centering
\includegraphics[width=\linewidth]{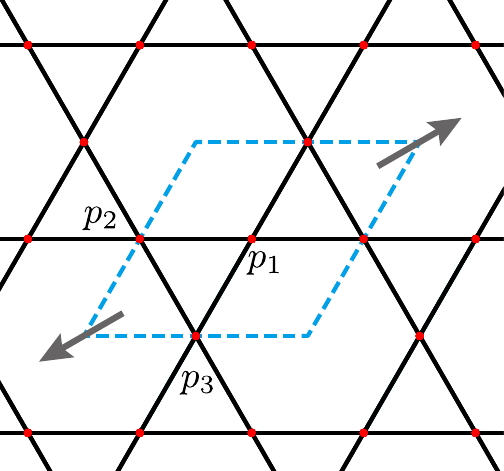}\\
(a)
\end{minipage}
\begin{minipage}{0.32\linewidth}
\centering
\includegraphics[width=\linewidth]{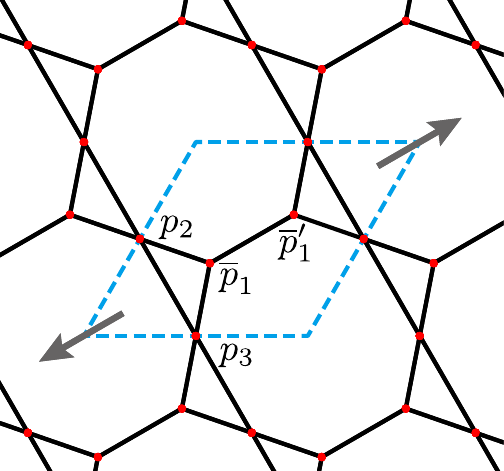}\\
(b)
\end{minipage}
\begin{minipage}{0.32\linewidth}
\centering
\includegraphics[width=\linewidth]{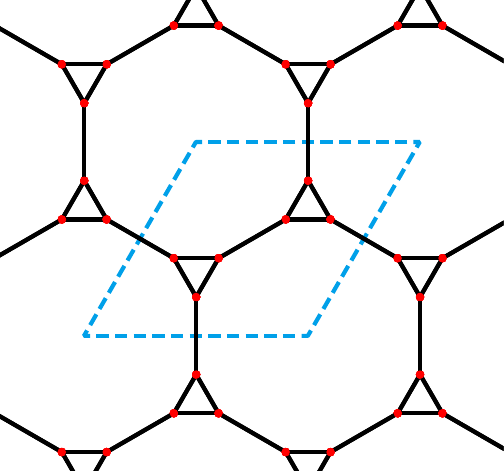}\\
(c)
\end{minipage}
\caption{(a) The kagome lattice.
The arrow indicates the direction of the extension.
(b) The net obtained from the kagome lattice by the splitting of the vertex $p_1$.
(c) The result of the extension of the kagome lattice.\label{fig:splitting_kagome}}
\end{figure}

The result for the kagome lattice in this simulation contrasts with Theorem~\ref{thm:average}.
It suggests that the general version of that does not hold.
However, as discussed in the previous subsection, the simulation for the kagome lattice based on a different model is not a counterexample to the general version of the theorem.
This discussion tells us that the choice of the models is essential.

\section{Biaxial extension}
\label{sec:biaxial}

In this section, we discuss the biaxial extension of 3-dimensional nets.

\subsection{General biaxial extension}

In this subsection, we give a model of general biaxial extension. 
Consider a 3-dimensional harmonic net $(X, \Phi)$, 
which is not necessarily standard. 
We write the tension tensor $(\tau_{ij})_{1 \leq i,j \leq 3} = \calT (X, \Phi)$. 
For $\lambda_{1}, \lambda_{2} > 0$, 
the diagonal matrix 
$A(\lambda_{1}, \lambda_{2}) = \diag(\lambda_{1}, \lambda_{2}, \lambda_{1}^{-1} \lambda_{2}^{-1}) \in SL(3, \bbR)$ 
induces a biaxial extension of the net $(X, \Phi)$. 
The volume $\calV$ per period is constant under deformation. 
Consider the tension tensor 
$\calT (\lambda_{1}, \lambda_{2}) = \calT (A(\lambda_{1}, \lambda_{2}) (X, \Phi)) 
= A(\lambda_{1}, \lambda_{2}) \calT (X, \Phi) A(\lambda_{1}, \lambda_{2})$. 
A stress tensor in the volume-preserving setting is given by 
$(\sigma_{ij})_{1 \leq i,j \leq 3} = (2/\calV) \calT (\lambda_{1}, \lambda_{2}) - cI$ for some $c$. 
Since there is no external force along the direction of the third coordinate, 
we suppose that $\sigma_{33} = 0$. 
Then we obtain the \emph{engineering stresses} under this biaxial extension for $i =1,2$ by 
\[
\sigma^{\mathrm{eng}}_{i} = \sigma_{ii} / \lambda_{i} 
= (2/\calV)(\tau_{ii}\lambda_{i} - \tau_{33}\lambda_{1}^{-2} \lambda_{2}^{-2}\lambda_{i}^{-1}). 
\]
Let $\calE (\lambda_{1}, \lambda_{2}) = \calE (A(\lambda_{1}, \lambda_{2}) (X, \Phi))$. 
Since $\calE (\lambda_{1}, \lambda_{2}) = \tr \calT (\lambda_{1}, \lambda_{2}) 
= \tau_{11}\lambda_{1}^{2} + \tau_{22}\lambda_{2}^{2} + \tau_{33}\lambda_{1}^{-2} \lambda_{2}^{-2}$,
we have the following formula. 

\begin{prop}
$\sigma^{\mathrm{eng}}_{i} = \dfrac{1}{\calV} \dfrac{\partial \calE}{\partial \lambda_{i}}$. 
\end{prop}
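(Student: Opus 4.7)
The plan is to verify the identity by direct differentiation, since both sides of the claimed equality are given by explicit formulas in the preceding paragraph. First I would recall the expression
\[
\calE(\lambda_{1}, \lambda_{2}) = \tau_{11}\lambda_{1}^{2} + \tau_{22}\lambda_{2}^{2} + \tau_{33}\lambda_{1}^{-2}\lambda_{2}^{-2},
\]
which follows from $\calE = \tr \calT$ applied to the deformed tension tensor $A(\lambda_{1}, \lambda_{2}) \calT(X, \Phi) A(\lambda_{1}, \lambda_{2})$, using that $A(\lambda_{1}, \lambda_{2})$ is diagonal so only the diagonal entries $\tau_{11}, \tau_{22}, \tau_{33}$ contribute to the trace.

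Next I would compute $\partial \calE / \partial \lambda_{i}$ for $i = 1, 2$. For $i = 1$ this yields
\[
\frac{\partial \calE}{\partial \lambda_{1}} = 2\tau_{11}\lambda_{1} - 2\tau_{33}\lambda_{1}^{-3}\lambda_{2}^{-2} = 2\bigl(\tau_{11}\lambda_{1} - \tau_{33}\lambda_{1}^{-2}\lambda_{2}^{-2}\lambda_{1}^{-1}\bigr),
\]
and symmetrically for $i = 2$. Comparing with the formula
\[
\sigma^{\mathrm{eng}}_{i} = (2/\calV)\bigl(\tau_{ii}\lambda_{i} - \tau_{33}\lambda_{1}^{-2}\lambda_{2}^{-2}\lambda_{i}^{-1}\bigr)
\]
stated just above the proposition gives the identity immediately after multiplying by $1/\calV$.

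Since the proof is a one-line calculation once the two formulas are matched, there is no real obstacle. The only subtlety worth commenting on would be why the mixed partials and off-diagonal entries $\tau_{ij}$ with $i \neq j$ play no role: this is because $A(\lambda_{1}, \lambda_{2})$ is diagonal, so the conjugation preserves the splitting into diagonal and off-diagonal parts, and the trace (hence the energy) depends only on the diagonal entries of $\calT(X, \Phi)$. I would include this observation for clarity but nothing more is required.
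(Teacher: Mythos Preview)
Your proposal is correct and matches the paper's approach exactly: the paper simply records the explicit formula $\calE(\lambda_{1},\lambda_{2}) = \tau_{11}\lambda_{1}^{2} + \tau_{22}\lambda_{2}^{2} + \tau_{33}\lambda_{1}^{-2}\lambda_{2}^{-2}$ and then states the proposition without further argument, leaving the one-line differentiation to the reader. Your additional remark about why the off-diagonal entries $\tau_{ij}$ do not enter is a welcome clarification that the paper omits.
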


\subsection{Uniform biaxial extension}

In this subsection, we show that the uniform biaxial extension can be modeled 
by the inverse operation of the uniaxial extension. 
For $\lambda > 0$, 
let $A(\lambda) = \diag(\lambda, \lambda, \lambda^{-2}) \in SL(3, \bbR)$. 
Then the energy after the extension by $A(\lambda)$ is 
$\calE (\lambda) = (\tau_{11}+\tau_{22})\lambda^{2} + \tau_{33}\lambda^{-4}$. 
The average of the tension tensors in the $(x_{1}, x_{2})$-plane 
is given by $\diag((\tau_{11}+\tau_{22})/2, (\tau_{11}+\tau_{22})/2, \tau_{33})$. 
Then we obtain the engineering stress by 
\[
\sigma_{\mathrm{eng}} 
= \dfrac{2}{\lambda \calV} \left( \dfrac{\tau_{11}+\tau_{22}}{2}\lambda^{2} - \tau_{33}\lambda^{-4} \right) 
= \dfrac{1}{2\calV} \dfrac{d \calE}{d \lambda}. 
\]
The permanent strain $\epsilon_{0}$ is defined by $\sigma_{\mathrm{eng}} (1+\epsilon_{0}) = 0$. 
Hence 
\[
\epsilon_{0} = (2\tau_{33}/(\tau_{11}+\tau_{22}))^{1/6} -1. 
\]

Let us compare it to the uniaxial extension in the direction of $x_{3}$. 
In this case, 
the transformation matrix, the engineering stress, and the permanent strain 
are given as follows: 
\begin{align*}
\tilde{A}(\lambda) & = \diag(\lambda^{-1/2}, \lambda^{-1/2}, \lambda) \in SL(3, \bbR), \\
\tilde{\sigma}_{\mathrm{eng}} & = (2/\calV) (\tau_{33}\lambda - (\tau_{11}+\tau_{22})\lambda^{-2}/2), \\
\tilde{\epsilon}_{0} & = (2\tau_{33}/(\tau_{11}+\tau_{22}))^{-1/3} -1. 
\end{align*}
Then 
$A(\lambda) = \tilde{A}(\lambda^{-2})$, 
$\sigma_{\mathrm{eng}}(\lambda) = -\lambda^{-3} \tilde{\sigma}_{\mathrm{eng}}(\lambda^{-2})$, 
and $1 + \epsilon_{0} = (1 + \tilde{\epsilon}_{0})^{-1/2}$. 
Thus we may regard the uniform biaxial extension as the inverse operation of uniaxial extension.

\subsection{The hyper-kagome lattice}

The hyper-kagome lattice is known as a subnet of the pyrochlore lattice (Figure~\ref{fig:hyper-kagome}(b)).
See, for example, \cite{Chen, Awaga2017} for a discussion on the hyper-kagome lattice.
Note that the hyper-kagome lattice is not a harmonic realization.
So this is not a standard realization either.
We consider a standard net with the same combinatorial structure as the hyper-kagome lattice.
The net is a dual net of the $K_4$ lattice.
In this paper, we will indicate this standard net when we refer to the hyper-kagome lattice.

Like the kagome lattice, the hyper-kagome lattice has zero permanent strain due to extension in a specific (biaxial) direction.
For the primitive lattice shown in Figure~\ref{fig:hyper-kagome}(a), we consider a uniform biaxial extension in a plane parallel to one face of the cube.
First, splittings occur at four vertices that are dual to the edges of the $K_4$ lattice parallel to the plane.
We can obtain the split net by dividing the vertices by the plane perpendicular to the corresponding edges.
Furthermore, splittings occur at the remaining eight vertices if we continue stretching.
These splittings occur in the same way as the previous four vertices.
By splitting, we have a net consisting of shrunk triangles of the hyper-kagome's ones and edges along the $K_4$ lattice connecting them.
The net is called the \emph{srs-a} lattice~\cite{RCSR}, a standard realization for the same translationally symmetry of the hyper-kagome lattice.
Therefore, the permanent strains of these nets are zero (see Table~\ref{tab:permanent_strain_hyper-kagome}).

As we have seen, the hyper-kagome lattice has zero permanent strain by a specific uniform biaxial extension. Unfortunately, however, we have not yet found a direction in which uniaxial extension results in zero permanent strain. Hence, more generally, we have the following natural question.
\begin{ques}
Is there a pair of a net and a direction where the permanent strain becomes zero by an uniaxial extension?
\end{ques}
We hope such a pair exists.

\begin{figure}[htbp]
\begin{minipage}{0.48\hsize}
\centering
\includegraphics[width=0.9\textwidth]{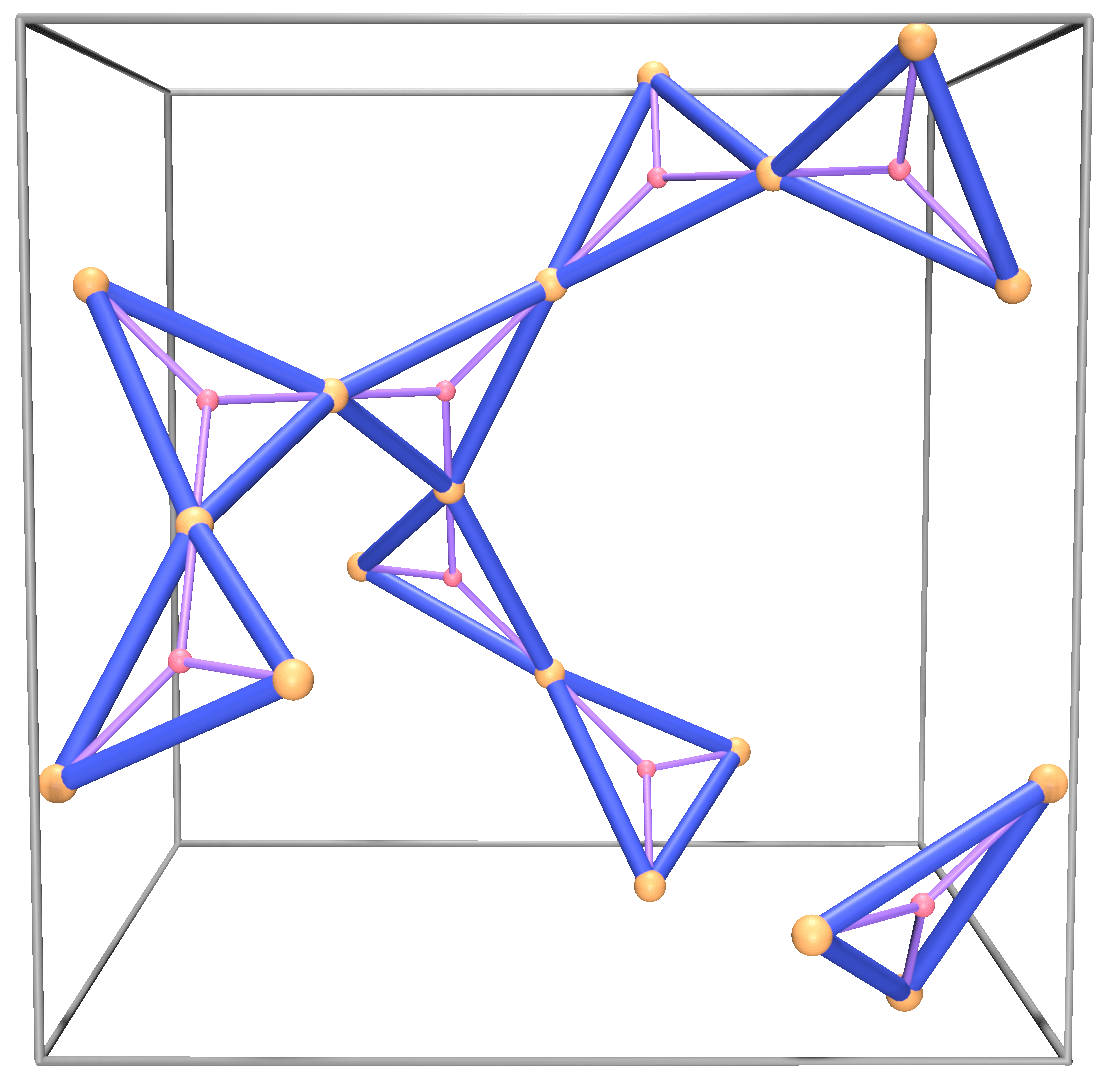}\\
(a)
\end{minipage}
\hfill
\begin{minipage}{0.48\hsize}
\centering
\includegraphics[width=0.9\textwidth]{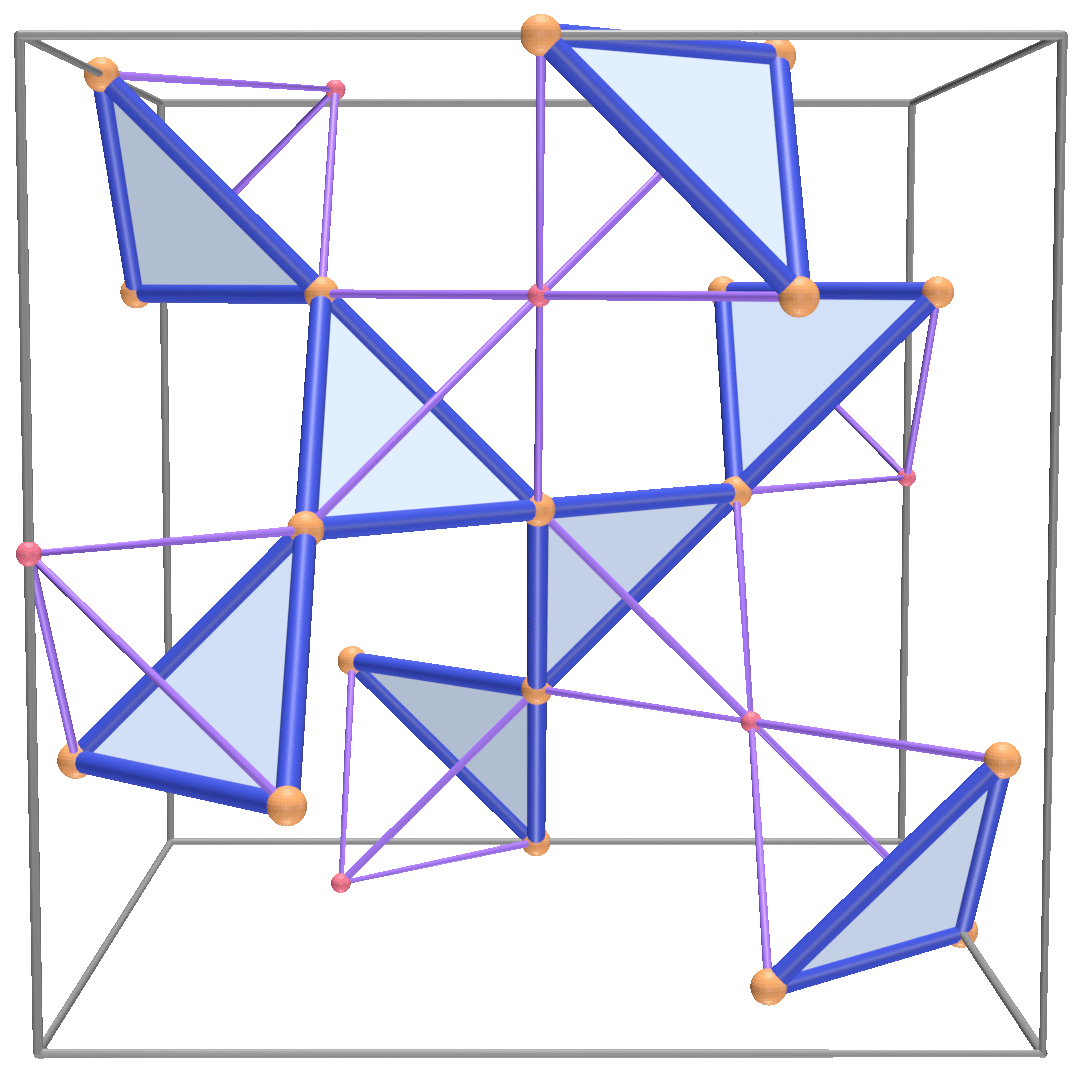}\\
(b)
\end{minipage}

\caption{(a) The modified hyper-kagome lattice with the $K_4$ lattice~\cite{Sunada12}. (b) The hyper-kagome lattice with the pyrochlore lattice (see~\cite{Chen}).}
\label{fig:hyper-kagome}
\end{figure}

\begin{figure}[htbp]
\centering
\includegraphics[width=0.45\linewidth]{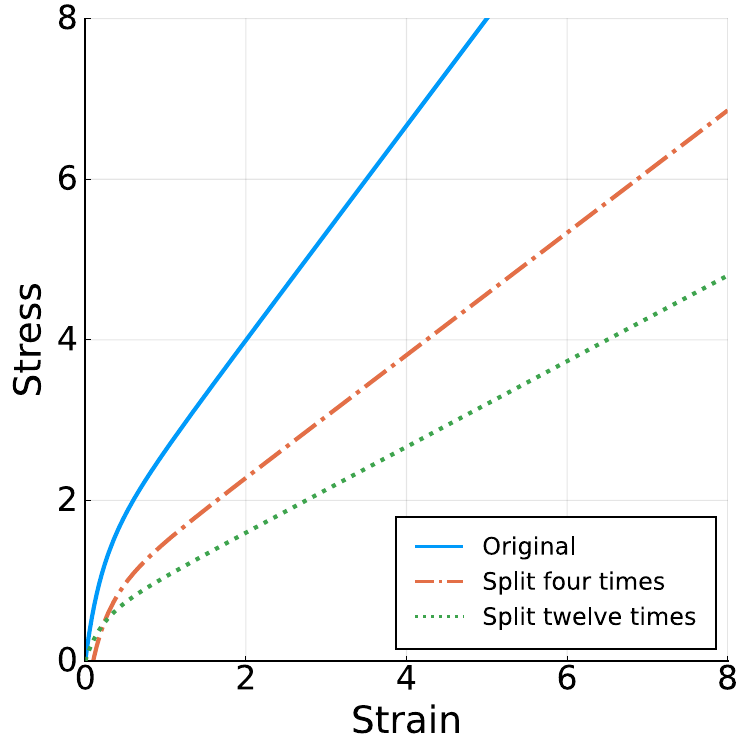}\\
\caption{The stress--strain curve of the hyper-kagome of a uniform biaxial extension orthogonal to $z$-axis. \label{fig:hyperkagome-prop}}
\end{figure}

\begin{table}[htbp]
    \caption{The permanent strain of the hyper-kagome lattice with uniform biaxial extension orthogonal to $z$-axis.}
    \label{tab:permanent_strain_hyper-kagome}
    
    \centering
    \begin{tabular}{lc}
        \toprule
               & Hyper-kagome with uniform biaxial extension \\
        \midrule
        Four times  & 0.097757   \\
        Twelve times  & 0   \\
        \bottomrule
    \end{tabular}
\end{table}

\ifRSPA
\ack{
\else
\section*{Acknowledgements}
\fi
The authors would like to express their gratitude to the anonymous referees for their precious comments.
This research is supported by JST CREST Grant Number JPMJCR17J4 and JSPS KAKENHI Grant Numbers JP21H00978, JP23K17652, and JP23K20791. 
The first author is also supported by the World Premier International Research Center Initiative (WPI), MEXT, Japan, and JSPS KAKENHI Grant Number JP19K14530.
\ifRSPA
}
\fi

\ifRSPA
\bibliographystyle{RS}
\else
\bibliographystyle{plain}
\fi
\bibliography{crest}

\end{document}